%% Latest update - 2021/11/18 12:46 P.M.
%% Latest update - 2021/11/19 12:36 A.M.

%\documentclass[conference]{IEEEtran}
\documentclass[12pt, draftclsnofoot, onecolumn]{IEEEtran} % regular single-column
\usepackage{amsmath,amssymb}
\usepackage{subfigure}
\usepackage{graphicx,graphics,color,psfrag}
\usepackage{cite,balance}
\usepackage{algorithm}
\usepackage{accents}
\usepackage{amsthm}
\usepackage{bm}
\usepackage{url}
\usepackage{algorithmic}
\usepackage[english]{babel}
\usepackage{multirow}
\usepackage{enumerate}
\usepackage{cases}
\usepackage{stfloats}
\usepackage{dsfont}
\usepackage{color,soul}
\usepackage{amsfonts}
\usepackage{tcolorbox}

\usepackage{lipsum}
\usepackage{mathtools}
\usepackage{cuted}

\usepackage{cite,graphicx,amsmath,amssymb}
\usepackage{fancyhdr}
\usepackage{hhline}
\usepackage{graphicx,graphics}
\usepackage{array,color}
\usepackage{amsmath}
\usepackage{amsthm}
\usepackage[flushleft]{threeparttable}
\IEEEoverridecommandlockouts

\newtheorem{proposition}{Proposition}
\newtheorem{remark}{Remark}

\newtheorem{lemma}{Lemma}

%\hyphenation{op-tical net-works semi-conduc-tor}

\include{header}
%\includecomment{comment} % figure on/off

\setlength{\topskip}{-5pt}
%\addtolength{\belowcaptionskip}{-6mm}
\setlength{\abovecaptionskip}{-0.1pt}
\setlength{\belowcaptionskip}{-0.1pt}

%\renewcommand{\baselinestretch}{1.4}

%\addtolength{\abovecaptionskip}{-5mm}
\addtolength{\textfloatsep}{-2mm}
%\setlength{\topskip}{-10pt}
%%\setlength{\parskip}{0pt}
%\addtolength{\belowcaptionskip}{-2mm}

\newcommand{\mv}[1]{\mbox{\boldmath{$ #1 $}}}

\makeatletter
\def\endthebibliography{%
	\def\@noitemerr{\@latex@warning{Empty `thebibliography' environment}}%
	\endlist
}
\makeatother

\begin{document}
	
	\title{MIMO Integrated Sensing and Communication: CRB-Rate Tradeoff}
	%\IEEEspecialpapernotice{(Invited Paper)}
	\author{Haocheng Hua, Tony Xiao Han, and Jie Xu \\
		\thanks{Part of this paper will be presented in IEEE Global Communications Conference (GLOBECOM), Rio de Janeiro, Brazil, December 4-8, 2022 \cite{hua2022mimo}.} 	
		\thanks{H. Hua and J. Xu are with the School of Science and Engineering (SSE) and the Future Network of Intelligence Institute (FNii), The Chinese University of Hong Kong (Shenzhen), Shenzhen, China (e-mail: haochenghua@link.cuhk.edu.cn, xujie@cuhk.edu.cn).  J.~Xu is the corresponding author.}
		\thanks{T. X. Han is with the 2012 lab, Huawei, Shenzhen 518129, China (e-mail: tony.hanxiao@huawei.com). }
	}

	%\vspace{-0.5cm}
	\vspace{-1cm}
	\markboth{}{}
	\maketitle
	
	\setlength\abovedisplayskip{2pt}
	\setlength\belowdisplayskip{2pt}

	%	\title{MIMO Integrated Sensing and Communication with Extended Target: CRB-Rate Tradeoff \\ % globecom version
	%		%		{\footnotesize \textsuperscript{*}Note: Sub-titles are not captured in Xplore and
	%		%			should not be used}
	%		%		\thanks{Identify applicable funding agency here. If none, delete this.}
	%	}
	%	\author{\IEEEauthorblockN{Haocheng~Hua\IEEEauthorrefmark{1}, Xianxin~Song\IEEEauthorrefmark{1}, Yuan Fang\IEEEauthorrefmark{1},
	%			Tony Xiao~Han\IEEEauthorrefmark{2}, and
	%			Jie~Xu\IEEEauthorrefmark{1}
	%		}
	%		\IEEEauthorblockA{\IEEEauthorrefmark{1}School of Science and Engineering and Future Network of Intelligence Institute,\\ The Chinese University of Hong Kong (Shenzhen), Shenzhen, China}
	%		\IEEEauthorblockA{\IEEEauthorrefmark{2}The 2012 Lab, Huawei, Shenzhen, China \\
	%			Email: haochenghua@link.cuhk.edu.cn,~xianxinsong@link.cuhk.edu.cn,~fangyuan@cuhk.edu.cn,\\~tony.hanxiao@huawei.com,~xujie@cuhk.edu.cn}
	%		\thanks{Jie Xu is the corresponding author.}
	%	}
	%	
	%	\maketitle
	
	\begin{abstract}
		This paper studies a multiple-input multiple-output (MIMO) integrated sensing and communication (ISAC) system, in which a multi-antenna base station (BS) sends unified wireless signals to estimate one sensing target and communicate with a multi-antenna communication user (CU) simultaneously. We consider two sensing target models, namely the point and extended targets, respectively. For the point target case, the BS estimates the target angle and the reflection coefficient as unknown parameters, and we adopt the Cram\'er-Rao bound (CRB) for angle estimation as the sensing performance metric. For the extended target case, the BS estimates the complete target response matrix, and we consider three different sensing performance metrics including the trace, the maximum eigenvalue, and the determinant of the CRB matrix for target response matrix estimation. For each of the four scenarios with different CRB measures, we investigate the fundamental tradeoff between the estimation CRB for sensing and the data rate for communication, by characterizing the Pareto boundary of the achievable CRB-rate (C-R) region. In particular, we formulate a new MIMO rate maximization problem for each scenario, by optimizing the transmit covariance matrix at the BS, subject to a different form of maximum CRB constraint and its maximum transmit power constraint. For these problems, we obtain the optimal transmit covariance solutions in semi-closed forms by using advanced convex optimization techniques.
		For the point target case, the optimal solution is obtained by diagonalizing a \emph{composite channel matrix} via singular value decomposition (SVD) together with water-filling-like power allocation over these decomposed subchannels. 
		For the three scenarios in the extended target case, the optimal solutions are obtained by diagonalizing the \emph{communication channel} via SVD, together with proper power allocation over two orthogonal sets of subchannels, one for
		both communication and sensing, and the other for dedicated sensing only. Finally, numerical results show the C-R region achieved by the optimal design in each scenario, which significantly outperforms that by other benchmark schemes such as time switching.
		%		 show the C-R region
		%		 and the achievable rate with varying transmit power under given CRB constraint
		%		 achieved by the optimal design, as compared to other benchmarking schemes such as the time sharing and the power split based designs. The corresponding optimal power allocation are also shown in comparison with water-filling and equal power allocation. Finally, we examine the achievable rate with varying Signal-to-noise ratio (SNR) under fixed CRB constraint, which shows that the optimal design can be approximated by the two power split based designs under different scenarios.
	\end{abstract}
	
	\begin{IEEEkeywords}
		\noindent Integrated sensing and communication (ISAC), multiple-input multiple-output (MIMO), Cram\'er-Rao bound (CRB), capacity, optimization.
	\end{IEEEkeywords}
	
	\section{Introduction}
	\label{sec:intro}
	Recently, integrated sensing and communication (ISAC) has been recognized as a candidate key technology towards sixth-generation (6G) cellular networks to enable environment-aware intelligent applications such as intelligent transportation and smart home (see, e.g., \cite{liu2022integrated,zhang2021enabling} and the references therein), in which radio signals and cellular infrastructures are reused for both sensing and communication. Motivated by the success of multiple-input multiple-output (MIMO) techniques in wireless communications \cite{telatar1999capacity,Tse2005book,heath2018foundations} and radar sensing \cite{li2007mimo,stoica2007probing,haimovich2007mimo}, MIMO ISAC has recently attracted growing research interests, in which multiple antennas can be exploited to provide the spatial multiplexing and diversity gains to increase the communication data rate and reliability \cite{telatar1999capacity,Tse2005book,heath2018foundations}, as well as the waveform and spatial diversity gains to enhance the sensing accuracy and resolution \cite{li2007mimo,stoica2007probing,haimovich2007mimo}. 
	
	In MIMO ISAC systems, the scarce spectrum and power resources are shared between the two functions of sensing and communication. As a result, there exists a fundamental tradeoff in designing the transmit strategies and resource allocation to balance the sensing and communication performances. The transmit strategies design, however, is a challenging task, due to the following two reasons in general. First, different from communication that commonly adopts the data rate as the performance metric, the sensing performance metric may vary considerably depending on specific sensing tasks (e.g., target detection or estimation) and specific parameters to be estimated (for estimation tasks of our interest). Second, the transmit design principle for MIMO communication significantly differs from that for MIMO radar, thus making it difficult to optimize the transmit strategies for balancing the sensing and communication objectives.
	
	In the literature, there have been various prior works investigating the multi-antenna ISAC by employing different performance metrics for sensing design.
	%	In this case, it is generally difficult to determine the exact mean squared error (MSE) of target estimation, and as a result, some alternative design metrics have been employed in the multi-antenna ISAC literature to facilitate the sensing design. 
	On one hand, prior works \cite{liu2018toward,liu2018mu, Eldar2020joint,xu2021rate,hua2021optimal, song2022joint, yin2022rate, wang2022noma} adopted the transmit beampattern as the sensing performance metric, based on which the transmit signal beams are focused towards the target directions to facilitate the estimation or detection.
	%	Therefore, the radar sensing and communication need to be designed with different objectives based on distinct principles. 
	%	Therefore, for MIMO ISAC systems, it is essential to design the transmit strategies to properly balance the performance tradeoff between radar sensing and communication.  In the literature, there have been prior works \cite{liu2018toward,liu2018mu, Eldar2020joint,xu2021rate,hua2021optimal, song2022joint, liu2021cramer, yin2022rate, wang2022noma} studying the multi-antenna ISAC systems with multiple communication users (CU) and one or more sensing targets, by exploiting the waveform/spatial diversity gains for sensing. 
	To be specific, the authors in \cite{liu2018toward} and \cite{liu2018mu} studied a multi-user multiple-input-single-output (MISO) ISAC system, in which the information signal beams are reused for both sensing and communication, and the transmit information beamformers are optimized to match a sensing-oriented beampattern and guarantee the communication performance simultaneously. Furthermore, to fully exploit the degrees of freedom (DoFs) provided by MIMO radar, \cite{Eldar2020joint, hua2021optimal, song2022joint} proposed to transmit the dedicated radar sensing signal beams in addition to the information beams, in which the transmit information and sensing beamformers are jointly  optimized. Moreover, advanced multiple access techniques, such as rate-splitting multiple access (RSMA) \cite{xu2021rate,yin2022rate} and non-orthogonal multiple access (NOMA) \cite{wang2022noma}, were further exploited to enhance the communication and sensing performance, in terms of data rate and sensing beampattern, respectively. 
	On the other hand, the Cram\'er-Rao bound (CRB) is another widely adopted sensing performance measure for estimation tasks (e.g., \cite{liu2021cramer,yin2022rate,song2022intelligent,li2008waveform}), which characterizes the variance lower bound by any unbiased estimators \cite{levy2008principles}. For instance, the work \cite{liu2021cramer} studied the multi-antenna ISAC system with multiple communication users (CUs) and one point or one extended target, in which the transmit information beamformers and dedicated sensing beamformers were jointly optimized to minimize the target estimation CRB, subject to a set of individual signal-to-noise-plus-interference ratio (SINR) constraints at CUs. As compared to the sensing beampattern, the estimation CRB can explicitly characterize the fundamental estimation error limit for target estimation tasks, which is thus considered in this paper. 
	%	Most of these prior works \cite{liu2018toward,liu2018mu, Eldar2020joint,xu2021rate,hua2021optimal, yin2022rate} mainly adopted indirect performance metrics for sensing such as transmit beampattern and considered low-complexity but suboptimal communication designs such as linear transmit beamforming and RSMA under their respective setups. 
	%	However, to understand the fundamental performance tradeoffs of ISAC, it is of great importance to characterize the sensing and communication performance limits 
	
	This paper investigates the MIMO ISAC system for target estimation. We focus on characterizing the fundamental tradeoff limits between sensing and communication performances from the estimation theory and the information theory perspectives, in which the estimation CRB and the data rate are employed as the sensing and communication performance measures, respectively. This problem, however, has not been well addressed in the literature yet, even for the basic point-to-point MIMO ISAC system. To our best knowledge, only one recent work \cite{xiong2022flowing} studied the so-called CRB-rate (C-R) region for a point-to-point MIMO ISAC system with one sensing target, which is defined as the set containing all C-R pairs that can be simultaneously achieved by sensing and communication. In particular, \cite{xiong2022flowing} considered a generic target estimation model with finite sensing duration, based on which the optimal sample covariance matrix at the base station (BS) was derived for CRB minimization and rate maximization, respectively. Accordingly, the obtained solutions in \cite{xiong2022flowing} only characterized two corner points on the boundary of the C-R region, but did not characterize the whole region boundary, especially the boundary points between the two corners. This thus motivates the current work.
	%	This thus motivates us to fill in such a research gap.
	
	In particular, we consider a point-to-point MIMO ISAC system, in which a multi-antenna BS sends unified wireless signals to sense one target based on the echo signal and communicate with a multi-antenna CU simultaneously. We consider two target models, namely the point and extended targets. For the point target case, the BS aims to estimate the target angle and reflection coefficient as unknown parameters, and we use the CRB for estimating the target angle as the sensing performance metric. For the extended target case, the BS aims to estimate the complete target response matrix, and we consider the trace, the maximum eigenvalue, and the determinant of the CRB matrix for estimating the target response matrix, namely the Trace-CRB, MaxEig-CRB, and Det-CRB, respectively, as three different sensing CRB metrics. For each of the four scenarios with different CRB measures for the two target models,
	%	 Different from \cite{xiong2022flowing} that characterized the two corner points on the Pareto boundary of the C-R region by considering a generic sensing model with finite sensing duration, 
	we aim to reveal the \textit{complete boundary} of the C-R region, by optimizing the transmit covariance matrix at the BS. The main results of this paper are listed as follows. 
	\begin{itemize}
		\item In order to characterize the Pareto boundary of the C-R region in each scenario, we first pinpoint two corner points on each boundary by maximizing the data rate for communication only
		and minimizing the estimation CRB for sensing only, respectively. Next, to find the boundary points between the two corners on each C-R region, we formulate a new MIMO rate maximization problem by optimizing the transmit covariance matrix at the BS, subject to the corresponding maximum CRB constraint and the maximum transmit power constraint at the BS. 
		%		To facilitate the whole boundary characterization for each of the four different scenarios and gain more insights, we first pinpoint two corner points on the boundary, namely the rate maximization point and the CRB minimization point, respectively.
		%We characterize the C-R region explicitly by outlining the corresponding pareto boundary and find out the closed-form expression for the two corner points under different conditions. To achieve each point on the curve between the two corners, we
		%derive the semi-closed-form solution of the optimal transmit covariance by first implementing the singular-value decomposition (SVD) to diagonalize the communication channel and then properly allocating the transmit power over orthogonal communication and sensing subchannels.	
		\item First, we consider the angle-CRB-constrained rate maximization problem for the point target case. We first transform this problem into a convex form and accordingly derive its optimal transmit covariance solution in a semi-closed form by using the Lagrange duality method. It is shown that the optimal solution generally follows the eigenmode transmission structure based on a \textit{composite channel matrix} composed of both the communication and sensing channels, in which the singular value decomposition (SVD) is implemented to diagonalize the composite channel, followed by the water-filling-like power allocation over the decomposed subchannels.
		\item Next, we consider the Trace-CRB, MaxEig-CRB, and Det-CRB constrained rate maximization problems for the extended target case, which are all shown to be convex. By applying advanced optimization techniques, we derive their optimal transmit covariance solutions in semi-closed forms. For all the three problems, the optimal solutions are obtained by first implementing the SVD to diagonalize the \textit{communication channel}, and then performing proper power allocation over two orthogonal sets of decomposed subchannels, one for both communication and sensing and the other (if any) for dedicated sensing only. It is shown that the optimal power allocation is monotonically increasing with respect to the equivalent channel gain of the decomposed subchannel.
		%		accordingly transform the transmit covariance optimization problem into an equivalent power allocation problem over these subchannels for communication and other orthogonal subchannels (if any) for dedicated sensing. Then, we obtain the optimal power allocation solution by the Lagrange duality method, which is revealed to unify the conventional rate maximization design with water-filling power allocation and the CRB minimization design with isotropic transmission.
		\item Finally, we present numerical results to evaluate the C-R-region boundary achieved by the optimal transmit covariance solutions under different CRB metrics for both the point and extended target cases. In the point target case, our proposed optimal design is shown to significantly outperform the benchmark scheme based on time switching between the rate-maximization and the CRB minimization designs. In the extended target case, our proposed optimal designs are shown to be superior to various benchmarks based on time switching and power splitting designs. 
		%		 \textcolor{blue}{Any insights?? For example, high SNR and low SNR cases?? Power allocation behaviors??}
		%
		%compared to those by other benchmark schemes.
		%
		%with heuristic power allocation designs
		%
		%and show the corresponding optimal power allocation. Besides the optimal design strategy, other sub-optimal designs have been proposed to compare with the optimal one and are shown to approach the optimal strategy under different scenarios.
		% some insights here later...
	\end{itemize}
	%
	%\textcolor{red}{Two additional comments: \\
	%	\noindent 1. For benchmark scheme, we may also consider the case with strongest eigenmode transmission for point target case? How? \\
	%%	\noindent 2. For the solution in Section IV, it is not complete. 
	%}
	
	%Will be more efficient to add after other parts are finalized and complete...
	
	{\it Notations:} Boldface letters refer to vectors (lower case) or matrices (upper case). For a square matrix $\mv{M}$, $\mv{M}\succeq \mv{0}$ and $\mv{M}\succ \mv{0}$ mean that $\mv{M}$ is positive semidefinite and positive definite, respectively, while  ${\operatorname{tr}}(\mv{M})$, $\operatorname{det}(\bm{M})$, $\lambda_{\text{max}}(\bm{M})$, $\lambda_{\text{min}}(\bm{M})$, $\lambda_i(\bm{M})$, and $\bm{\lambda}(\bm{M})$ denote its trace, its determinant, maximum eigenvalue, minimum eigenvalue, $i$-th eigenvalue, and the set of its eigenvalues, respectively. For an arbitrary-size matrix $\mv{M}$, ${\text{rank}}(\bm{M})$, $\bm{M}^H$, $\bm{M}^*$, $\bm{M}^T$, and $\zeta_i(\bm{M})$  denote its rank, conjugate transpose, conjugate, transpose, and  $i$-th singular value, respectively. $\otimes$, $\circ$, and $\oplus$ denote the Kronecker product, the Hadamard product, and the direct sum, respectively.
	%The distribution of a circularly symmetric complex Gaussian (CSCG) random vector with mean vector $\mv{x}$ and covariance matrix $\mv{\Sigma}$ is denoted by $\mathcal{CN}(\mv{x,\Sigma})$; and $\sim$ stands for ``distributed as''.
	$\mathbb{R}^{x\times y}$ and $\mathbb{C}^{x\times y}$ denotes the spaces of real and complex matrices with dimension $x \times y$, respectively. {${\mathbb{E}}\{\cdot\}$} denotes the statistical expectation. $\|\mv{x}\|$ denotes the Euclidean norm of a complex vector $\mv{x}$. $|z|$ denote the magnitude of a complex number $z$. $j = \sqrt{-1}$ denotes the imaginary unit. For a real number $x$, $\left(x\right)^+ = \max(x,0)$. $\operatorname{diag}(x_1,...,x_n)$ denotes a diagonal matrix with diagonal elements $x_1,...,x_n$. For a matrix $\bm{M} \in \mathbb{R}^{x \times y}$ (or $\mathbb{C}^{x\times y}$), $\mathcal{R}(\bm{M})$ and $\mathcal{N}(\bm{M})$ denote the range and null space of $\bm{M}$ that are subspaces of $\mathbb{R}^{x}$ (or $\mathbb{C}^x$) and $\mathbb{R}^{y}$ (or $\mathbb{C}^{y}$), respectively.

	\section{System Model}\label{sec:system1}
	
	We consider a MIMO ISAC system, in which a BS communicates with a CU and simultaneously estimates a point target or an extended target, as shown in Fig. \ref{fig:system_model_point} or Fig. \ref{fig:system_model_extended}, respectively. The BS is equipped with a uniform linear array (ULA) with $M > 1$ transmit antennas for sending ISAC signals and $N_s$ receive antennas for receiving echo signals for target estimation. The CU is equipped with $N_c > 1$ antennas.

	%\footnote{the design is applicable to the bistatic case when the receiver knows the transmitted message. (in e.g. C-RAN scenario )}.
	%The BS sends a unified signal to perform both communication and sense at the same time, which is denoted by $x(n)$ at symbol $n$.
	%We consider the monostatic setup, in which the echo wave of the target is received in the same BS with a separated ULA of $N_s$ antennas at its receiver side, as shown in Fig. \ref{fig:system_model}.
	%Without loss of generality, We assume $M < N_s$ throughout the discussion, as the receive antenna number should be larger than the transmit antenna to avoid information loss of the sensed target. % this is the conclusion from Liu fan work, if you cannot figure it the real reason and it is not important, delete it. Do not simply copy or just add a reference.
	\begin{figure}[htb]
		\centering
		\setlength{\abovecaptionskip}{+2mm}
		\setlength{\belowcaptionskip}{-1mm}
		\subfigure[Case with a point target]{ \label{fig:system_model_point}
			\includegraphics[width=2.6in]{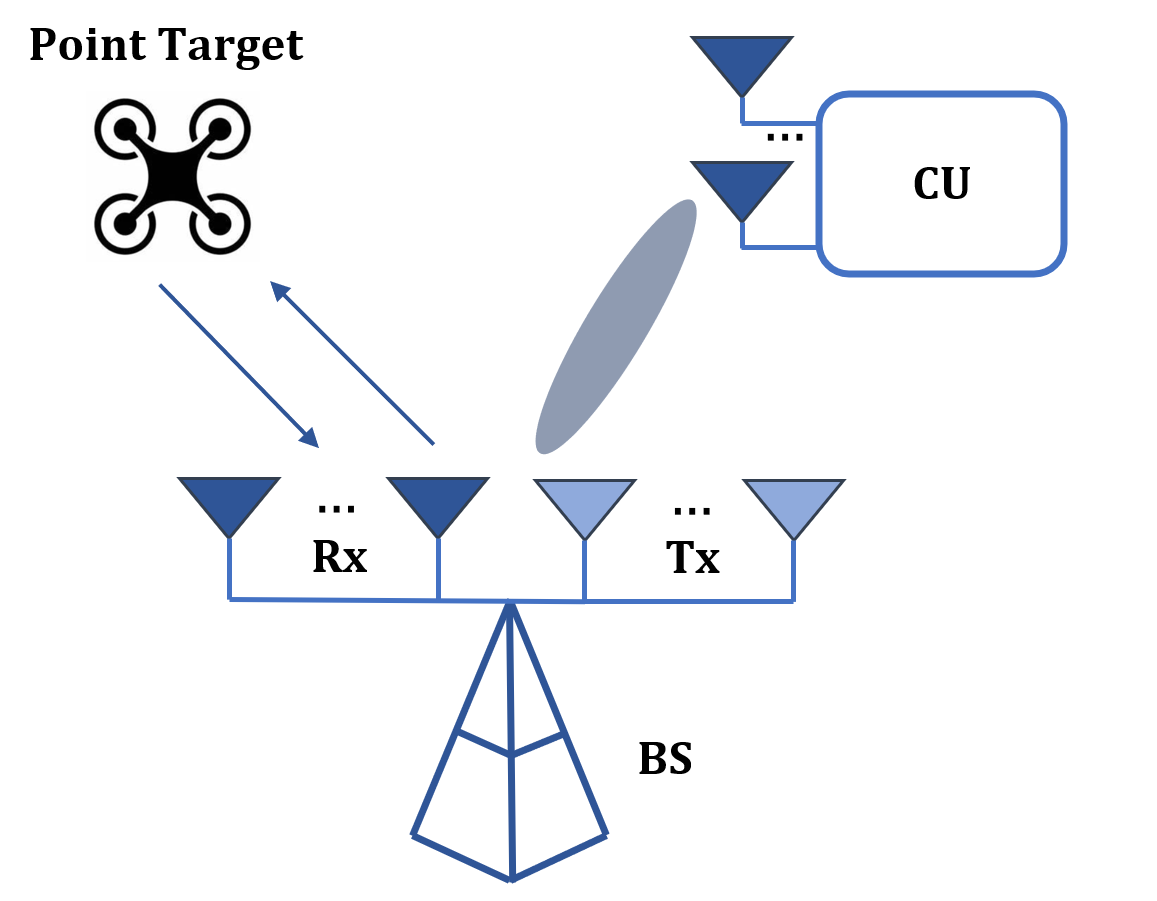}}
		\subfigure[Case with an extended target]{ \label{fig:system_model_extended}
			\includegraphics[width=2.7in]{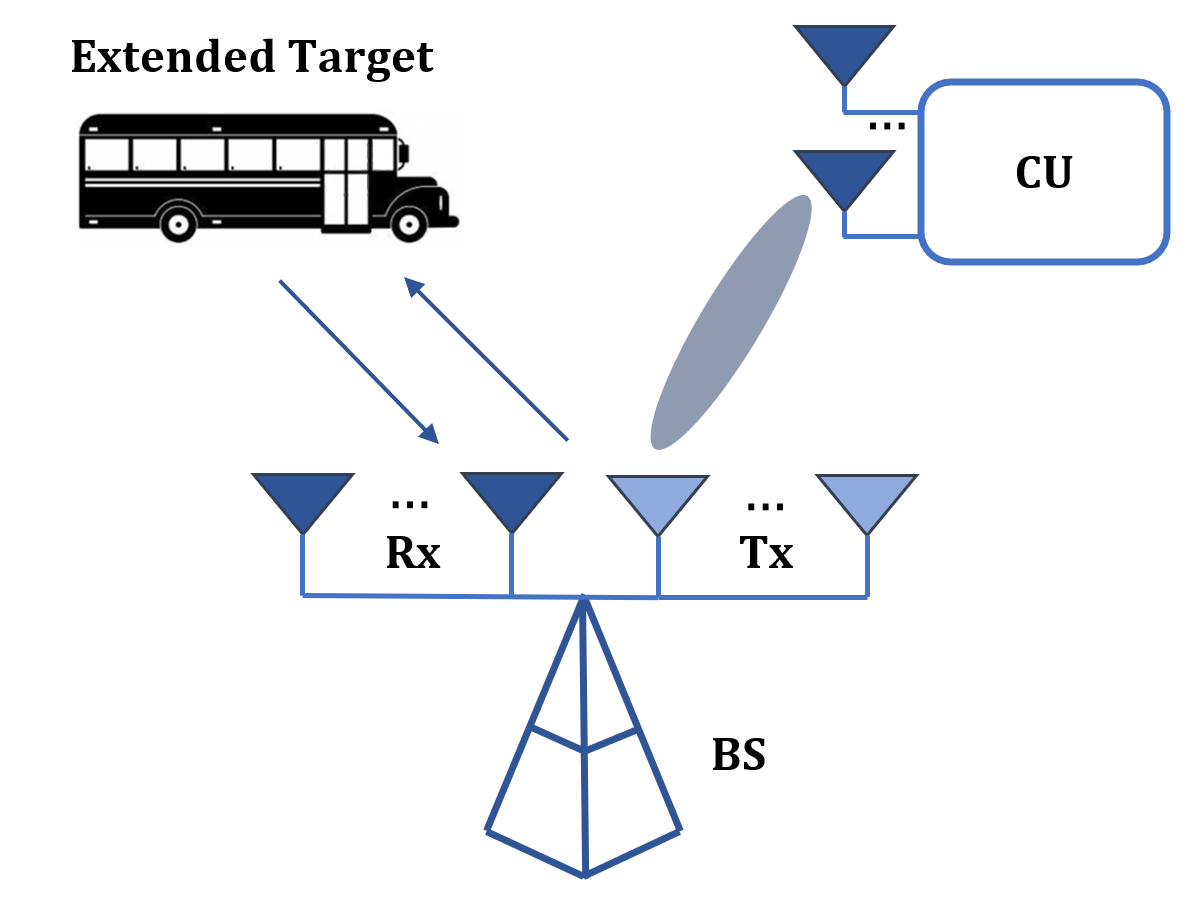}}
		%	\caption{Average matching error under five cases assuming Rayleigh fading channel/ LOS channel: (a) two-stage design without $\mv{R}_d$, (b) joint design without $\mv{R}_d$, (c) joint design with Type-I receiver, (d) joint design with Type-II receiver, (e) joint design with only $\mv{R}_d$ with respect to different $\Gamma$.}
		\caption{Illustration of the considered MIMO ISAC system. }
		\label{fig:system_model}
	\end{figure}
	
	%	\begin{figure}[htb]
	%		\centering
	%		%\epsfxsize=1\linewidth
	%		%	\setlength{\abovecaptionskip}{-4mm}
	%		%	\setlength{\belowcaptionskip}{-4mm}
	%		\setlength{\abovecaptionskip}{-3mm}
	%		\setlength{\belowcaptionskip}{-4mm}
	%		\includegraphics[width=3.9in]{}
	%		\caption{Illustration of the MIMO ISAC system. } \label{fig:system_model}
	%		%	\textbf{[Check the template, the caption should start with Fig.]}
	%	\end{figure}
	
	Let $\bm{x}(n)$ denote the transmit ISAC signal by the BS transmitter (BS-Tx) at symbol $n$. We consider the capacity-achieving Gaussian signaling, such that $\bm{x}(n)$ is a circularly symmetric complex Gaussian (CSCG) random vector with zero mean and covariance $\bm{Q} = \mathbb{E} \{\bm{x}(n) \bm{x}^H(n)\} \succeq \bm{0}$, i.e., $\bm{x}(n) \sim \mathcal{CN}(\bm{0}, \bm{Q})$. Let $P$ denote the transmit power budget at the BS-Tx. We thus have the transmit power constraint as
	\begin{align}\label{eqn:power:constraint}
		\operatorname{tr} (\bm{Q}) = \mathbb{E}\{\|\bm{x}(n)\|^2\} \leq P.
	\end{align}
	
	In this work, we consider a quasi-static narrowband channel model, in which the wireless channels remain unchanged over the transmission duration of our interest, as commonly adopted in the literature \cite{liu2021cramer,xu2021rate}.
	% with $1 \leq n \leq L$, where $L > M$ is the length of the radar pulse/communication frame. In general, the communication may last longer time than sensing due to the infinite block length. It is assumed that the transmitted signal is known to the sensing receiver, but unknown to the CU receiver.
	Let $\bm{H}_c \in \mathbb{C}^{N_c \times M}$ denote the channel matrix from the BS-Tx to the CU, whose rank is denoted by $r = \text{rank}(\bm{H}_c) \leq \min(N_c,M)$. Let $\bm{H}_s \in  C^{N_s \times M}$ denote the target response matrix from the BS-Tx to the target to the BS receiver (BS-Rx), which will be specified later for the cases with point and extended targets, respectively.  
	
	First, we consider the point-to-point MIMO communication from the BS-Tx to the CU. The received signal by the CU at symbol $n$ is
	\begin{align}
		\bm{y}_c(n) = \bm{H}_c \bm{x}(n) + \bm{z}_c(n),
	\end{align}
	where $\bm{z}_c(n)$ denotes the noise at the CU receiver that is a CSCG random vector with zero mean and covariance $\sigma_c^2 \bm{I}_{N_c}$, i.e., $\bm{z}_c(n) \sim \mathcal{CN}(\bm{0},\sigma_c^2 \bm{I}_{N_c})$. In this case, with Gaussian signalling, the achievable rate (in bps/Hz) of the MIMO channel with transmit covariance $\bm{Q}$ is
	\begin{align}\label{eqn:Rate}
		R(\bm{Q}) = \log_2 \det \left(\bm{I}_{N_c} + \frac{1}{\sigma_c^2} \bm{H}_c \bm{Q} \bm{H}_c^H \right).
	\end{align}
	It is assumed that the channel matrix $\bm{H}_c$ is perfectly known at the BS, such that the BS can design the transmit covariance $\bm{Q}$ based on $\bm{H}_c$ to optimize the achievable rate $R(\bm{Q})$.
	
	Next, we consider the MIMO radar sensing over a particular coherent processing interval (CPI) with $L > M$ symbols.
	%	\footnote{Notice that $L$ is much shorter than the data block length for the communication so that the capacity can be asymptotically achieved.}. 
	Let $\mathcal{L} = \{1,\ldots, L\}$ denote the set of symbols in the CPI, and $\bm{X} = \left[\bm{x}(1),...,\bm{x}(L)\right] \in \mathbb{C}^{M \times L}$ denote the transmitted signals over the CPI.
	It is assumed that the CPI length $L$ is sufficiently long such that the sample covariance matrix $\frac{1}{L} \bm{X}\bm{X}^H$ can be approximated as the covariance matrix $\bm{Q}$, which is designed for sensing performance optimization \cite{liu2021cramer}. Accordingly, the received echo signal $\bm{Y}_s \in \mathbb{C}^{N_s \times L}$ at the BS-Rx is
	\begin{align}
		\bm{Y}_s = \bm{H}_s \bm{X} + \bm{Z}_s,
	\end{align}
	where $\bm{Z}_s \in \mathbb{C}^{N_s \times L}$ denotes the noise at the BS-Rx, with each element being an independent and identically distributed (i.i.d.) CSCG random variable with zero mean and variance $\sigma_s^2$.
	%an AWGN matrix with the variance of each entry being $\sigma_s^2$, and $\bm{H}_s \in \mathbb{C}^{N_s \times M}$ is the sensing channel matrix. The goal of the BS is to recover $\bm{H}_s$ given $\bm{Y}_s$ and transmitted unified signal matrix $\bm{X}$. Notice that when $L$ is sufficiently long\footnote{When $L$ is finite, then the covariance matrix might be different from $\tilde{\bm{Q}}$, in which practical sequence design is necessary, which is left for future work.}, then the covariance matrix $\bm{Q}$ can be approximated by $\bm{X}$ via $\tilde{\bm{Q}} = \frac{1}{L} \bm{X} \bm{X}^H$.
	In particular, we consider the point and extended target models for $\bm{H}_s$, respectively, as detailed in the following.
	
	\subsection{Point Target Model} \label{subsection:point_target_model}
	The point target is modeled as an unstructured point that is far away from the BS \cite{liu2021cramer}. The corresponding target response matrix is
	\begin{align}\label{equ:point_target_model}
		\bm{H}_s=\alpha \mathbf{b}(\theta) \mathbf{a}^{T}(\theta) \triangleq \alpha \mathbf{A}(\theta),
	\end{align}
	where $\alpha \in \mathbb{C}$ represents the reflection coefficient that depends on both the round-trip path-loss and the radar cross section (RCS) of the target, $\theta$ is the angle of arrival (AoA)/angle of departure (AoD) of the target relative to the BS, and $\mathbf{a}(\theta) \in \mathbb{C}^{M \times 1}$ and $\mathbf{b}(\theta) \in \mathbb{C}^{N_{s} \times 1}$ denote the steering vectors of the transmit and receive antennas, respectively.
	By choosing the center of the ULA as the reference point and assuming half-wavelength spacing between adjacent antennas \cite{liu2021cramer}, we have 
	%	$\mathbf{a}(\theta)$ and $\mathbf{b}(\theta)$ as
	%		 which are expressed as (assuming that the center of the ULA is chosen as the reference point)
	%	assuming both $M$
	\begin{align}
		\label{equ:Tx_steer}
		\mathbf{a}(\theta) &= \left[e^{-j \frac{M-1}{2} \pi \sin \theta}, e^{-j \frac{M-3}{2} \pi \sin \theta},...,e^{j \frac{M-1}{2} \pi \sin \theta}\right]^T, \\
		\label{equ:Rx_steer}
		\mathbf{b}(\theta) &= \left[e^{-j \frac{N_s-1}{2} \pi \sin \theta}, e^{-j \frac{N_s-3}{2} \pi \sin \theta},...,e^{j \frac{N_s-1}{2} \pi \sin \theta}\right]^T.
	\end{align}
	%		Note that the AoA and the angle of departure (AoD) are the same under the considered monstatic sensing setup. 
	For the point target case, the BS needs to estimate the complex coefficient $\alpha$ and the angle $\theta$ as unknown parameters. As it is difficult to extract the target information from $\alpha$, we focus on the estimaton of $\theta$ \cite{song2022intelligent}. In this case, the CRB for estimating $\theta$ is expressed as \cite{liu2021cramer}
	%		\begin{strip}
	\begin{align} \label{equ:point_target_CRLB}
		\text{CRB}_1(\bm{Q}) =  
		\frac{\sigma_{s}^{2} \operatorname{tr}\left(\mathbf{A}^{H}(\theta) \mathbf{A}(\theta) \bm{Q}\right)}{2|\alpha|^{2} L\left(\operatorname{tr}\left(\dot{\mathbf{A}}^{H}(\theta) \dot{\mathbf{A}}(\theta) \bm{Q}\right) \operatorname{tr}\left(\mathbf{A}^{H}(\theta) \mathbf{A}(\theta) \bm{Q}\right)-\left|\operatorname{tr}\left(\dot{\mathbf{A}}^{H}(\theta) \mathbf{A}(\theta) \bm{Q}\right)\right|^{2}\right)},
	\end{align}
	%		\end{strip}
	where $\dot{\mathbf{A}}(\theta) =  \frac{\partial {\mathbf{A}}(\theta)}{\partial \theta} = \mathbf{b}(\theta) \mathbf{\dot{a}}^T(\theta) + \dot{\mathbf{b}}(\theta) \mathbf{a}^T(\theta)$. Here,
	$\dot{\mathbf{a}}(\theta)$ and $\dot{\mathbf{b}}(\theta)$ denote the derivatives of $\mathbf{a}(\theta)$ and $\mathbf{b}(\theta)$, respectively, i.e.,
	\begin{align}
		\dot{\mathbf{a}}(\theta) & = \left[-j a_1 \frac{M-1}{2} \pi \cos \theta,...,j a_M \frac{M-1}{2} \pi \cos \theta \right]^T, \\
		\dot{\mathbf{b}}(\theta) & = \left[-j b_1 \frac{N_s-1}{2} \pi \cos \theta,...,j b_{N_s} \frac{N_s-1}{2} \pi \cos \theta \right]^T,
	\end{align}
	where $a_i$ and $b_i$ are the $i$th entry of $\mathbf{a}(\theta)$ and that of $\mathbf{b}(\theta)$, respectively. Notice that by the symmetry of the ULA, $\mathbf{a}(\theta)$ is orthogonal to $\dot{\mathbf{a}}(\theta)$ and $\mathbf{b}(\theta)$ is orthogonal to $\dot{\mathbf{b}}(\theta)$ regardless of $\theta$, i.e., $\mathbf{a}^H(\theta) \dot{\mathbf{a}}(\theta) = 0$ and $ \mathbf{b}^H(\theta) \dot{\mathbf{b}}(\theta) = 0, \forall \theta$. 
	%		It is also easy to verify that $\dot{\mathbf{A}}(\theta) = $. 
	%		To ensure the sensing performance under point target case, we shall design $\bm{Q}$ such that $\text{CRB}_1(\bm{Q})$ given in (\ref{equ:point_target_CRLB}) is less than a threshold.
	
	\subsection{Extended Target Model} \label{subsection:extended_target_model}
	In general, the extended target is modeled as the combination of a large number of $K$ distributed point-like scatterers. In this case, $\bm{H}_s$ is expressed as \cite{liu2021cramer}
	\begin{align}\label{equ:extende_target_model}
		\bm{H}_s=\sum_{k=1}^{K} \alpha_{k} \bm{b}\left(\theta_{k}\right) \bm{a}^{T}\left(\theta_{k}\right),
	\end{align}
	where $\alpha_{k}$ denotes the reflection coefficient of the $k$-th scatterer, $\theta_{k}$ denotes its associated AoA/AoD relative to the BS, and $\bm{a}\left(\theta_{k}\right)$ and $\bm{b}\left(\theta_{k}\right)$ denote the corresponding transmit and receive steering vectors given in (\ref{equ:Tx_steer}) and (\ref{equ:Rx_steer}), respectively. As the number of scatterers $K$ may not be available \textit{a priori}, the objective of sensing is to estimate the complete target response matrix $\bm{H}_s$ with $M N_s$ complex parameters, based on which the BS may extract the parameters of each scatterer using algorithms such as the multiple signal classification (MUSIC) \cite{schmidt1986multiple} and the amplitude and phase estimation (APES)      \cite{li1996adaptive}. %Here, we use the CRB matrix as the performance metric for estimating the parameters in $\bm{H}_s$, which serves as a lower bound on the variance of any unbiased estimators.
	In this case, the CRB matrix for estimating $\bm{H}_s$ is given by \cite{liu2021cramer}
	\begin{align}\label{eqn:CRB_matrix}
		\overline{\bold{CRB}}(\bm{Q}) = \bm{J}(\bm{Q})^{-1},
	\end{align}
	where $\bm{J}(\bm{Q})$ is the Fisher information matrix given by \cite{liu2021cramer}
	\begin{align}
		\bm{J}(\bm{Q})=
		%		\frac{1}{\sigma_{s}^{2}} \bm{X}^{*} \bm{X}^{T} \otimes \bm{I}_{N_{s}} \approx 
		\frac{L}{\sigma_{s}^{2}} \bm{Q}^T \otimes \bm{I}_{N_{s}}.
	\end{align}
	Note that $\overline{\bold{CRB}}(\bm{Q})$ in (\ref{eqn:CRB_matrix}) is a complex matrix with dimension $M N_s \times M N_s$, with the $(M(i-1)+j)$-th diagonal element representing the lower bound of variance for unbiasedly estimating the $(i,j)$-th element of $\bm{H}_s$, $1 \leq i \leq N_s, 1 \leq j \leq M$.
	%	 $\bm{J}(\bm{Q})$ is the Fisher information matrix given by 
	%		where $\frac{1}{L} \bm{X} \bm{X}^H$ is again approximated as $\bm{Q}$ by assuming that $L$ is sufficiently large.
	%		Notice that each diagonal element of the CRB matrix $\overline{\bold{CRB}}(\bm{Q})$ in (\ref{eqn:CRB_matrix}) represents the MSE lower bound of any unbiased estimator estimating the corresponding element in $\bm{H}_s$. 
	To facilitate the ISAC system design, we adopt three different types of scalar CRB metrics based on the trace, maximum eigenvalue, and determinant of the CRB matrix $\overline{\bold{CRB}}(\bm{Q})$ \cite{li2008waveform}, namely Trace-CRB, MaxEig-CRB, and Det-CRB, which are defined in (\ref{equ:Trace_opt}), (\ref{equ:Eigen_opt}), and (\ref{equ:Det_opt}), respectively.
	\begin{align}\label{equ:Trace_opt}
		&\text{CRB}_2(\bm{Q}) = \operatorname{tr}(\overline{\bold{CRB}}(\bm{Q})) = \frac{\sigma_{s}^2 N_s}{L} \operatorname{tr}(\bm{Q}^{-1})\\
		\label{equ:Eigen_opt}
		&\text{CRB}_3(\bm{Q}) = \lambda_{\text{max}} (\overline{\bold{CRB}}(\bm{Q})) = \frac{\sigma_s^2}{L} 		\lambda_{\text{max}} (\bm{Q}^{-1})\\
		\label{equ:Det_opt}
		&\text{CRB}_4(\bm{Q}) = \operatorname{det}(\overline{\bold{CRB}}(\bm{Q})) = (\frac{\sigma_s^2}{L})^{M N_s} \operatorname{det}(\bm{Q}^{-1})^{N_s}
	\end{align}
	Intuitively, minimizing the Trace-CRB (i.e., $\text{CRB}_2(\bm{Q})$ in (\ref{equ:Trace_opt})) corresponds to minimizing the sum CRB for estimating the elements of $\bm{H}_s$. Next, 
	%	as $\operatorname{tr}(\overline{\bold{CRB}}(\bm{Q})) \leq M N_s \lambda_{\text{max}} (\overline{\bold{CRB}}(\bm{Q}))$,
	minimizing the MaxEig-CRB (i.e., $\text{CRB}_3(\bm{Q})$ in (\ref{equ:Eigen_opt})) ensures the fairness for estimating different elements of $\bm{H}_s$ by minimizing the upper bound of the worst-case CRB. Furthermore, as $\operatorname{det}(\overline{\bold{CRB}}(\bm{Q})) = \prod_{i=1}^{M N_s}  \lambda_i(\overline{\bold{CRB}}(\bm{Q}))$, 
	%	 where $\lambda_i$ is $i$-th eigenvalue of $\overline{\bold{CRB}}(\bm{Q})$, 
	minimizing the Det-CRB ($\text{CRB}_4(\bm{Q})$ in (\ref{equ:Det_opt})) is equivalent to minimizing $\sum_{i=1}^{M N_s} \ln \lambda_i(\overline{\bold{CRB}}(\bm{Q}))$, which ensures the proportional fairness for estimating different elements of $\bm{H}_s$ \cite{heath2018foundations}.
	
	For notational convenience, we define the point target case with $\text{CRB}_1(\bm{Q})$ in (\ref{equ:point_target_CRLB}) as Scenario 1, and the three extended target cases with $\text{CRB}_2(\bm{Q})$ in (\ref{equ:Trace_opt}), $\text{CRB}_3(\bm{Q})$ in (\ref{equ:Eigen_opt}), and $\text{CRB}_4(\bm{Q})$ in (\ref{equ:Det_opt}) as Scenarios 2, 3, and 4, respectively. We will characterize the C-R regions for the four scenarios next. 

	\section{C-R Region Characterization}\label{section:two_boundary_point} 
	
	Our objective is to reveal the fundamental tradeoff between the data rate $R(\bm{Q})$ in \eqref{eqn:Rate} for communication and the estimation CRB $\text{CRB}_i(\bm{Q})$ for sensing in Scenario $i \in \{1,...,4\}$. To start with, we define the C-R region, which is a set containing all C-R pairs that can be simultaneously achieved by the ISAC system under the given transmit power constraint in (\ref{eqn:power:constraint}). Mathematically, the C-R region with power budget $P$ in Scenario $i \in \{1,2,3,4\}$ is defined as
	\begin{align}
		%	&\mathcal{C}_{C-R}(P)  \triangleq \{(\bar{\Gamma},\bar{R}): \bar{\Gamma}\ge\frac{\sigma_{s}^2 N_s}{L} \operatorname{tr}(\bm{Q}^{-1}), \\
		\nonumber
		\mathcal{C}_i^{\text{C-R}}(P)  \triangleq \left\{ (\bar{\Gamma},\bar{R}):  \bar{\Gamma} \ge \text{CRB}_i(\bm{Q}),
		\bar{ R} \leq \log_2 \det \left(\bm{I}_{N_c} + \frac{1}{\sigma_c^2} \bm{H}_c \bm{Q} \bm{H}_c^H \right),  \operatorname{tr}(\bm{Q}) \leq P, \bm{Q} \succeq \bm{0}  \right\} .
	\end{align}
	%	where $\text{CRB}_1(\bm{Q})$ is given in (\ref{equ:point_target_CRLB}) in point target scenario and $\text{CRB}_2(\bm{Q})$- $\text{CRB}_4(\bm{Q})$ given in (\ref{equ:Trace_opt})-(\ref{equ:Det_opt}) in extended target scenario, respectively.
	
	We are particularly interested in finding the Pareto boundary of C-R region $\mathcal{C}_i^{\text{C-R}}(P), i \in \{1,2,3,4\}$. Towards this end, we consider the following CRB-constrained rate maximization problem (P$i$) for each scenario $i \in \{1,2,3,4\}$.
	\begin{align} 
		\label{equ:general_ex}
		\text{(P$i$)}: \text{ } \max _{\bm{Q} \succeq \bm{0}}  \text{ } \log_2 \det \left(\bm{I}_{N_c} + \frac{1}{\sigma_c^2} \bm{H}_c \bm{Q} \bm{H}_c^H \right), \quad
		\text { s.t. }  \text{CRB}_i(\bm{Q}) \leq \Gamma_i, \quad
		\operatorname{tr}(\bm{Q}) \leq P. 
	\end{align}

	To facilitate the whole boundary characterization for each of the four different scenarios 
%	via solving problem (P1) in Section \ref{section:point_target} and problems (P2)-(P4) in Section \ref{section:extended_target}, respectively, 
	and gain more insights,
%	Before proceeding to solve problem (P1) in Section \ref{section:point_target} and problems (P2)-(P4) in Section \ref{section:extended_target} to characterize the complete C-R-region boundary, 
	in the following, we first pinpoint two corner points on the boundary of each C-R region $\mathcal{C}_i^{\text{C-R}}(P)$, which correspond  to rate maximization for communication only and CRB minimization for sensing only, respectively.
	%\textcolor{blue}{give the general expression here for the CRB constrained rate maximization problem}.
	
	\subsection{Rate-Maximization Corner Point for Communication Only}
	
	First, we find the rate-maximization corner point with communication only for each of the four scenarios. Towards this end, we maximize the achievable rate $R(\bm{Q})$ by optimizing the transmit covariance $\bm{Q}$, subject to the transmit power constraint, i.e.,
	%	\begin{subequations} 
	\begin{align}\label{equ:P_R}
		\max _{\bm{Q}\succeq \bm{0}}  \text{ } \log_2 \det \left(\bm{I}_{N_c} + \frac{1}{\sigma_c^2} \bm{H}_c \bm{Q} \bm{H}_c^H \right), \quad \text { s.t. } \operatorname{tr}(\bm{Q}) \leq P.
	\end{align}
	%	\end{subequations}
	To facilitate the derivation, we express the SVD of $\bm{H}_c$ as $\bm{H}_c = \mv{U}_c \mv{\Sigma}_c \mv{V}_c^H$, where $\mv{U}_c \in \mathbb{C}^{N_c \times N_c}$ and $\mv{V}_c \in \mathbb{C}^{M \times M}$ with $\mv{U}_c^H \mv{U}_c =\mv{U}_c \mv{U}^H_c = \bm{I}_{N_c}$ and $\mv{V}_c^H \mv{V}_c = \mv{V}_c \mv{V}_c^H = \bm{I}_{M}$, and $\mv{\Sigma}_c \in \mathbb{C}^{N_c \times M}$ is an all-zero matrix except the first $r$ diagonal elements being the $r$ non-zero singular values $\zeta_1(\bm{H}_c) \geq ... \geq \zeta_r(\bm{H}_c) > 0$. It has been well established in \cite{telatar1999capacity} that the optimal solution to problem (\ref{equ:P_R}) is given by $\bm{Q}_c^* = \mv{V}_c \bm{\Lambda}_c \mv{V}_c^H$, where $\bm{\Lambda}_c = \operatorname{diag}(p_{c,1}^*,...,p_{c,r}^*,0,...,0)$ denotes the water-filling power allocation matrix with its first $r$ diagonal elements given by
	\begin{align}\label{eq:WF}
		p_{c,k}^* = \left(\nu - \frac{\sigma_c^2}{\zeta_k^2(\bm{H}_c)}\right)^+, \forall k \in \{1,\ldots, r\}.
	\end{align}
	In (\ref{eq:WF}), $\nu$ is the water level that can be obtained based on $\sum_{k=1}^r p_{c,k}^* = P$. 
	
	At the obtained $\bm{Q}_c^*$, let  $R_{\text{max}} = R(\bm{Q}_c^*) =  \sum_{k=1}^{r} \log_2 (1+\frac{\zeta_k^2(\bm{H}_c) p_{c,k}^*}{\sigma_c^2})$ denote the maximum achievable rate and $\text{CRB}_{C,i} = \text{CRB}_i(\bm{Q}_c^*)$ denote the correspondingly achieved estimation CRB in Scenario $ i \in \{1,2,3,4\}$. As a result, we obtain the rate-maximization corner point on the boundary of C-R region $\mathcal{C}_i^{\text{C-R}}(P)$ as $(\text{CRB}_{C,i}, R_{\text{max}}), i \in \{1,2,3,4\}$.
	
	\begin{remark} \label{remark:finite_SCRB}
		\emph{Note that for Scenario 1 with point target, it follows from (\ref{equ:point_target_CRLB}) that $\text{CRB}_{C,1}$ will become undefined if $\bm{a}^*(\theta)$ is  orthogonal to the range space of rate-maximization covariance $\bm{Q}_c^*$, i.e.,  $\mathcal{R}(\bm{Q}_c^*)$. In practice, however, $\bm{H}_c$ is generally drawn following certain random distributions due to channel fading, and thus $\bm{a}^*(\theta)$ will be non-orthogonal to $\mathcal{R}(\bm{Q}_c^*)$ with probability one. In this case, $\text{CRB}_{C,1}$ is well defined and bounded, and therefore, the parameter $\theta$ is estimable. On the other hand, in the case with extended target, it follows from (\ref{equ:Trace_opt}), (\ref{equ:Eigen_opt}), and (\ref{equ:Det_opt}) that if $\bm{Q}_c^*$ is rank-deficient (i.e., $\text{rank}(\bm{Q}_c^*) < M$), then $\text{CRB}_{C,i} \rightarrow \infty, i \in \{2,3,4\}$. This means that $\bm{H}_s$ is not estimable in this case due to the lack of DoFs, and the corresponding rate-maximization corner point becomes $(\infty, R_{\text{max}})$ for Scenarios 2, 3, and 4. In general, this happens when the communication channel $\bm{H}_c$ is rank-deficient (i.e., {$\text{rank}(\bm{H}_c) = r < M$}) or the transmit power at the BS is smaller than a certain threshold (i.e., $P \le {P_0 \triangleq } \sum_{i=1}^{M-1}  (\frac{\sigma_c^2}{\zeta_M^2(\bm{H}_c)}-\frac{\sigma_c^2}{\zeta_i^2(\bm{H}_c)})$) under $\text{rank}(\bm{H}_c) = M$. }
		%			Under the point target case, it is with zero probability that $\mathcal{R}(\bm{V}_c)$ is completely orthogonal to $\text{span} \{\bm{a},\dot{\bm{a}}\}$ and $\text{CRB}_{C1}$ is always well-defined at the rate-maximization boundary point.}
	\end{remark}
	% Let's first discuss this case for now. The reason is that this case will happen frequently. While for the point target case, it is very unlikely that the  is completely orthogonal to span {a,\dot{a}}, which will lead to undefined CRB(Q).

	\subsection{CRB-Minimization Corner Point for Scenario 1 with Point Target}
	
	%\subsubsection{Point Target}
	
	%\subsubsection*{title}
	Next, we find the other CRB-minimization corner point on the Pareto boundary of each C-R region. We first consider Scenario 1 for the  point target case in this subsection, and will address Scenarios 2-4 for the extended target case in the next subsection. For the point target case, we formulate the CRB minimization problem as
	%	\begin{subequations}
	\begin{align}\label{equ:point_target_formulation_fe}
		\text{ } \min _{\bm{Q} \succeq \bm{0}} \text{ } 
		\text{CRB}_1(\bm{Q}), \quad
		\text{s.t. }  \operatorname{tr}(\bm{Q}) \leq P. 
	\end{align}
	%	\end{subequations}
	%	In the following, $\theta$ introduced in subsection \ref{subsection:point_target_model} is omitted for brevity and will be omitted later on unless stated otherwise. Towards this end, we have the following proposition.
	We have the optimal solution to problem (\ref{equ:point_target_formulation_fe}) given in Proposition \ref{Pro:Q_min_CRB} in the following, in which we rewrite $\bm{a}(\theta)$ as $\bm{a}$ for notational convenience. Note that Proposition \ref{Pro:Q_min_CRB} has been proved in \cite{li2008waveform}, for which the detailed proof is omitted for brevity. 
	
	\begin{proposition}\label{Pro:Q_min_CRB}
		\emph{The optimal solution to problem (\ref{equ:point_target_formulation_fe}), denoted by $\bm{Q}_{s,1}^*$, is given as follows by considering three cases.
			\begin{itemize}
				\item When $N_s > M$ or $\|\dot{\bm{b}}\| > \|\dot{\bm{a}}\|$, we have $\bm{Q}_{s,1}^* = \frac{P}{\|\bm{a}\|_2^2} \bm{a}^* \bm{a}^T$;
				\item When $N_s = M$ or $\|\dot{\bm{b}}\| = \|\dot{\bm{a}}\|$, we have $\bm{Q}_{s,1}^* = P \eta \frac{\dot{\bm{a}}^* \dot{\bm{a}}^T}{\|\dot{\bm{a}}\|_2^2} + (1-\eta)P \frac{\bm{a}^* \bm{a}^T}{\|\bm{a}\|_2^2}$ for any $0 \leq \eta < 1$;
				\item When $N_s < M$ or  $\|\dot {\bm{b}}\| <  \|\dot {\bm{a}}\|$, we have $\bm{Q}_{s,1}^* = P \eta \frac{\dot{\bm{a}}^* \dot{\bm{a}}^T}{\|\dot{\bm{a}}\|_2^2} + (1-\eta)P \frac{\bm{a}^* \bm{a}^T}{\|\bm{a}\|_2^2}$ with $\eta \rightarrow 1$. 
		\end{itemize}}
		%			\begin{align}\label{equ:optimal_Q_min_CRB}
		%			\bm{Q}^{\text{opt}} = \begin{cases}
		%			\frac{P}{\|\bm{a}\|_2^2} \bm{a}^* \bm{a}^T, & N_s > M \\
		%			P \lambda \frac{\dot{\bm{a}}^* \dot{\bm{a}}^T}{\|\dot{\bm{a}}\|_2^2} + (1-\lambda)P \frac{\bm{a}^* \bm{a}^T}{\|\bm{a}\|_2^2}, 0 \leq \lambda < 1, & N_s = M \\
		%			P \lambda \frac{\dot{\bm{a}}^* \dot{\bm{a}}^T}{\|\dot{\bm{a}}\|_2^2} + (1-\lambda)P \frac{\bm{a}^* \bm{a}^T}{\|\bm{a}\|_2^2}, \lambda \to 1, & N_s < M.
		%			\end{cases}
		%			\end{align}. }
	\end{proposition}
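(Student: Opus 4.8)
The plan is to reduce problem \eqref{equ:point_target_formulation_fe} to a small linear program by exploiting the rank-one structure $\bm{A}(\theta)=\bm{b}\bm{a}^T$ together with the two orthogonality relations $\bm{a}^H\dot{\bm{a}}=0$ and $\bm{b}^H\dot{\bm{b}}=0$ noted after \eqref{equ:Rx_steer}. First I would simplify the three matrix traces appearing in \eqref{equ:point_target_CRLB}. Writing $\dot{\bm{A}}=\bm{b}\dot{\bm{a}}^T+\dot{\bm{b}}\bm{a}^T$ and invoking $\bm{b}^H\dot{\bm{b}}=0$ to cancel the cross terms, one gets $\bm{A}^H\bm{A}=\|\bm{b}\|_2^2\,\bm{a}^*\bm{a}^T$, $\dot{\bm{A}}^H\dot{\bm{A}}=\|\bm{b}\|_2^2\,\dot{\bm{a}}^*\dot{\bm{a}}^T+\|\dot{\bm{b}}\|_2^2\,\bm{a}^*\bm{a}^T$, and $\dot{\bm{A}}^H\bm{A}=\|\bm{b}\|_2^2\,\dot{\bm{a}}^*\bm{a}^T$. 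Introducing the scalars $u=\bm{a}^T\bm{Q}\bm{a}^*\ge0$, $v=\dot{\bm{a}}^T\bm{Q}\dot{\bm{a}}^*\ge0$ and $w=\bm{a}^T\bm{Q}\dot{\bm{a}}^*$, substitution into \eqref{equ:point_target_CRLB} collapses the CRB to
\begin{align}\label{eqn:plan_collapse}
	\text{CRB}_1(\bm{Q})=\frac{\sigma_s^2\,u}{2|\alpha|^2 L\bigl(\|\bm{b}\|_2^2(uv-|w|^2)+\|\dot{\bm{b}}\|_2^2\,u^2\bigr)},
\end{align}
so that, for $u>0$, minimizing $\text{CRB}_1$ is equivalent to maximizing $f=\|\bm{b}\|_2^2\bigl(v-|w|^2/u\bigr)+\|\dot{\bm{b}}\|_2^2\,u$ over the feasible set.

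Second, I would confine the optimal $\bm{Q}$ to the two-dimensional subspace $S=\operatorname{span}\{\bm{a}^*,\dot{\bm{a}}^*\}$, which is genuinely two-dimensional since $\bm{a}^H\dot{\bm{a}}=0$ forces $\bm{a}^T\dot{\bm{a}}^*=0$. Because $u,v,w$ depend on $\bm{Q}$ only through $\bm{P}_S\bm{Q}\bm{P}_S$ (with $\bm{P}_S$ the orthogonal projector onto $S$), replacing $\bm{Q}$ by $\bm{P}_S\bm{Q}\bm{P}_S$ leaves \eqref{eqn:plan_collapse} unchanged while $\operatorname{tr}(\bm{P}_S\bm{Q}\bm{P}_S)=\operatorname{tr}(\bm{Q}\bm{P}_S)\le\operatorname{tr}(\bm{Q})$; moreover $\text{CRB}_1$ is homogeneous of degree $-1$ in $\bm{Q}$, so the power budget is active at the optimum. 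Parametrizing $\bm{Q}|_S$ by a $2\times2$ PSD matrix in the orthonormal basis $\{\bm{a}^*/\|\bm{a}\|_2,\,\dot{\bm{a}}^*/\|\dot{\bm{a}}\|_2\}$ with diagonal entries $q_{11},q_{22}\ge0$, off-diagonal $q_{12}$, and $q_{11}+q_{22}=P$, I obtain $u=\|\bm{a}\|_2^2 q_{11}$, $v=\|\dot{\bm{a}}\|_2^2 q_{22}$ and $|w|^2=\|\bm{a}\|_2^2\|\dot{\bm{a}}\|_2^2\,|q_{12}|^2$.

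Third, the $-|w|^2/u$ term shows $q_{12}=0$ is optimal (it enlarges $f$ and automatically respects $q_{11}q_{22}\ge|q_{12}|^2$), after which $f$ becomes the linear function $C_1 q_{11}+C_2 q_{22}$ with $C_1=\|\dot{\bm{b}}\|_2^2\|\bm{a}\|_2^2$ and $C_2=\|\bm{b}\|_2^2\|\dot{\bm{a}}\|_2^2$. Maximizing this over the simplex $\{q_{11}+q_{22}=P,\ q_{11},q_{22}\ge0\}$ is a linear program placing all power on the larger-coefficient coordinate. Using $\|\bm{a}\|_2^2=M$, $\|\bm{b}\|_2^2=N_s$ and the closed forms $\|\dot{\bm{a}}\|_2^2=\tfrac{\pi^2\cos^2\theta}{12}M(M^2-1)$, $\|\dot{\bm{b}}\|_2^2=\tfrac{\pi^2\cos^2\theta}{12}N_s(N_s^2-1)$, one finds $C_1/C_2=(N_s^2-1)/(M^2-1)$, hence $\operatorname{sgn}(C_1-C_2)=\operatorname{sgn}(N_s-M)=\operatorname{sgn}(\|\dot{\bm{b}}\|_2-\|\dot{\bm{a}}\|_2)$ (the last by monotonicity of $x\mapsto x(x^2-1)$). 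This yields the three-way split: all power on $\bm{a}^*$ when $N_s>M$, indifference along the whole edge when $N_s=M$, and all power on $\dot{\bm{a}}^*$ when $N_s<M$; translating the vertices back through the basis reproduces the stated $\bm{Q}_{s,1}^*$, e.g.\ $\bm{Q}_{s,1}^*=\tfrac{P}{\|\bm{a}\|_2^2}\bm{a}^*\bm{a}^T$ in the first case.

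The main obstacle is the degenerate behavior at $u=0$: \eqref{eqn:plan_collapse} takes the indeterminate form $0/0$ when $q_{11}=0$, reflecting that $\theta$ is inestimable once the transmit covariance carries no energy along $\bm{a}^*$ (the relevant Fisher information block becomes singular). Consequently, in the case $N_s<M$ the linear program wants all power at $q_{22}$, but that vertex is infeasible for a well-defined CRB, so the infimum is approached but not attained — which is precisely why the third case is stated with $\eta\to1$ rather than $\eta=1$, and why the knife-edge case $N_s=M$ requires $0\le\eta<1$. I would make this rigorous by showing that along the edge $q_{12}=0$ one has $\text{CRB}_1=\sigma_s^2/\bigl(2|\alpha|^2 L(\|\bm{b}\|_2^2 v+\|\dot{\bm{b}}\|_2^2 u)\bigr)$ for every $u>0$, and then examining the one-sided limit as $q_{11}\downarrow0$.
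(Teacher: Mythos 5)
Your argument is correct, and it is worth noting that the paper itself does not prove Proposition~\ref{Pro:Q_min_CRB} at all --- it defers entirely to the cited reference \cite{li2008waveform} --- so your derivation is a self-contained proof rather than a variant of one given in the text. All the essential steps check out: the trace simplifications via $\bm{b}^H\dot{\bm{b}}=0$ are exactly right, and your collapsed expression for $\text{CRB}_1(\bm{Q})$ in terms of $u=\bm{a}^T\bm{Q}\bm{a}^*$, $v=\dot{\bm{a}}^T\bm{Q}\dot{\bm{a}}^*$, $w=\bm{a}^T\bm{Q}\dot{\bm{a}}^*$ agrees with \eqref{equ:point_target_CRLB} after cancelling the common factor $\|\bm{b}\|_2^2$. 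The projection onto $S=\operatorname{span}\{\bm{a}^*,\dot{\bm{a}}^*\}$ is legitimate because $u,v,w$ are invariant under $\bm{Q}\mapsto \bm{P}_S\bm{Q}\bm{P}_S$ while the trace can only decrease, and degree-$(-1)$ homogeneity correctly forces the power constraint in \eqref{equ:point_target_formulation_fe} to be tight. Dropping $q_{12}$, reducing to the linear program $C_1 q_{11}+C_2 q_{22}$ with $C_1/C_2=(N_s^2-1)/(M^2-1)$, and handling the $u\downarrow 0$ degeneracy via the one-sided limit reproduce the three cases, including the reason the third case is only attained as $\eta\to 1$ and the second requires $\eta<1$. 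Two small points to tidy up in a final write-up: (i) the identification $\operatorname{sgn}(C_1-C_2)=\operatorname{sgn}(\|\dot{\bm{b}}\|-\|\dot{\bm{a}}\|)=\operatorname{sgn}(N_s-M)$ implicitly assumes $\cos\theta\neq 0$ (otherwise $\dot{\bm{a}}=\dot{\bm{b}}=\bm{0}$ and the CRB is undefined for every $\bm{Q}$), which is the same nondegeneracy the proposition itself presumes; and (ii) you should state explicitly that the supremum of $f$ over the closed simplex restricted to $q_{11}>0$ equals the unrestricted supremum in the case $N_s<M$, which follows from continuity of $C_1q_{11}+C_2q_{22}$, so that the claimed $\eta\to 1$ family is indeed an infimizing sequence for the CRB. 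With those remarks your proof is complete.
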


	\begin{remark} \label{remark:CRB_min_Q}
		\emph{ %	It is worth pointing out the result obtained from Proposition \ref{Pro:Q_min_CRB} is more general than that in \cite{liu2021cramer}[Theorem 1] in the sense that we do not presume that $\bm{Q}$ has to be a rank-one solution before solving the problem. 
			According to Proposition \ref{Pro:Q_min_CRB}, the minimum CRB is obtained as $\text{CRB}_{1,\text{min}} \triangleq \text{CRB}_1(\bm{Q}_{s,1}^*)$.
			Notice that when $N_s > M$, $\bm{Q}_{s,1}^*$ is unique with $\text{rank}(\bm{Q}_{s,1}^*) = 1$, and the correspondingly achieved data rate is given as $R_{1,S} = R(\bm{Q}_{s,1}^*)$. When $N_s = M$, $\bm{Q}_{s,1}^*$ is not unique and we can optimize $0 \leq \eta < 1$ to obtain $R_{1,S} = \max_{0 \leq \eta <1} R(\bm{Q}_{s,1}^*)$. When $N_s < M$, however, $R_{1,S}$ can only be asymptotically achieved, i.e., $R_{1,S} = \lim_{\eta \rightarrow 1} R(\bm{Q}_{s,1}^*)$. Combining the above, the CRB-minimization corner point on the Pareto boundary of $\mathcal{C}_1^{\text{C-R}}(P)$ is obtained as $(\text{CRB}_{1,\text{min}}, R_{1,S})$.}
		%			 and we accordingly obtain the CRB-minimization corner point as $(\text{CRB}_{1,\text{min}}, R_{1,S})$. Notice that if $N_s = M$, we can actually optimize $\lambda$ to maximize the rate. However, under the case when $N_s \leq M$, it is possible that both the minimum CRB and the corresponding rate can only be asymptotically achieved \cite{li2008waveform}.}
		%	However, the optimal solution show that minimizing the CRB only subject to the power constraint will always yield a rank-one solution. And the result is very intuitive: without considering the communication, to minimize the CRB of the estimated angle of the point target, the transmit antenna array should steer the array towards the location of the target with maximum transmit power.}
	\end{remark}
	
	\subsection{CRB-Minimization Corner Point for Scenarios 2-4 with Extended Target}
	
	In this subsection, we obtain the CRB-minimization corner point on the boundary of each C-R region $\mathcal{C}_i^{\text{C-R}}(P), i \in \{2,3,4\}$, with the extended target. Towards this end, we have the CRB minimization problem for Scenario $i \in \{2,3,4\}$ as 
	%	It is shown that the optimal solutions under all design criteria turn out to be isotropic transmission.
	\begin{align}\label{equ:P_CRB_combined}
		%		\text{(P.C)} \text{ } \min _{\bm{Q}_s} & \text{ } \frac{\sigma_{s}^2 N_s}{L} \operatorname{tr}(\bm{Q}_s^{-1}) \\
		\min _{\bm{Q}\succeq \bm{0}}  \text{ } \text{CRB}_i(\bm{Q}), \quad
		\text {s.t. }  \operatorname{tr}(\bm{Q}) \leq P. 
	\end{align}
	We then have the following proposition.
	\begin{proposition} \label{Pro:three_CRB_min}
		\emph{The optimal solutions to the three problems in (\ref{equ:P_CRB_combined}) are identical, given by $\bm{Q}_{s,i}^{*} = \frac{P}{M} \bm{I}_M, \forall i \in \{2,3,4\}$.}
		\begin{proof}
			See Appendix \ref{three_CRB_min_proof}.
		\end{proof}
	\end{proposition}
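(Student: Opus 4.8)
The plan is to reduce each of the three matrix optimization problems in \eqref{equ:P_CRB_combined} to a symmetric scalar program over the eigenvalues of $\bm{Q}$, and then invoke elementary inequalities to show that the uniform eigenvalue profile is optimal in all three cases. First I would observe that, since the objectives in \eqref{equ:Trace_opt}, \eqref{equ:Eigen_opt}, and \eqref{equ:Det_opt} all involve $\bm{Q}^{-1}$, any optimal $\bm{Q}$ must be positive definite, as otherwise each CRB metric diverges; hence we may restrict attention to $\bm{Q} \succ \bm{0}$. Let $\lambda_1, \ldots, \lambda_M > 0$ denote the eigenvalues of $\bm{Q}$. Each metric depends on $\bm{Q}$ only through these eigenvalues, via $\operatorname{tr}(\bm{Q}^{-1}) = \sum_i 1/\lambda_i$, $\lambda_{\max}(\bm{Q}^{-1}) = 1/\min_i \lambda_i$, and $\operatorname{det}(\bm{Q}^{-1}) = \prod_i 1/\lambda_i$, while the power budget reads $\operatorname{tr}(\bm{Q}) = \sum_i \lambda_i \le P$. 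Thus all three problems become symmetric programs over $(\lambda_1, \ldots, \lambda_M)$ subject to $\sum_i \lambda_i \le P$.

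Next I would note that every metric is non-increasing in each $\lambda_i$ and is strictly reduced by scaling all the eigenvalues up proportionally, so the power constraint must be active at any optimum, i.e. $\sum_i \lambda_i = P$. It then remains to solve three problems over the open simplex $\{\lambda_i > 0 : \sum_i \lambda_i = P\}$. For the Trace-CRB, the arithmetic-harmonic-mean inequality gives $\sum_i 1/\lambda_i \ge M^2 / \sum_i \lambda_i = M^2 / P$, with equality if and only if $\lambda_1 = \cdots = \lambda_M = P/M$. For the Det-CRB, the arithmetic-geometric-mean inequality gives $\prod_i \lambda_i \le (P/M)^M$, so $\prod_i 1/\lambda_i$ is minimized, again uniquely, at $\lambda_i = P/M$. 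For the MaxEig-CRB, minimizing $1/\min_i \lambda_i$ is equivalent to maximizing $\min_i \lambda_i$, and $M$ positive numbers summing to $P$ satisfy $\min_i \lambda_i \le P/M$ with equality if and only if they are all equal, so the optimum is once more $\lambda_i = P/M$.

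Finally, I would translate the optimal eigenvalue profile back to the matrix domain: a positive definite matrix all of whose eigenvalues equal $P/M$ is necessarily $\frac{P}{M}\bm{I}_M$, independently of the choice of eigenbasis, since $\frac{P}{M}\bm{I}_M$ has the entire space as its degenerate eigenspace. Hence $\bm{Q}_{s,i}^{*} = \frac{P}{M}\bm{I}_M$ is the unique optimizer for each $i \in \{2,3,4\}$, which establishes that the three solutions coincide.

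I expect the main subtlety to lie not in any single inequality but in the reduction step and in the non-smooth MaxEig objective. One must argue cleanly that each objective is a function of the eigenvalues alone, which is immediate for the trace and determinant but uses the identity $\lambda_{\max}(\bm{Q}^{-1}) = 1/\lambda_{\min}(\bm{Q})$ for the maximum-eigenvalue case, and one must handle the latter by the direct max-min argument above rather than by differentiable KKT conditions. As a unifying alternative, I could observe that the uniform vector is majorized by every feasible eigenvalue vector of fixed sum $P$, and that $\sum_i 1/\lambda_i$ and $-\sum_i \log \lambda_i$ are Schur-convex while $\min_i \lambda_i$ is Schur-concave, which simultaneously yields all three conclusions.
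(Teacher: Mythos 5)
Your proof is correct, and it takes a cleaner, more unified route than the paper. The paper handles the three scenarios with three different tools: for the Trace-CRB it cites the KKT conditions (deferring to an external reference), for the MaxEig-CRB it first reformulates as a max-min eigenvalue problem and invokes the Schur--Horn theorem to argue the optimizer is diagonal before solving a scalar program, and for the Det-CRB it applies Hadamard's inequality to reduce to diagonal matrices. You instead exploit the fact that, in the absence of the communication channel, all three objectives and the power constraint are unitarily invariant spectral functions, so the whole problem collapses to a symmetric program over the eigenvalue vector on the simplex $\sum_i \lambda_i = P$; the AM--HM, AM--GM, and max-min inequalities then dispatch the three cases with equality precisely at the uniform profile, and the translation back to $\frac{P}{M}\bm{I}_M$ is immediate because a Hermitian matrix with a single repeated eigenvalue is a multiple of the identity in every basis. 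What your approach buys is self-containedness (no appeal to Schur--Horn or to an external KKT derivation), an explicit uniqueness statement for each metric (which the paper asserts but does not fully argue, e.g.\ for the determinant case the details are ``omitted''), and a single majorization/Schur-convexity umbrella that covers all three metrics at once. The paper's case-by-case machinery is the natural choice for the harder constrained problems (P2)--(P4) later on, where the communication channel breaks the unitary invariance and diagonalization must be argued via Hadamard and Schur--Horn, but for the pure CRB-minimization corner point your spectral reduction is the more economical argument.
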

	
	Based on Proposition \ref{Pro:three_CRB_min}, the achieved Trace-CRB, MaxEig-CRB, and Det-CRB are given by $\text{CRB}_{2,\text{min}} = \frac{\sigma_{s}^2 N_s M^2}{P L}$, $\text{CRB}_{3,\text{min}} = \frac{M \sigma_s^2}{L P}$, and $\text{CRB}_{4,\text{min}} = (\frac{M \sigma_s^2}{L P})^{M N_s}$ for Scenario 2, 3, and 4, respectively. The corresponding communication data rate are identical for the three scenarios, i.e., $R_{i,S} = \sum_{k=1}^r \log_2 \left( 1+ \frac{\zeta_k^2(\bm{H}_c) P}{\sigma_c^2 M} \right), i \in \{2,3,4\}$. As a result, the CRB-minimization corner points for C-R region $\mathcal{C}_2^{\text{C-R}}(P)$ , $\mathcal{C}_3^{\text{C-R}}(P)$, and $\mathcal{C}_4^{\text{C-R}}(P)$ are obtained as $(\text{CRB}_{2,\text{min}},R_{2,S})$, $(\text{CRB}_{3,\text{min}},R_{3,S})$, and $(\text{CRB}_{4,\text{min}},R_{4,S})$, respectively. 

	\section{Optimal Solution to Problem (P1) with Point Target}\label{section:point_target}
	
	In this section, we address problem (P1) with point target. 
	%	in which the CRB constraint in (\ref{equ:point_target_CRLB}) can be explicitly written as 
	%	\begin{align} \label{equ:point_target_CRB}
	%		\frac{\sigma_{s}^{2} \operatorname{tr}\left(\mathbf{A}^{H} \mathbf{A} \bm{Q}\right)}{2|\alpha|^{2} L\left(\operatorname{tr}\left(\dot{\mathbf{A}}^{H} \dot{\mathbf{A}} \bm{Q}\right) \operatorname{tr}\left(\mathbf{A}^{H} \mathbf{A} \bm{Q}\right)-\left|\operatorname{tr}\left(\dot{\mathbf{A}}^{H} \mathbf{A} \bm{Q}\right)\right|^{2}\right)} \leq \Gamma_1.
	%	\end{align}
	%	where we omit $\theta$ for brevity. 
	By defining $\tilde{\Gamma}_1 = \frac{2 \Gamma_1 L |\alpha|^{2}}{\sigma_s^2}$, problem (P1) is reformulated as
	%	The above problem can then be reformulated as
	%			\begin{subequations}\label{equ:point_target_formulation_plug1}
	%		\begin{align}
	%			\text{(P0.1)}: \text{ } & \max _{\bm{Q}}  \text{ } \log_2 \det (\bm{I}_{N_c} + \frac{1}{\sigma_c^2} \bm{H}_c \bm{Q} \bm{H}_c^H) \\
	%			\label{equ:CRLB_P_point_plug1}
	%			\text { s.t. }  & \frac{a}{ab-|c|^2} \leq \hat{\Gamma} \\
	%			\label{equ:Power_P_point_plug1}
	%			&\operatorname{tr}(\bm{Q}) \leq P \\
	%			\label{equ:semi_P_1_plug1}
	%			&\bm{Q} \succeq \bm{0}.
	%		\end{align}
	%	\end{subequations}
	%	For the constraint in (\ref{equ:CRLB_P_point_plug}), we have
	%	\begin{align}
	%	\frac{a}{ab-|c|^2} \leq \tilde{\Gamma}_1
	%	\text{ } \Leftrightarrow \text{ }  a(b-\frac{1}{\tilde{\Gamma}}_1) - |c|^2 \geq 0 \Leftrightarrow \left[\begin{array}{cc}
	%	b-\frac{1}{\tilde{\Gamma}_1} & c \\
	%	c^* & a
	%	\end{array}\right] \succeq \bm{0}.
	%	\end{align}
	\begin{subequations}\label{equ:point_target_formulation_eq}
		\begin{align}
			\text{(P1.1)}: \text{ }  \max _{\bm{Q} \succeq \bm{0}}  \text{ } & \log_2 \det \left(\bm{I}_{N_c} + \frac{1}{\sigma_c^2} \bm{H}_c \bm{Q} \bm{H}_c^H \right) \\
			\label{equ:CRLB_P_point_eq}
			\text { s.t.}  & \left[\begin{array}{cc}
				\operatorname{tr}\left(\dot{\mathbf{A}}^{H} \dot{\mathbf{A}} \bm{Q}\right)-\frac{1}{\tilde{\Gamma}_1} & \operatorname{tr}\left(\dot{\mathbf{A}}^{H} \mathbf{A} \bm{Q}\right) \\
				\operatorname{tr}\left(\mathbf{A}^{H} \dot{\mathbf{A}} \bm{Q}\right) & \operatorname{tr}\left(\mathbf{A}^{H} \mathbf{A} \bm{Q}\right)
			\end{array}\right] \succeq \bm{0} \\
			\label{equ:Power_P_point_eq}
			& \operatorname{tr}(\bm{Q}) \leq P,
		\end{align}
	\end{subequations}
	where (\ref{equ:CRLB_P_point_eq}) is equivalent to the CRB constraint in (\ref{equ:general_ex}) with $\text{CRB}_1(\bm{Q})$ given in (\ref{equ:point_target_CRLB}), and we omit $\theta$ for notational convenience. Problem (P1.1) is convex, since the objective function is concave and the constraints are convex. 
	%	Therefore, problem (P1.1) can be optimally solved by standard convex optimization techniques. 
	To gain more insights, we use the Lagrange duality method to obtain a well-structured optimal solution to (P1.1).
	
	%\textcolor{blue}{Introducing the dual varaibles $\left[\alpha_d, \beta_d^*; \beta_d, \nu_d \right]$ associated with the constraint in (\ref{equ:CRLB_P_point_eq}), apply Lagrange duality method to solve...}
	Let $\bm{Z}_P = \left[\alpha_d, \beta_d+j\gamma_d; \beta_d-j\gamma_d, \nu_d \right] \succeq \bm{0}$ and $\lambda \geq 0$ denote the dual varaibles associated with the constraints in  (\ref{equ:CRLB_P_point_eq}) and (\ref{equ:Power_P_point_eq}), respectively. The Lagrangian of (P1.1) is 
	\begin{align}
		\nonumber
		\mathcal{L}(\bm{Q},\lambda,\bm{Z}_P)  = & \log_2 \det \left(\bm{I}_{N_c} + \frac{1}{\sigma_c^2} \bm{H}_c \bm{Q} \bm{H}_c^H \right) - \lambda(\operatorname{tr}(\bm{Q})-P) + \\
		\nonumber
		&\operatorname{tr} \left( \left[\begin{array}{cc}
			\alpha_d & \beta_d+j\gamma_d \\
			\beta_d-j\gamma_d &\nu_d
		\end{array}\right]
		\left[\begin{array}{cc}
			\operatorname{tr}\left(\dot{\mathbf{A}}^{H} \dot{\mathbf{A}} \bm{Q}\right)-\frac{1}{\tilde{\Gamma}_1} & \operatorname{tr}\left(\dot{\mathbf{A}}^{H} \mathbf{A} \bm{Q}\right) \\
			\operatorname{tr}\left(\mathbf{A}^{H} \dot{\mathbf{A}} \bm{Q}\right) & \operatorname{tr}\left(\mathbf{A}^{H} \mathbf{A} \bm{Q}\right)
		\end{array}\right]\right)\\
		\nonumber
		= & \log_2 \det \left(\bm{I}_{N_c} + \frac{1}{\sigma_c^2} \bm{H}_c \bm{Q} \bm{H}_c^H \right) + \lambda P - \frac{\alpha_d}{\tilde{\Gamma}_1} - \operatorname{tr}\left( \bm{C}(\lambda, \bm{Z}_P) \bm{Q} \right),
	\end{align}
	where $\bm{C}(\lambda, \bm{Z}_P) \triangleq \lambda \bm{I} - (\alpha_d \dot{\mathbf{A}}^{H} \dot{\mathbf{A}} + (\beta_d+j\gamma_d) \dot{\mathbf{A}}^{H} \mathbf{A} + (\beta_d-j\gamma_d) \mathbf{A}^{H} \dot{\mathbf{A}} + \nu_d \mathbf{A}^{H} \mathbf{A})$. By using the fact that $\mathbf{a}^H \dot{\mathbf{a}} = 0$, $\mathbf{b}^H \dot{\mathbf{b}} = 0$, and $\dot{\mathbf{A}} = \mathbf{b} \dot{\mathbf{a}}^T + \dot{\mathbf{b}} \mathbf{a}^T$, $\bm{C}(\lambda,\bm{Z}_P)$ can be further simplified as
	\begin{align}\label{equ:C_simp}
		\nonumber
		\bm{C}(\lambda,\bm{Z}_P) = & \lambda \bm{I} - ( \alpha_d (\dot{\bm{a}}^* \dot{\bm{a}}^T \|\bm{b}\|_2^2 + \|\dot{\bm{b}}\|_2^2 \bm{a}^* \bm{a}^T) \\ 
		& + (\beta_d+j \gamma_d) \|\bm{b}\|_2^2 \dot{\bm{a}}^* \bm{a}^T + (\beta_d-j \gamma_d) \|\bm{b}\|_2^2 \bm{a}^* \dot{\bm{a}}^T + \nu_d \|\bm{b}\|_2^2 \bm{a}^* \bm{a}^T ),
	\end{align}
	with $\text{rank}(\bm{C}(\lambda,\bm{Z}_P) ) \geq M-2$.
	%	 since the range space of the second term is $\text{span}\{\dot{\bm{a}}^*, \bm{a}^*\}$. 
	%	The Lagrangian can now be written as 
	%	\begin{align}
	%	\mathcal{L}(\bm{Q},\lambda,\bm{Z}_P) = \log_2 \det (\bm{I}_{N_c} + \frac{1}{\sigma_c^2} \bm{H}_c \bm{Q} \bm{H}_c^H) + \lambda P - \frac{\alpha_d}{\tilde{\Gamma}_1} - \operatorname{tr}(\bm{C}(\lambda, \bm{Z}_P) \bm{Q})
	%	\end{align}
	%Taking the derivative of $\mathcal{L}$ with respect to $\bm{Q}^*$ (the conjugate gradient matrix), we have 
	%\begin{align}
	%	\frac{\partial \mathcal{L}}{\partial \bm{Q}^*} = \frac{\partial (\log_2 \det (\bm{I}_M + \frac{1}{\sigma_c^2} \bm{H}_c^H \bm{H}_c \bm{Q} )) }{\partial \bm{Q}^*} + \bm{C}.
	%\end{align}
	The Lagrange dual function of (P1.1) is
	\begin{align} \label{equ:dual_function_ex}
		g(\lambda,\bm{Z}_P) = \max_{\bm{Q} \succeq \bm{0}} \mathcal{L}(\bm{Q},\lambda,\bm{Z}_P).
	\end{align}
	The corresponding dual problem is
	\begin{align} \label{equ:dual_problem_ex}
		\text{(D1.1):} \text{ } \min_{\lambda \geq 0,\bm{Z}_P \succeq \bm{0}} g(\lambda, \bm{Z}_P). 
	\end{align}
	Notice that since problem (P1.1) is a convex optimization problem and meets the Slater's condition, the strong duality holds between (P1.1) and its dual problem (D1.1) \cite{boyd2004convex}. Therefore, primal problem (P1.1) can be solved by equivalently solving dual problem (D1.1) as follows. 
	%	In the following, we first maximize the Lagrangian $\mathcal{L}(\bm{Q},\lambda,\bm{Z}_P)$ to obtain the dual function $g(\lambda,\bm{Z}_P)$ with fixed $\lambda, \bm{Z}_P$ in (\ref{equ:dual_function_ex}), then find the optimal $\lambda^{\text{opt}}$ and $\bm{Z}_P^{\text{opt}}$ in (D1.1) to minimize the dual function, and finally obtain the optimal solution $\bm{Q}^{\text{opt}}$ to the primal problem (P1.1).
	
	\subsection{Finding Dual Function $g(\lambda,\bm{Z}_P)$ under Given $\lambda$ and $\bm{Z}_P$}
	
	First, consider problem (\ref{equ:dual_function_ex}) under given $\lambda$ and $\bm{Z}_P$,
	%	to maximize $\mathcal{L}(\bm{Q},\lambda,\bm{Z}_P)$ to obtain the dual function under any given $\lambda \geq 0, \bm{Z}_P \succeq \bm{0}$:
	%	\begin{align}\label{equ:find_dual_func}
	%	g(\lambda,\bm{Z}_P) = \max_{\bm{Q} \succeq \bm{0}} \mathcal{L}(\bm{Q},\lambda,\bm{Z}_P)
	%	\end{align}
	which is equivalent to solving the following problem by skipping the constant terms.
	\begin{align}\label{equ:find_dual_func_eq}
		\max_{\bm{Q} \succeq \bm{0}} \text{ } \log_2 \det \left(\bm{I}_{N_c} + \frac{1}{\sigma_c^2} \bm{H}_c \bm{Q} \bm{H}_c^H \right) - \operatorname{tr}(\bm{C}(\lambda, \bm{Z}_P) \bm{Q})
	\end{align}
	The eigenvalue decomposition (EVD) of $\bm{C}(\lambda, \bm{Z}_P)$ is given by
	\begin{align}\label{equ:C_evd}
		\bm{C}(\lambda, \bm{Z}_P) = \left[\begin{array}{cc}
			\underbrace{\bm{U}_1}_{M \times r_c} &  \underbrace{\bm{U}_0}_{M \times (M-r_c)}  
		\end{array}\right] \left[\begin{array}{cc}
		\bm{\Delta}_{r_c} & \bm{0}_{r_c \times (M-r_c)}  \\
		\bm{0}_{(M-r_c)\times r_c} & \bm{0}_{(M-r_c)\times(M-r_c)} 
		\end{array}\right] \left[\begin{array}{c}
			\bm{U}_1^H \\  \bm{U}_0^H  
			%			\underbrace{\bm{U}_1^H}_{r \times M} \\  \underbrace{\bm{U}_0^H}_{(M-r) \times M}  
		\end{array}\right] = \bm{U}_1 \bm{\Delta}_{r_c} \bm{U}_1^H,
	\end{align}
	where $r_c = \operatorname{rank}(\bm{C}(\lambda, \bm{Z}_P)) \geq M-2$ and $\bm{\Delta}_{r_c} = \operatorname{diag}(\lambda_{1}(\bm{C}(\lambda, \bm{Z}_P)),...,\lambda_{r_c}(\bm{C}(\lambda, \bm{Z}_P)))$. Without loss of generality, any feasible solution to problem (\ref{equ:find_dual_func_eq}) can be expressed as  
	\begin{align}\label{equ:Q_general}
		\nonumber
		\bm{Q} & = \left[\begin{array}{cc}
			\underbrace{\bm{U}_1}_{M \times r_c} &  \underbrace{\bm{U}_0}_{M \times (M-r_c)}  
		\end{array}\right] \left[\begin{array}{cc}
			\bm{Q}_{11} & \bm{Q}_{01}^H  \\
			\bm{Q}_{01} & \bm{Q}_{00} 
		\end{array}\right] \left[\begin{array}{c}
			\bm{U}_1^H \\  \bm{U}_0^H  
			%			\underbrace{\bm{U}_1^H}_{r \times M} \\  \underbrace{\bm{U}_0^H}_{(M-r) \times M}  
		\end{array}\right] 
		\\
		& = \underbrace{\bm{U}_1 \bm{Q}_{11} \bm{U}_1^H}_{\bm{Q}_{ll}} + \underbrace{\bm{U}_0 \bm{Q}_{01} \bm{U}_1^H + \bm{U}_1 \bm{Q}_{01}^H \bm{U}_0^H + \bm{U}_0 \bm{Q}_{00} \bm{U}_0^H }_{\bm{Q}_{\perp}}.
	\end{align}
	Notice that both $\bm{Q}_{ll}$ and $\bm{Q}_{\perp}$ are hermitian. It is then easy to see that $\operatorname{tr}(\bm{C}(\lambda, \bm{Z}_P)\bm{Q}) = \operatorname{tr}(\bm{Q}_{11} \bm{\Delta}_{r_c})$. Recall that the SVD of $\bm{H}_c$ is $\bm{H}_c = \bm{U}_c \bm{\Sigma}_c \bm{V}_c^H = \left[\bm{U}_{c1}, \bm{U}_{c0}\right] \bm{\Sigma}_c \left[\bm{V}_{c1}, \bm{V}_{c0}\right]^H$.
	%	, we thus have
	%	\begin{align}\label{equ:log_det_sim}
	%	\log_2 \det (\bm{I}_{N_c} + \frac{1}{\sigma_c^2} \bm{H}_c \bm{Q} \bm{H}_c^H) = \log_2 \det (\bm{I}_M +\frac{1}{\sigma_c^2} (\bm{V}_c^H \bm{Q}^{ll} \bm{V}_c \bm{\Sigma}_c^2 + \bm{V}_c^H \bm{Q}^{\perp} \bm{V}_c \bm{\Sigma}_c^2)) 
	%	\end{align}
	%	where $\bm{\Sigma}_c^2 \triangleq \bm{\Sigma}_c^H \bm{\Sigma}_c \in \mathbb{C}^{M \times M}$
	We then have the following lemma. 
	
	%		Notice that for $g(\lambda, \bm{Z}_P)$ to be bounded, . Otherwise, 
	
	%	Given $\mathcal{R}(\bm{U}_0) \oplus \mathcal{R}(\bm{U}_1) = \mathbb{C}^M$, which is true here.
	%	$$
	%	\mathcal{R}(\bm{V}_{c1}) \subseteq \mathcal{R}(\bm{U}_{1}) \iff \mathcal{R}(\bm{U}_{0}) \perp \mathcal{R}(\bm{V}_{c1})
	%	$$
	
	\begin{lemma}\label{lemma:orthogonal_condition}
		\emph{In order for the optimal value of problem (\ref{equ:find_dual_func_eq}) or equivalently  $g(\lambda,\bm{Z}_P)$ to be bounded from above, it must hold that $\bm{C}(\lambda,\bm{Z}_P) \succeq \bm{0}$ and $\mathcal{R}(\bm{V}_{c1}) \subseteq \mathcal{R}(\bm{U}_{1})$.}
		\begin{proof}
			%				When $\bm{C} \succ \bm{0}$, $\bm{Q}^{\perp} = \bm{0}$, which is clearly orthogonal to $\mathcal{R}(\bm{Q}^{\perp})$.
			We prove this lemma by contradiction. First, suppose that $\bm{C}(\lambda,\bm{Z}_P)$ is not positive semi-definite, and let $\bm{v}_{-}$ denote the eigenvector corresponding to any one negative eigenvalue. In this case, we can choose $\bm{Q} = \xi_{-} \bm{v}_{-} \bm{v}_{-}^H$ with $\xi_{-} > 0$. By setting $\xi_{-} \rightarrow \infty$, we accordingly have $g(\lambda,\bm{Z}_P) \rightarrow \infty$, thus resulting in a contradiction. Next, suppose that $\mathcal{R}(\bm{V}_{c1}) \subseteq \mathcal{R}(\bm{U}_{1})$ does not hold, and thus
			there exists a vector $\bm{v}_c$ such that $\bm{v}_c \in \mathcal{R}(\bm{V}_{c1})$ and $\bm{v}_c \notin \mathcal{R}(\bm{U}_1)$. As $\mathcal{R}(\bm{U}_0) \oplus \mathcal{R}(\bm{U}_1) = \mathbb{C}^M$, we have $\bm{v}_c \in \mathcal{R}(\bm{U}_0)$. In this case, we set $\bm{Q}_{11} = \bm{Q}_{01} = \bm{0}$, and $\bm{U}_0 \bm{Q}_{00} \bm{U}_0^H =  \xi_c \bm{v}_{c} \bm{v}_{c}^H $ or equivalently $\bm{Q}_{00} = \xi_c \bm{U}_0^H \bm{v}_{c} \bm{v}_{c}^H \bm{U}_0$. By letting $\xi_c \rightarrow \infty$, we have $g(\lambda,\bm{Z}_P) \rightarrow \infty$, which is a contradiction again. This completes the proof.
		\end{proof}
	\end{lemma}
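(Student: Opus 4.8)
The plan is to prove both conditions by contradiction, exhibiting in each case an explicit feasible $\bm{Q}$ that drives the objective of problem~(\ref{equ:find_dual_func_eq}) to $+\infty$. The objective is $f(\bm{Q}) = \log_2\det(\bm{I}_{N_c} + \frac{1}{\sigma_c^2}\bm{H}_c\bm{Q}\bm{H}_c^H) - \operatorname{tr}(\bm{C}(\lambda,\bm{Z}_P)\bm{Q})$, and the essential structural fact I would exploit is the identity $\operatorname{tr}(\bm{C}(\lambda,\bm{Z}_P)\bm{Q}) = \operatorname{tr}(\bm{Q}_{11}\bm{\Delta}_{r_c})$ derived from the block decomposition~(\ref{equ:Q_general}): the penalty term only ``sees'' the component of $\bm{Q}$ lying in the range of $\bm{C}(\lambda,\bm{Z}_P)$, i.e.\ in $\mathcal{R}(\bm{U}_1)$, while the log-det reward can be activated by any direction that overlaps $\mathcal{R}(\bm{V}_{c1})$.

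\textbf{First I would handle the positive-semidefiniteness condition.} Suppose $\bm{C}(\lambda,\bm{Z}_P)\not\succeq\bm{0}$, and let $\bm{v}_-$ be a unit eigenvector with eigenvalue $\mu < 0$. Choosing the rank-one feasible point $\bm{Q} = \xi_-\bm{v}_-\bm{v}_-^H$ with $\xi_->0$, the penalty becomes $\operatorname{tr}(\bm{C}(\lambda,\bm{Z}_P)\bm{Q}) = \xi_-\mu$, so $-\operatorname{tr}(\bm{C}(\lambda,\bm{Z}_P)\bm{Q}) = -\xi_-\mu\to+\infty$ as $\xi_-\to\infty$, while the log-det term remains nonnegative. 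Hence $f(\bm{Q})\to+\infty$, contradicting boundedness. This establishes the necessity of $\bm{C}(\lambda,\bm{Z}_P)\succeq\bm{0}$.

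\textbf{Next I would treat the range-inclusion condition,} now assuming $\bm{C}(\lambda,\bm{Z}_P)\succeq\bm{0}$ (so the penalty is nonnegative on any $\bm{Q}\succeq\bm{0}$). Suppose $\mathcal{R}(\bm{V}_{c1})\not\subseteq\mathcal{R}(\bm{U}_1)$; then there is a vector $\bm{v}_c\in\mathcal{R}(\bm{V}_{c1})$ with $\bm{v}_c\notin\mathcal{R}(\bm{U}_1)$, and since $\mathcal{R}(\bm{U}_0)\oplus\mathcal{R}(\bm{U}_1)=\mathbb{C}^M$ I may take $\bm{v}_c\in\mathcal{R}(\bm{U}_0)$ (projecting onto the $\bm{U}_0$ component if necessary). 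The key is that $\bm{v}_c$ lies in the null space of $\bm{C}(\lambda,\bm{Z}_P)$, so it incurs no penalty, yet because $\bm{v}_c$ has nonzero projection onto $\mathcal{R}(\bm{V}_{c1})$ it contributes strictly positive log-det gain. Concretely I set $\bm{Q}_{11}=\bm{Q}_{01}=\bm{0}$ and $\bm{Q} = \xi_c\bm{v}_c\bm{v}_c^H$ (equivalently $\bm{Q}_{00}=\xi_c\bm{U}_0^H\bm{v}_c\bm{v}_c^H\bm{U}_0$): the penalty vanishes identically while $\log_2\det(\bm{I}_{N_c}+\frac{1}{\sigma_c^2}\xi_c\bm{H}_c\bm{v}_c\bm{v}_c^H\bm{H}_c^H)\to+\infty$ as $\xi_c\to\infty$, since $\bm{H}_c\bm{v}_c\neq\bm{0}$ (as $\bm{v}_c$ overlaps the right singular subspace associated with the nonzero singular values). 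This again contradicts boundedness and completes the proof.

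\textbf{The main obstacle I anticipate} is the final nonvanishing claim $\bm{H}_c\bm{v}_c\neq\bm{0}$, which is exactly where the choice $\bm{v}_c\in\mathcal{R}(\bm{V}_{c1})$ (rather than merely $\bm{v}_c\notin\mathcal{R}(\bm{U}_1)$) is essential: only vectors with a component in $\mathcal{R}(\bm{V}_{c1})$ are guaranteed to produce growing rate. I would make sure the projection argument preserves membership in $\mathcal{R}(\bm{V}_{c1})$ while landing the vector in $\mathcal{R}(\bm{U}_0)$; if a single vector cannot achieve both simultaneously, I would instead pick $\bm{v}_c\in\mathcal{R}(\bm{V}_{c1})\cap\mathcal{R}(\bm{U}_0)$, whose existence follows from a dimension-counting argument once $\mathcal{R}(\bm{V}_{c1})\not\subseteq\mathcal{R}(\bm{U}_1)$.
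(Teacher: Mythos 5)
Your proposal follows essentially the same route as the paper's own proof: a rank-one $\bm{Q}$ along a negative-eigenvalue direction for the first condition, and an unbounded rank-one $\bm{Q}$ supported on $\mathcal{R}(\bm{U}_0)$ (the null space of $\bm{C}(\lambda,\bm{Z}_P)$, where the penalty vanishes) for the second. The subtlety you flag is real --- the paper itself glosses over it by asserting $\bm{v}_c \in \mathcal{R}(\bm{U}_0)$ directly --- and your projection fix works: if $\bm{v}=\bm{v}_1+\bm{v}_0$ with $\bm{v}\in\mathcal{R}(\bm{V}_{c1})$ and $\bm{v}_0\in\mathcal{R}(\bm{U}_0)$ nonzero, then $\bm{H}_c\bm{v}_0=\bm{0}$ would force $0=\bm{v}_0^H\bm{v}=\|\bm{v}_0\|^2$, a contradiction (equivalently, $\mathcal{R}(\bm{V}_{c1})\not\subseteq\mathcal{R}(\bm{U}_1)$ iff $\mathcal{R}(\bm{U}_0)\not\subseteq\mathcal{N}(\bm{H}_c)$ by taking orthogonal complements). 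Your dimension-counting fallback, however, should be dropped: $\mathcal{R}(\bm{V}_{c1})\cap\mathcal{R}(\bm{U}_0)$ can be trivial even when $\mathcal{R}(\bm{V}_{c1})\not\subseteq\mathcal{R}(\bm{U}_1)$ (e.g.\ $\mathcal{R}(\bm{V}_{c1})=\operatorname{span}\{(1,1,0)^T\}$ with $\mathcal{R}(\bm{U}_1)=\operatorname{span}\{(1,0,0)^T\}$), so only the projection argument is safe.
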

	%		Under the case when $\bm{C} \succeq \bm{0}$, to make sure that we still have optimal solution yielding bounded $g(\lambda,\bm{Z}_P)$, $\mathcal{R}(\bm{Q}^{\perp})$ has to be orthogonal to $\mathcal{R}(\bm{V}_{c1})$,
	%		%the following condition has to be satisfied,
	%		%\begin{align}\label{equ:condition}
	%		%	\bm{Q}^{\perp} = \sum_{i=r+1}^{M} q_i \bm{v}_{ci} \bm{v}_{ci}^H
	%		%\end{align}
	%		which will make $\bm{V}_c^H \bm{Q}^{\perp} \bm{V}_c \bm{\Sigma}_c^2 = \bm{0}$. Otherwise, by setting $\bm{Q}_{11} = \bm{0}, \bm{Q}_{01} = \bm{0}, \bm{U}_0 \bm{Q}_{00} \bm{U}_0^H =  \xi_c \bm{v}_{c1} \bm{v}_{c1}^H $ or equivalently $\bm{Q}_{00} = \xi_c \bm{U}_0^H \bm{v}_{c1} \bm{v}_{c1}^H \bm{U}_0$, where $\bm{v}_{c1}$ is the first column of $\bm{V}_{c1}$ and let $\xi_c \rightarrow \infty$, we get unbounded $g(\lambda,\bm{Z}_P)$. 
	
	According to Lemma \ref{lemma:orthogonal_condition}, we only need to deal with problem (\ref{equ:find_dual_func_eq}) in the case with $\bm{C}(\lambda,\bm{Z}_P) \succeq \bm{0}$ and $\mathcal{R}(\bm{V}_{c1}) \subseteq \mathcal{R}(\bm{U}_{1})$. In this case,
	problem (\ref{equ:find_dual_func_eq}) is simplified as the optimization of $\bm{Q}_{11}$:
	\begin{align}\label{equ:Q_11_formulation}
		\max_{\bm{Q}_{11} \succeq \bm{0}} \text{ } \log_2 \det \left(\bm{I}_{N_c} + \frac{1}{\sigma_c^2} \bm{H}_c \bm{U}_1 \bm{Q}_{11} \bm{U}_1^H \bm{H}_c^H \right) - \operatorname{tr}(\bm{Q}_{11} \bm{\Delta}_{r_c} ).  
	\end{align}
	Let $\tilde{\bm{Q}}_{11} \triangleq \bm{\Delta}_{r_c}^{\frac{1}{2}} \bm{Q}_{11} \bm{\Delta}_{r_c}^{\frac{1}{2}}$. Problem (\ref{equ:Q_11_formulation}) is transformed as
	\begin{align}\label{equ:Q_11_tilde_formulation}
		%	\nonumber
		\max_{\tilde{\bm{Q}}_{11} \succeq \bm{0}} \text{ } \log_2 \det \left(\bm{I}_{N_c} + \frac{1}{\sigma_c^2} \bm{H}_c \bm{U}_1  \bm{\Delta}_{r_c}^{-\frac{1}{2}} \tilde{\bm{Q}}_{11} \bm{\Delta}_{r_c}^{-\frac{1}{2}}  \bm{U}_1^H \bm{H}_c^H \right) - \operatorname{tr}(\tilde{\bm{Q}}_{11}).
		%	\Leftrightarrow & \max_{\tilde{\bm{Q}}_{11} \succeq \bm{0}} \text{ } \log_2 \det (\bm{I}_{N_c} + \frac{1}{\sigma_c^2} \bm{P} \tilde{\bm{Q}}_{11} \bm{P}^H ) - \operatorname{tr}(\tilde{\bm{Q}}_{11})
	\end{align}
	Let the SVD of $\bm{W} \triangleq \bm{H}_c \bm{U}_1  \bm{\Delta}_{r_c}^{-\frac{1}{2}} \in \mathbb{C}^{N_c \times r_c}$ be expressed as $\bm{U}_W \bm{\Sigma}_W \bm{V}_W^H$. Then, we introduce 
	$\bar{\bm{Q}}_{11} \triangleq \bm{V}_W^H \tilde{\bm{Q}}_{11} \bm{V}_W$, and accordingly reformulate problem (\ref{equ:Q_11_tilde_formulation}) as
	\begin{align}\label{equ:P0_dual_trans_22}
		\max_{\bar{\bm{Q}}_{11} \succeq \bm{0}} \text{ } \log_2 \det \left(\bm{I}_{r_c} + \frac{1}{\sigma_c^2} \bm{\Sigma}_W^2 \bar{\bm{Q}}_{11} \right) - \operatorname{tr}(\bar{\bm{Q}}_{11}),
	\end{align}
	where $\bm{\Sigma}_W^2 \triangleq \bm{\Sigma}_W^H \bm{\Sigma}_W = \operatorname{diag}(\zeta_{1}^2(\bm{W}),...,\zeta_{r_c}^2(\bm{W}))$. By applying the  Hadamard's inequality \cite{horn2012matrix}, it follows that the optimal solution to problem (\ref{equ:P0_dual_trans_22}) is diagonal, i.e., $\bar{\bm{Q}}_{11}^* = \operatorname{diag}(\bar{p}_{1,1}^*,...,\bar{p}_{1,r_c}^*)$. By further applying the  Karush-Kuhn-Tucker (KKT) conditions, 
	the optimal $\{\bar{p}_{1,k}^*\}$ follows the water-filling-like structure, i.e.,
	\begin{align}\label{equ:optimal_power_allocation_2}
		\bar{p}_{1,k}^* = \left(\frac{1}{\ln 2}-\frac{\sigma_c^2}{\zeta_{k}^2(\bm{W})}\right)^+, \forall k \in \{1,2,...,r_c\}.
	\end{align}
	The optimal solution to problem (\ref{equ:Q_11_formulation}) is thus expressed as
	\begin{align} \label{equ:optimal_Q_point_2}
		\bm{Q}_{11}^* = \bm{\Delta}_{r_c}^{-\frac{1}{2}} \bm{V}_W \operatorname{diag}(\bar{p}_{1,1}^*,...,\bar{p}_{1,r_c}^*) \bm{V}_W^H \bm{\Delta}_{r_c}^{-\frac{1}{2}}.
	\end{align}
	%	It remains to decide the corresponding power allocation factor $\left(p_1^*,...,p_{r_c}^*\right)$. 
	Based on Lemma \ref{lemma:orthogonal_condition}, it is clear that $g(\lambda, \bm{Z}_P)$ only depends on $\bm{Q}_{ll}$ or $\bm{Q}_{11}^*$ in (\ref{equ:Q_general}) and we thus have $g(\lambda, \bm{Z}_P) = \mathcal{L}(\bm{Q}^*,\lambda,\bm{Z}_P)$ with $\bm{Q}^* = \bm{U}_1 \bm{Q}_{11}^* \bm{U}_1^H$. Note that $\bm{Q}^*$ is non-unique in general.
	
	%	\end{itemize}
	
	%Notice that $\text{rank}(\bm{G}) \leq \min \{\text{rank}(\bm{H}_c), \text{rank}(\bm{C}^{-1/2})\} = \min \{r, M\}$, where $r = \text{rank}(\bm{H}_c) \leq \min \{N_c,M\}$. We can then infer that $\text{rank}(\bm{Q}^*) \leq \min \{r,M\} = r$.
	
	%\newpage
	\subsection{Optimal Solution to (D1.1)}
	
	Next, we solve dual problem (D1.1). Based on Lemma \ref{lemma:orthogonal_condition}, (D1.1) is further reexpressed as
	\begin{align} \label{equ:dual_problem_reex}
		\text{(D1.1):} \text{ } \min_{\lambda \geq 0,\bm{Z}_P \succeq \bm{0}} g(\lambda, \bm{Z}_P), \quad \text{s.t. } \bm{C}(\lambda,\bm{Z}_P) \succeq \bm{0}.
	\end{align}
	As problem (D1.1) has a convex but in general non-differentiable objective function with linear matrix inequality constraints, it is convex and can be solved optimally by applying subgradient-based methods, e.g., the ellipsoid method \cite{boyd2004convex}. Towards this end, we need to obtain the subgradients of the objective function and the constraint functions in (D1.1). First, for the objective function $g(\lambda,\bm{Z}_P)$, the subgradient $\partial g$ at $\left(\lambda, \alpha_d, \beta_d, \gamma_d, \nu_d\right)$ is given by
	\begin{align}\label{equ:subgradient_obj}
		%		\partial g (\lambda,\alpha_d, \beta_d, \gamma_d, \nu_d)
		\left[P-\operatorname{tr}(\bm{Q}^*), \operatorname{tr}(\dot{\mathbf{A}}^{H} \dot{\mathbf{A}} \bm{Q}^*) - \frac{1}{\tilde{\Gamma}_1}, \operatorname{tr}((\dot{\mathbf{A}}^{H} \mathbf{A} + \mathbf{A}^H \dot{\mathbf{A}}) \bm{Q}^* ), \operatorname{tr}(j(\dot{\mathbf{A}}^{H} \mathbf{A} - \mathbf{A}^H \dot{\mathbf{A}}) \bm{Q}^* ), \operatorname{tr}(\mathbf{A}^H \mathbf{A} \bm{Q}^*) \right]^T,
	\end{align}
	where $\bm{Q}^*$ denotes the optimal solution to $\max_{\bm{Q} \succeq \bm{0}} \mathcal{L}(\bm{Q},\lambda,\bm{Z}_P)$ under given $\lambda$ and $\bm{Z}_P$.
	%Notice that meeting the constraint $\bm{Z}_P \succeq \bm{0}$ is equivalent to the following constraints:
	%\begin{align}
	%\begin{cases}
	%		&\alpha_d + \nu_d \geq 0 \\
	%	& \alpha_d \nu_d - (\beta_d^2+\gamma_d^2) \geq 0
	%\end{cases}
	%\end{align}
	%(P1-D) can thus be equivalently formulated as
	%\begin{subequations}
	%	\begin{align}
	%	\text{(P1-D1):} & \text{ } \min_{\lambda, \alpha_d, \beta_d, \gamma_d, \nu_d} g(\lambda, \alpha_d, \beta_d, \gamma_d, \nu_d). \\
	%	\text{s.t.} & \text{ } \lambda \geq 0 \\
	%	& \text{ } \alpha_d + \nu_d \geq 0\\
	%	& \text{ } \alpha_d \nu_d - (\beta_d^2+\gamma_d^2) \geq 0\\
	%	& \text{ } \textcolor{blue}{\bm{C}(\lambda,\alpha_d, \beta_d, \gamma_d, \nu_d) \succ \bm{0}.} 
	%	\end{align}
	%\end{subequations}
	%where $\bm{C}(\lambda,\bm{Z}_P)$ is replaced by $\bm{C}(\lambda,\alpha_d,\beta_d,\gamma_d, \nu_d)$.
	% The subgradient of $g(\lambda, \alpha_d, \beta_d, \gamma_d, \nu_d)$ at point $\left(\lambda, \alpha_d, \beta_d, \gamma_d, \nu_d\right)$ can be readily obtained by ... \textcolor{blue}{($\beta_d$ is a complex number, different from previous cases?)}
	Next, given fixed $\left(\lambda, \alpha_d, \beta_d, \gamma_d, \nu_d\right)$, we denote the eigenvector of $\bm{Z}_P$ corresponding to its minimum eigenvalue as $\bm{z} = \left[z_1,z_2\right]^T$ and that of $\bm{C}(\lambda,\bm{Z}_P)$ as $\bm{q}$. 
	The constraints $\bm{Z}_P \succeq \bm{0}$ and $\bm{C}(\lambda,\bm{Z}_P) \succeq \bm{0}$ are thus equivalent to $\bm{z}^H \bm{Z}_P \bm{z} \geq 0$ and $\bm{q}^H \bm{C}(\lambda,\bm{Z}_P) \bm{q} \geq 0$, respectively. Therefore, for constraint function 
	%	 $f_1(\lambda, \alpha_d, \beta_d, \gamma_d, \nu_d)=\lambda$, the subgradient $\partial f_1$ is $\left[1,0,0,0,0\right]^T$, for $f_2(\lambda, \alpha_d, \beta_d, \gamma_d, \nu_d) = \bm{v}^H \bm{Z}_P \bm{v}$, 
	$\bm{z}^H \bm{Z}_P \bm{z}$, the subgradient is $\left[0,|z_1|^2, \bar{z}_1 z_2 + z_1 \bar{z}_2, -j z_1 \bar{z}_2 + j \bar{z}_1 z_2, |z_2|^2 \right]^T$, and for constraint function $\bm{q}^H \bm{C}(\lambda,\bm{Z}_P) \bm{q}$, the subgradient is \cite{boyd2004convex}
	\begin{align}\label{subgradient_const_func3}
		\left[\bm{q}^H \bm{q}, -\bm{q}^H \dot{\mathbf{A}}^{H} \dot{\mathbf{A}} \bm{q}, -\bm{q}^H(\dot{\mathbf{A}}^{H} \mathbf{A} + \mathbf{A}^H \dot{\mathbf{A}})\bm{q}, \bm{q}^H(-j\dot{\mathbf{A}}^{H} \mathbf{A} + j \mathbf{A}^H \dot{\mathbf{A}})\bm{q}, -\bm{q}^H \mathbf{A}^H \mathbf{A} \bm{q}\right]^T.
	\end{align}
	%	 and we thus have the following dual problem.
	%	\begin{subequations}
	%		\begin{align}
	%		\text{(P1-D2):} & \text{ } \min_{\lambda, \bm{Z}_P \geq \bm{0}} g(\lambda, \bm{Z}_P). \\
	%		\text{s.t.} & \text{ } \lambda \geq 0 \\
	%		& \text{ } \bm{v}^H \bm{Z}_P \bm{v} \geq 0. \\
	%		& \text{ } \bm{q}^H \bm{C}(\lambda,\bm{Z}_P ) \bm{q} \geq \bm{0}.
	%		\end{align}
	%	\end{subequations}
	
	%\begin{align}\label{equ:sub_gradient_point}
	%\partial g(\lambda,\bm{Z}_P) = \left(P-\operatorname{tr}(\bm{Q}^*),  \left(\begin{array}{cc}
	%	\operatorname{tr}\left(\dot{\mathbf{A}}^{H}(\theta) \dot{\mathbf{A}}(\theta) \bm{Q}^*\right)-\frac{1}{\hat{\Gamma}} & \operatorname{tr}\left(\dot{\mathbf{A}}^{H}(\theta) \mathbf{A}(\theta) \bm{Q}^*\right)^* \\
	%	\operatorname{tr}\left(\dot{\mathbf{A}}^{H}(\theta) \mathbf{A}(\theta) \bm{Q}^*\right) & \operatorname{tr}\left(\mathbf{A}^{H}(\theta) \mathbf{A}(\theta) \bm{Q}^*\right)
	%\end{array}\right)\right)
	%\end{align}
	%where $\bm{Q}^*$ is given by (\ref{equ:optimal_Q_point}), which is the optimal solution of $\max_{\bm{Q} \succeq \bm{0}} \mathcal{L}(\bm{Q},\lambda,\bm{Z}_P)$ for the given $(\lambda,\bm{Z}_P)$. 
	By implementing the ellipsoid method based on the obtained subgradients, the optimal solution to (D1.1), denoted by $\lambda^{\text{opt}}$ and $\bm{Z}_P^{\text{opt}}$, is obtained.
	
	\subsection{Optimal Solution to Primal Problem (P1.1) or (P1)}
	
	Finally, with $\lambda^{\text{opt}}$ and $\bm{Z}_P^{\text{opt}}$ at hand, we obtain the optimal primal solution $\bm{Q}_1^{\text{opt}}$ to (P1.1). Based on (\ref{equ:optimal_Q_point_2}), the optimal solution of $\bm{Q}_{11}$ to problem (\ref{equ:dual_function_ex}) or (\ref{equ:find_dual_func_eq}) under  $\lambda^{\text{opt}}$ and $\bm{Z}_P^{\text{opt}}$ is 
	%	
	%	 the corresponding optimal solution of $\max_{\bm{Q} \succeq \bm{0}} \mathcal{L}(\bm{Q},\lambda^{\text{opt}},\bm{Z}_P^{\text{opt}})$ converges to the primal optimal solution to (P1.1), denoted by $\bm{Q}^{\text{opt}}$ with the corresponding $\bm{Q}_{11}^{\text{opt}}$ expressed as
	\begin{align}\label{eq:optimal_sol_P11}
		\bm{Q}_{11}^{\text{opt}} = \{\bm{\Delta}_{r_c}^{\text{opt}}\}^{-\frac{1}{2}} \bm{V}_W^{\text{opt}} \operatorname{diag}(\bar{p}_{1,1}^{\text{opt}},...,\bar{p}_{1,r_c}^{\text{opt}}) \{\bm{V}_W^{\text{opt}}\}^H \{\bm{\Delta}_{r_c}^{\text{opt}}\}^{-\frac{1}{2}},
	\end{align}
	where $\bm{\Delta}_{r_c}^{\text{opt}}$, $\bm{V}_W^{\text{opt}}$, and $(\bar{p}_{1,1}^{\text{opt}},...,\bar{p}_{1,r_c}^{\text{opt}})$ are decided based on $\bm{C}(\lambda^{\text{opt}},\bm{Z}_P^{\text{opt}}) = \bm{U}_1^{\text{opt}} \bm{\Delta}_{r_c}^{\text{opt}} \{\bm{U}_1^{\text{opt}}\}^H$ via (\ref{equ:optimal_Q_point_2}). Note that if $\bm{C}(\lambda^{\text{opt}},\bm{Z}_P^{\text{opt}})\succ \bm{0}$, we have $\bm{Q}_1^{\text{opt}} = \bm{Q}_{ll}^{\text{opt}} = \bm{U}_1^{\text{opt}} \bm{Q}_{11}^{\text{opt}} \{\bm{U}_1^{\text{opt}}\}^H$, or equivalently,
	\begin{align}\label{equ:Q_opt_C_PD}
		\bm{Q}_1^{\text{opt}} = \bm{C}(\lambda^{\text{opt}},\bm{Z}_P^{\text{opt}})^{-1/2} \bm{V}_G^{\text{opt}} \operatorname{diag}(p_{1,1}^{\text{opt}},...,p_{1,M}^{\text{opt}}) \{\bm{V}_G^{\text{opt}}\}^H \bm{C}(\lambda^{\text{opt}},\bm{Z}_P^{\text{opt}})^{-1/2},
	\end{align} 
	where $\bm{V}_G^{\text{opt}}$ contains the right singular vectors of $\bm{G}^{\text{opt}} \triangleq \bm{H}_c \bm{C}(\lambda^{\text{opt}},\bm{Z}_P^{\text{opt}})^{-1/2}$, i.e., $\bm{G}^{\text{opt}} = \bm{U}_G^{\text{opt}} \bm{\Sigma}_G^{\text{opt}} \{\bm{V}_G^{\text{opt}}\}^H$ with 
	%$\lambda_{Gi}$ being the $i$th diagonal element 
	$\{\bm{\Sigma}_G^{\text{opt}}\}^H \bm{\Sigma}_G^{\text{opt}}= \operatorname{diag}(\zeta_{1}^2(\bm{G}^{\text{opt}}),...,\zeta_{M}^2(\bm{G}^{\text{opt}}))$ and $p_{1,k}^{\text{opt}} = \left(\frac{1}{\ln 2}-\frac{\sigma_c^2}{\zeta_{k}^2(\bm{G}^{\text{opt}})}\right)^+$, $\forall k \in \{1,...,M\}$. 
	However, if  $\bm{C}(\lambda^{\text{opt}},\bm{Z}_P^{\text{opt}}) \succeq \bm{0}$ is not positive definite, we need to further determine $\bm{Q}^{\text{opt}}_{01}, \bm{Q}^{\text{opt}}_{00}$ under given $\bm{Q}_{ll}^{\text{opt}}$ by solving the following feasibility problem: 
	\begin{align}
		\text{Find }  \{\bm{Q}_{01}, \bm{Q}_{00}\}, \quad \text{s.t. }  (\text{\ref{equ:CRLB_P_point_eq}}), (\text{\ref{equ:Power_P_point_eq}}), (\text{\ref{equ:Q_general}}).
	\end{align}
	In this case, we have the optimal primal solution to problem (P1) as $\bm{Q}_1^{\text{opt}} = \bm{Q}_{ll}^{\text{opt}} + \bm{Q}_{\perp}^{\text{opt}}$, where $\bm{Q}_{\perp}^{\text{opt}} = \bm{U}_0^{\text{opt}} \bm{Q}_{01}^{\text{opt}} \{\bm{U}_1^{\text{opt}}\}^H + \bm{U}_1^{\text{opt}} \{\bm{Q}^{\text{opt}}_{01}\}^H \{\bm{U}_0^{\text{opt}}\}^H + \bm{U}_0^{\text{opt}} \bm{Q}^{\text{opt}}_{00} \{\bm{U}_0^{\text{opt}}\}^H$.
	%	When $\bm{C} \succ \bm{0}$\footnote{Based on (\ref{equ:C_simp}) and (\ref{equ:C_evd}), it is easy to see that $\mathcal{R}(\bm{U}_0) \subseteq \text{span}\{\dot{\bm{a}}^*, \bm{a}^*\}$, since $\bm{H}_c$ is randomly generated and is independent from the location of the point target, in most cases, $\mathcal{R}(\bm{U}_0)$ is not orthogonal to $\mathcal{R}(\bm{V}_{c1})$ except that $\mathcal{R}(\bm{U}_0)=\bm{0}$, which implies that $\bm{C} \succ \bm{0}$ in Lemma \ref{lemma:orthogonal_condition}.}, the above general solution can be simplified into the following,
	
	Notice that $\bm{Q}_{ll}^{\text{opt}}$ is used for both sensing and communication in general and $\bm{Q}_{\perp}^{\text{opt}}$ is used for sensing only. However, in most practical cases with random target directions and communication channels, $\mathcal{R}(\bm{V}_{c1}) \subseteq \mathcal{R}(\bm{U}_{1})$ holds with $\mathcal{R}(\bm{U}_1) = \mathbb{C}^M$. In this case, we have $\bm{C}(\lambda^{\text{opt}}, \bm{Z}^{\text{opt}}) \succ \bm{0}$ according to Lemma \ref{lemma:orthogonal_condition} and thus  $\bm{Q}_1^{\text{opt}}$ can be directly obtained from (\ref{equ:Q_opt_C_PD}). As a result, it is interesting to see from (\ref{equ:Q_opt_C_PD}) that the optimal solution is obtained by first implementing SVD to diagonalize the composite channel $\bm{H}_c \bm{C}(\lambda^{\text{opt}},\bm{Z}_P^{\text{opt}})^{-1/2}$, followed by the water-filling-like power allocation over the decomposed subchannels.

	\section{Optimal Solutions to Problems (P2)-(P4) with Extended Target}\label{section:extended_target}
	
	%	Under the case of the extended target, based on the obtained two corner points $(\text{CRB}_C, R_{\text{max}})$ and $(\text{CRB}_{\text{min}},R_S)$, it now remains to find the remaining Pareto boundary points between them for characterizing the whole C-R region for three different design criteria. To find each boundary point, we propose to maximize the achievable rate $R(\bm{Q})$ by optimizing $\bm{Q}$, subject to the maximum CRB constraint $\text{CRB}(\bm{Q}) \leq {\Gamma}$ and the transmit power constraint in \eqref{eqn:power:constraint}, where the CRB threshold $\Gamma$ is set as a constant such that $\text{CRB}_{\text{min}}\le {\Gamma} \le \text{CRB}_{C}$.
	This section addresses problems (P2)-(P4)  for Scenarios 2-4 with extended target to find the whole Pareto boundary of $\mathcal{C}_2^{\text{C-R}}(P)$, $\mathcal{C}_3^{\text{C-R}}(P)$, and $\mathcal{C}_4^{\text{C-R}}(P)$.
	
	\subsection{Optimal Solution to Problem (P2) with Trace-CRB}
	
	First, we consider problem (P2). By defining $\tilde{\Gamma}_2 \triangleq \frac{L \Gamma_2}{\sigma_s^2 N_s}$, problem (P2) is re-expressed as 
	%Our goal is to design the transmitted signal covariance matrix to maximize the capacity at the communication user while making sure that the CRB of the extended target is less than a threshold and the sum power constraints at the BS are satisfied. Mathematically, the sensing-constrained capacity maximization problem is formulated as
	\begin{align}\label{equ:P_1}
		\text{(P2.1)}: \text{ } \max _{\bm{Q}\succeq \bm{0}}  \text{ } \log_2 \det \left(\bm{I}_{N_c} + \frac{1}{\sigma_c^2} \bm{H}_c \bm{Q} \bm{H}_c^H \right), \quad \text { s.t. }  \operatorname{tr}(\bm{Q}^{-1}) \leq \tilde{\Gamma}_2, \quad \operatorname{tr}(\bm{Q}) \leq P.
	\end{align}
	%where  for the ease of illustration in the later discussion. Note that by exhausting  the CRB threshold $\Gamma$ between $\text{CRB}_{\text{min}}$ and $\text{CRB}_{C}$, we can find all boundary points between $(\text{CRB}_C, R_{\text{max}})$ and $(\text{CRB}_{\text{min}},R_S)$.
	Note that problem (P2.1) is convex. To solve this problem, we first define $\tilde{\bm{Q}} \triangleq \mv{V}_c^H \mv{Q} \mv{V}_c$, where $\bm{V}_c$ stems from the SVD of $\bm{H}_c$ given by $\bm{H}_c = \mv{U}_c \mv{\Sigma}_c \mv{V}_c^H$.
	%	To gain insights, we derive its optimal solution in a semi-closed form. First, recall that the SVD of $\bm{H}_c$ is given by $\bm{H}_c = \mv{U}_c \mv{\Sigma}_c \mv{V}_c^H$. By defining
	%	\begin{align}\label{equ:Q_transform}
	%		\tilde{\bm{Q}} \triangleq \mv{V}_c^H \mv{Q} \mv{V}_c~{\text{or}}~{\bm{Q}} \triangleq \mv{V}_c \tilde{\mv{Q}} \mv{V}_c^H,
	%	\end{align}
	Accordingly, (P2.1) is equivalently reformulated as
	%	\begin{subequations}
	\begin{align}\label{equ:P1_equivalent1}
		%			\label{equ:P1_1_obj}
		\text{(P2.2):} \text{ } \max _{\tilde{\bm{Q}}\succeq \bm{0}}  \text{ } \log_2 \det \left(\mv{I}_{M} + \frac{1}{\sigma_c^2}  \mv{\Sigma}_c^2 \tilde{\bm{Q}} \right), \quad
		%			\label{equ:P1_1_CRB}
		\text { s.t. }  \operatorname{tr}( \tilde{\bm{Q}}^{-1}) \leq  \tilde{\Gamma}_2, \quad
		%			\label{equ:P1_1_power}
		\operatorname{tr}(\tilde{\bm{Q}}) \leq P,
		%		&\bm{Q} \succeq \mv{0}
	\end{align}
	%	\end{subequations}
	where $\mv{\Sigma}_c^2 \triangleq \mv{\Sigma}_c^H \mv{\Sigma}_c = \operatorname{diag}(\zeta_{1}^2(\bm{H}_c),...,\zeta_{r}^2(\bm{H}_c),0,...,0) \in \mathbb{R}^{M \times M}$. Here, the objective function in (\ref{equ:P1_equivalent1}) is obtained based on  $\det (\bm{I}_{N_c} + \frac{1}{\sigma_c^2} \bm{H}_c \bm{Q} \bm{H}_c^H)  = \det (\mv{I}_{M} + \frac{1}{\sigma_c^2}  \mv{\Sigma}_c^2 \tilde{\bm{Q}} )$, and the constraints in (\ref{equ:P1_equivalent1}) follow from the constraints in (\ref{equ:P_1}) since  $\operatorname{tr}({\bm{Q}}^{-1}) = \operatorname{tr}((\mv{V}_c \tilde{\mv{Q}} \mv{V}_c^H)^{-1}) = \operatorname{tr}(\mv{V}_c {\tilde{\mv{Q}}}^{-1} \mv{V}_c^{H}) = \operatorname{tr}(\mv{V}_c^H \mv{V}_c {\tilde{\mv{Q}}}^{-1} ) = \operatorname{tr}({\tilde{\mv{Q}}}^{-1} )$ and $\operatorname{tr}({\bm{Q}}) = \operatorname{tr}( \mv{V}_c \tilde{\mv{Q}} \mv{V}_c^H) = \operatorname{tr}(\mv{V}_c^H \mv{V}_c \tilde{\mv{Q}} ) = \operatorname{tr}(\tilde{\mv{Q}})$.
	Next, we have the following proposition.
	\begin{proposition}\label{Pro:diagonal_optimal}
		\emph{The optimal solution to problem (P2.2) is a diagonal matrix with strictly positive diagonal elements, i.e., $\tilde{\bm{Q}}_2 = \operatorname{diag}(p_{2,1},p_{2,2},...,p_{2,M})$, where $p_{2,k} > 0, \forall k \in \{1,\ldots,M\}$.}
	\end{proposition}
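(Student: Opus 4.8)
The plan is to prove the proposition by a \emph{symmetrization} argument: I would show that for any feasible $\tilde{\bm{Q}}$ of (P2.2), its diagonal part $\bm{D} \triangleq \operatorname{diag}([\tilde{\bm{Q}}]_{1,1},\ldots,[\tilde{\bm{Q}}]_{M,M})$ is also feasible and attains an objective value no smaller than that of $\tilde{\bm{Q}}$. Since the finiteness of the constraint $\operatorname{tr}(\tilde{\bm{Q}}^{-1}) \le \tilde{\Gamma}_2$ forces every feasible $\tilde{\bm{Q}}$ to be positive definite, all diagonal entries $[\tilde{\bm{Q}}]_{k,k}$ are strictly positive, so $\bm{D}$ is well defined and nonsingular; establishing the above domination then immediately yields both claims (existence of a diagonal optimizer and positivity of its entries).

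First I would recast the objective into a congruent form amenable to Hadamard's inequality. Writing $\bm{\Sigma}_c^2 = \bm{\Sigma}_c \bm{\Sigma}_c$ with $\bm{\Sigma}_c = \operatorname{diag}(\zeta_1(\bm{H}_c),\ldots,\zeta_r(\bm{H}_c),0,\ldots,0) \succeq \bm{0}$ and using the identity $\det(\bm{I}+\bm{A}\bm{B})=\det(\bm{I}+\bm{B}\bm{A})$, the objective becomes $\log_2 \det(\bm{I}_M + \frac{1}{\sigma_c^2}\bm{\Sigma}_c \tilde{\bm{Q}} \bm{\Sigma}_c)$. The matrix $\bm{A} \triangleq \bm{I}_M + \frac{1}{\sigma_c^2}\bm{\Sigma}_c \tilde{\bm{Q}} \bm{\Sigma}_c$ is Hermitian positive definite with diagonal entries $[\bm{A}]_{k,k} = 1 + \frac{1}{\sigma_c^2}\zeta_k^2(\bm{H}_c)[\tilde{\bm{Q}}]_{k,k}$. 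By Hadamard's inequality, $\det(\bm{A}) \le \prod_{k=1}^M [\bm{A}]_{k,k}$, and the right-hand side is exactly the objective value achieved by $\bm{D}$ (for which $\bm{\Sigma}_c \bm{D} \bm{\Sigma}_c$ is diagonal, making the determinant equal to the product of the diagonal entries). Hence replacing $\tilde{\bm{Q}}$ by $\bm{D}$ never decreases the objective.

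Next I would verify feasibility of $\bm{D}$. The power constraint is immediate since $\operatorname{tr}(\bm{D}) = \sum_k [\tilde{\bm{Q}}]_{k,k} = \operatorname{tr}(\tilde{\bm{Q}}) \le P$. The CRB constraint is the crux, and here I would invoke the standard fact that for any positive definite $\bm{M}$ one has $[\bm{M}^{-1}]_{k,k} \ge 1/[\bm{M}]_{k,k}$. This follows from the Schur-complement expression $[\bm{M}^{-1}]_{k,k} = (M_{k,k} - \bm{m}_k^H \bm{M}_{\hat{k}}^{-1} \bm{m}_k)^{-1}$, where $\bm{M}_{\hat{k}} \succ \bm{0}$ is the principal submatrix obtained by deleting row/column $k$ and $\bm{m}_k$ the corresponding off-diagonal block, so that $\bm{m}_k^H \bm{M}_{\hat{k}}^{-1} \bm{m}_k \ge 0$. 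Summing over $k$ gives $\operatorname{tr}(\bm{D}^{-1}) = \sum_k 1/[\tilde{\bm{Q}}]_{k,k} \le \sum_k [\tilde{\bm{Q}}^{-1}]_{k,k} = \operatorname{tr}(\tilde{\bm{Q}}^{-1}) \le \tilde{\Gamma}_2$, so $\bm{D}$ satisfies the CRB constraint as well.

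Combining the two steps shows an optimizer can always be taken diagonal, and positivity of its entries follows because a zero diagonal entry would make $\operatorname{tr}(\bm{D}^{-1})$ infinite and violate the CRB constraint. I expect the main obstacle to be the CRB-constraint step, i.e.\ establishing the monotonicity $\operatorname{tr}(\bm{D}^{-1}) \le \operatorname{tr}(\tilde{\bm{Q}}^{-1})$ under diagonal truncation; by contrast, the objective-domination step is a routine application of the determinant-swap identity (already used to obtain (P2.2)) together with Hadamard's inequality. A minor point to handle cleanly is the presence of zero entries in $\bm{\Sigma}_c$ when $r < M$: those indices contribute a factor of one to both sides of the Hadamard bound and do not affect the feasibility argument, which is independent of $\bm{\Sigma}_c$, so the proof goes through verbatim.
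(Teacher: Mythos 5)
Your proposal is correct and follows essentially the same route as the paper: replace a candidate optimum by its diagonal part, use Hadamard's inequality for the objective, the monotonicity $\operatorname{tr}(\bm{D}^{-1}) \le \operatorname{tr}(\tilde{\bm{Q}}^{-1})$ for the CRB constraint (which the paper cites as a lemma from the literature and you prove inline via Schur complements), and trace preservation for the power constraint, with positivity forced by finiteness of $\operatorname{tr}(\tilde{\bm{Q}}^{-1})$. Your explicit symmetrization of the objective via $\det(\bm{I}+\bm{A}\bm{B})=\det(\bm{I}+\bm{B}\bm{A})$ before invoking Hadamard is a careful touch the paper glosses over, but it does not change the substance of the argument.
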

	\begin{proof} 
		See Appendix \ref{Trace_diag_optimal_proof}.
	\end{proof}
	Based on Proposition \ref{Pro:diagonal_optimal}, problem (P2.2) is equivalently reformulated as
	%we now can further transform (P1.1) into the following format by letting $\tilde{\bm{Q}} = \operatorname{diag}(\bm{p}) = \operatorname{diag}(p_1,p_2,...,p_M)$, where $p_i > 0, \forall i$ can be viewed as the power allocation in the $i$-th subchannel.
	%	\begin{subequations}
	\begin{align}\label{equ:P_1P}
		\text{(P2.3):} \max _{\{p_{2,k} \ge 0\}}  \sum_{k=1}^r \log_2 \left(1+\frac{\zeta_k^2(\bm{H}_c) p_{2,k}}{\sigma_c^2} \right), \quad
		\text { s.t. }  \sum_{k=1}^M \frac{1}{p_{2,k}} \leq  \tilde{\Gamma}_2, \quad \sum_{k=1}^M p_{2,k} \leq P. 
		%		&\bm{Q} \succeq \mv{0}
	\end{align}
	%	\end{subequations}
	%	Then, we find the optimal solution to (P2.3) as follows.
	\begin{proposition}\label{pro:prime_dual_relationship}
		\emph{The optimal power allocation solution to (P2.3) is obtained as
			\begin{align}\label{equ:P_2_3_power}
			p_{2,k}^{\text{opt}} = 
			\begin{cases}
				-t_{1,k}^{\text{opt}} + \sqrt[3]{-t_{2,k}^{\text{opt}}+\sqrt{(t_{2,k}^{\text{opt}})^2+(t_{3,k}^{\text{opt}})^3}} +  \sqrt[3]{-t_{2,k}^{\text{opt}}-\sqrt{(t_{2,k}^{\text{opt}})^2+(t_{3,k}^{\text{opt}})^3}}, \quad & \forall k \in \{1,\ldots, r\},	\\
				 \sqrt{\mu^{\text{opt}}_2/v^{\text{opt}}_2},  \quad  \forall k \in \{r+1,\ldots, M\}. &
			\end{cases}	
		\end{align}
%			\begin{align}   
%				%				\label{equ:general_exp}
%				\nonumber
%				p_{2,k}^{\text{opt}} =& -t_{1,k}^{\text{opt}} + \sqrt[3]{-t_{2,k}^{\text{opt}}+\sqrt{(t_{2,k}^{\text{opt}})^2+(t_{3,k}^{\text{opt}})^3}} +  \sqrt[3]{-t_{2,k}^{\text{opt}}-\sqrt{(t_{2,k}^{\text{opt}})^2+(t_{3,k}^{\text{opt}})^3}}, \quad \forall k \in \{1,\ldots, r\},  \\
%				\label{equ:vanish_exp}
%				p_{2,k}^{\text{opt}} =& \sqrt{\mu^{\text{opt}}_2/v^{\text{opt}}_2}, \quad \forall k \in \{r+1,\ldots, M\}, 
%			\end{align}
			where
			$
			t_{1,k}^{\text{opt}} = b_k^{\text{opt}}/(3a^{\text{opt}})$, $t_{2,k}^{\text{opt}} = (27(a^{\text{opt}})^2d_k^{\text{opt}}-9a^{\text{opt}}b_k^{\text{opt}}c^{\text{opt}}+2(b_k^{\text{opt}})^3)/(54(a^{\text{opt}})^3)$,  $t_{3,k}^{\text{opt}} = (3a^{\text{opt}}c^{\text{opt}}-(b_k^{\text{opt}})^2)/(9(a^{\text{opt}})^2)$
			with $a^{\text{opt}} = v^{\text{opt}}_2, b_k^{\text{opt}} = v^{\text{opt}}_2 \frac{\sigma_c^2}{\zeta_k^2(\bm{H}_c)} - \frac{1}{\text{ln2}}, c^{\text{opt}} = -\mu^{\text{opt}}_2$, and $d_k^{\text{opt}} = -\mu^{\text{opt}}_2 \frac{\sigma_c^2}{\zeta_k^2(\bm{H}_c)}$. Here, $\mu^{\text{opt}}_2$ and $v^{\text{opt}}_2$ are the optimal dual variables associated with the CRB constraint and the power constraint in (\ref{equ:P_1P}), respectively.
			%			\footnote{The optimal dual solution of $\mu^{\text{opt}}$ and $v^{\text{opt}}$ can be obtained by solving the dual problem of (P2.3) via subgradient-based methods, for which the details can be found in Appendix \ref{Proof:prime_dual_relationship}.}
	}\end{proposition}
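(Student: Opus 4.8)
The plan is to solve (P2.3) through its Karush--Kuhn--Tucker (KKT) conditions. Problem (P2.3) is convex (the objective is concave in $\{p_{2,k}\}$ and both constraints are convex), and it is strictly feasible whenever the power budget is large enough to meet the CRB constraint, so Slater's condition holds and the KKT conditions are both necessary and sufficient for optimality. I would introduce dual variables $\mu_2^{\text{opt}} \ge 0$ and $v_2^{\text{opt}} \ge 0$ for the CRB constraint $\sum_{k=1}^M 1/p_{2,k} \le \tilde{\Gamma}_2$ and the power constraint $\sum_{k=1}^M p_{2,k} \le P$, respectively, and form the Lagrangian
\begin{align}
\mathcal{L} = \sum_{k=1}^r \log_2\!\left(1 + \frac{\zeta_k^2(\bm{H}_c) p_{2,k}}{\sigma_c^2}\right) - \mu_2^{\text{opt}}\!\left(\sum_{k=1}^M \frac{1}{p_{2,k}} - \tilde{\Gamma}_2\right) - v_2^{\text{opt}}\!\left(\sum_{k=1}^M p_{2,k} - P\right). \nonumber
\end{align}

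By Proposition \ref{Pro:diagonal_optimal}, every optimal $p_{2,k}$ is strictly positive, so each $p_{2,k}$ lies in the interior of its feasible range and stationarity reduces to $\partial \mathcal{L}/\partial p_{2,k} = 0$. For the dedicated-sensing subchannels $k \in \{r+1,\ldots,M\}$ we have $\zeta_k(\bm{H}_c)=0$, so the objective does not depend on $p_{2,k}$ and stationarity becomes $\mu_2^{\text{opt}}/p_{2,k}^2 - v_2^{\text{opt}} = 0$, which immediately yields $p_{2,k}^{\text{opt}} = \sqrt{\mu_2^{\text{opt}}/v_2^{\text{opt}}}$. Before proceeding I would argue that both multipliers are strictly positive: since the objective is increasing in each $p_{2,k}$ with $k \le r$, any unused power could be reallocated to raise the rate, forcing the power constraint to be active and $v_2^{\text{opt}} > 0$; and since all $p_{2,k}$ must be finite (indeed positive) to keep $\sum 1/p_{2,k}$ bounded, the equation for $k > r$ cannot hold with $\mu_2^{\text{opt}} = 0$, so the CRB constraint is active and $\mu_2^{\text{opt}} > 0$. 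This makes $\sqrt{\mu_2^{\text{opt}}/v_2^{\text{opt}}}$ well defined and positive.

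For the communication-and-sensing subchannels $k \in \{1,\ldots,r\}$, stationarity reads
\begin{align}
\frac{1}{\ln 2}\cdot\frac{\zeta_k^2(\bm{H}_c)}{\sigma_c^2 + \zeta_k^2(\bm{H}_c) p_{2,k}} + \frac{\mu_2^{\text{opt}}}{p_{2,k}^2} - v_2^{\text{opt}} = 0. \nonumber
\end{align}
Clearing the denominators by multiplying through by $p_{2,k}^2\big(\sigma_c^2/\zeta_k^2(\bm{H}_c) + p_{2,k}\big)$, which is positive for $p_{2,k} > 0$, and collecting powers of $p_{2,k}$ turns this into the cubic $a^{\text{opt}} p_{2,k}^3 + b_k^{\text{opt}} p_{2,k}^2 + c^{\text{opt}} p_{2,k} + d_k^{\text{opt}} = 0$ with exactly the coefficients $a^{\text{opt}} = v_2^{\text{opt}}$, $b_k^{\text{opt}} = v_2^{\text{opt}}\,\sigma_c^2/\zeta_k^2(\bm{H}_c) - 1/\ln 2$, $c^{\text{opt}} = -\mu_2^{\text{opt}}$, and $d_k^{\text{opt}} = -\mu_2^{\text{opt}}\,\sigma_c^2/\zeta_k^2(\bm{H}_c)$ stated in the proposition. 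Applying Cardano's formula to this cubic (after the shift $p_{2,k} \mapsto p_{2,k} - t_{1,k}^{\text{opt}}$ with $t_{1,k}^{\text{opt}} = b_k^{\text{opt}}/(3 a^{\text{opt}})$) produces the claimed expression in terms of $t_{2,k}^{\text{opt}}$ and $t_{3,k}^{\text{opt}}$.

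The main obstacle is to certify that the closed-form cube-root expression returns the unique physically meaningful root. I would handle this by working with the left-hand side of the stationarity equation before clearing denominators: on $p_{2,k} > 0$ the map $p \mapsto \frac{1}{\ln 2}\cdot\frac{\zeta_k^2(\bm{H}_c)}{\sigma_c^2 + \zeta_k^2(\bm{H}_c)p} + \mu_2^{\text{opt}}/p^2$ is strictly decreasing, tends to $+\infty$ as $p \to 0^+$ and to $0$ as $p \to \infty$, so for $v_2^{\text{opt}} > 0$ there is exactly one positive root, which coincides with the unique positive root of the cubic. It then remains to confirm that the real cube-root branch used in the stated formula selects this positive root; this follows from the sign pattern of the coefficients ($a^{\text{opt}} > 0$, $c^{\text{opt}} < 0$, $d_k^{\text{opt}} < 0$), which makes the cubic negative at $p_{2,k}=0$ and $+\infty$ as $p_{2,k} \to \infty$, isolating the single positive crossing. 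Substituting the dual optimum $(\mu_2^{\text{opt}}, v_2^{\text{opt}})$ — pinned down separately by complementary slackness together with the two active constraints — then gives the stated power allocation.
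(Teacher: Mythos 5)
Your proposal is correct and follows essentially the same route as the paper: both form the Lagrangian with multipliers $\mu_2, v_2$, invoke convexity and Slater's condition, derive the same stationarity equations, reduce the $k\le r$ case to the cubic $a^{\text{opt}}p^3+b_k^{\text{opt}}p^2+c^{\text{opt}}p+d_k^{\text{opt}}=0$, and apply Cardano's formula, with $\sqrt{\mu_2^{\text{opt}}/v_2^{\text{opt}}}$ for $k>r$. Your added checks (strict positivity of both multipliers and uniqueness of the positive root of the cubic) are sound and go slightly beyond what the paper writes down, while the paper additionally spells out the ellipsoid-method computation of $(\mu_2^{\text{opt}}, v_2^{\text{opt}})$, which is not needed for the statement itself.
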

	\begin{proof}
		See Appendix \ref{Proof:prime_dual_relationship}.		%See Appendix \ref{Proof:diagonal_optimal}.
		%	See Appendix \ref{Proof:prime_dual_relationship}.
	\end{proof}

	Finally, combining Propositions \ref{Pro:diagonal_optimal} and \ref{pro:prime_dual_relationship}, the optimal solution to (P2) is obtained as $\bm{Q}_2^{\text{opt}} = \bm{V}_c \tilde{\bm{Q}}_2^{\text{opt}}\bm{V}_c^H,$
	where $\tilde{\bm{Q}}_2^{\text{opt}} = \operatorname{diag}(p_{2,1}^{\text{opt}},\ldots, p_{2,M}^{\text{opt}})$, with $\{p_{2,k}^{\text{opt}}\}$ given in Proposition \ref{pro:prime_dual_relationship}.
	%Having obtained the optimal $\bm{p}^{\text{opt}}$, the optimal solution to (P1) can the be constructed as $$.
	%Based on what we have already obtained so far, we can now reveal a set of engineering insights behind the considered problem (P1).
	
	%	\begin{proposition}\label{lemma:Lemma_order_p_det}
	%		\emph{The optimal power allocation to (P3.2) satisfies that $p_1^{\text{opt}} \geq ... \geq p_r^{\text{opt}} \geq p_{r+1}^{\text{opt}} = ... = p_{M}^{\text{opt}} > 0 $.}
	%	\end{proposition}
	%	\begin{proof}
	%		The proof is similar to that in Appendix \ref{Proof:lemma_power_allocation}, for which the proof is omitted for brevity.		%See Appendix \ref{Proof:diagonal_optimal}.
	%		%		See Appendix \ref{Proof:lemma_power_allocation}.
	%	\end{proof}

	\subsection{Optimal Solution to Problem (P3) with MaxEig-CRB}
	
	Next, we consider problem (P3) with MaxEig-CRB in Scenario 3, which is re-expressed as
	%	\textcolor{blue}{Introducing how to obtain the two corner points under this design criterion...}
	\begin{align} \label{equ:P_extended_eig_opt}
		\text{(P3)}: \text{ } \max _{\bm{Q} \succeq \bm{0} }  \text{ } \log_2 \det \left(\bm{I}_{N_c} + \frac{1}{\sigma_c^2} \bm{H}_c \bm{Q} \bm{H}_c^H \right), 
		%		\label{equ:CRLB_P_extended_eig_opt}
		%			\text { s.t. } & \bm{J}(\bm{Q}) \geq \Gamma_e \bm{I}_{M \times N_s}, \\
		\quad \text {s.t. }  \lambda_{\text{max}} (\overline{\bold{CRB}}(\bm{Q})) \leq \Gamma_3, \quad \operatorname{tr}(\bm{Q}) \leq P.
		%		\label{equ:Power_P_extended_eig_opt}
		%		&\operatorname{tr}(\bm{Q}) \leq P, 
		%			\\
		%			\label{equ:semi_P_extended_eig_opt}
		%			&\bm{Q} \succeq \bm{0},
	\end{align}
	%	where $\Gamma_3$ is the threshold. 
	Notice that $\overline{\bold{CRB}}(\bm{Q})^{-1} = \bm{J}(\bm{Q}) = \frac{L}{\sigma_s^2} \bm{Q}^T \otimes \bm{I}_{N_s}$, and as a result, $\lambda_{\text{max}} (\overline{\bold{CRB}}(\bm{Q})) \leq \Gamma_3$ is equivalent to $\lambda_{\text{min}} (\bm{J}(\bm{Q})) \geq \frac{1}{\Gamma_3}$. Furthermore, according to the eigenvalue property of Kronecker product, i.e.,  $\bm{\lambda}(\frac{L}{\sigma_s^2} \bm{Q}^T \otimes \bm{I}_{N_s}) = \{\beta_i \gamma_j: \beta_i \in \bm{\lambda}(\frac{L}{\sigma_s^2} \bm{Q}^T), \gamma_j \in \bm{\lambda}(\bm{I}_{N_s})\}$,
	%	where $\lambda(\bm{A})$ denotes the set of eigenvalues of matrix $\bm{A}$, 
	the minimum eigenvalue of $\bm{J}(\bm{Q})$ is equivalent to the minimum eigenvalue of $\frac{L}{\sigma_s^2} \bm{Q}^T$. Thus, problem (P3) is equivalently reformulated as
	%	\begin{subequations}
	\begin{align}\label{equ:P_extended_eig_opt_eq}
		\text{(P3.1)}: \text{ } \max _{\tilde{\bm{Q}}\succeq \bm{0}}  \text{ } \log_2 \det \left(\mv{I}_{M} + \frac{1}{\sigma_c^2}  \mv{\Sigma}_c^2 \tilde{\bm{Q}} \right),
		%		\label{equ:CRLB_P_extended_eig_opt_eq}
		\quad \text { s.t. }  \tilde{\bm{Q}} \succeq  \frac{\sigma_s^2 }{L \Gamma_3} \bm{I}_{M}, \quad
		%		\label{equ:Power_P_extended_eig_opt_eq}
		\operatorname{tr}(\tilde{\bm{Q}}) \leq P,
		%			\\
		%			\label{equ:semi_P_extended_eig_opt_eq}
		%			&\mathbf{\tilde{\bm{Q}}} \succeq \bm{0},
	\end{align}
	%	\end{subequations}
	%	where (\ref{equ:CRLB_P_extended_eig_opt_eq}) is based on the fact that the eigenvalue of $\bm{Q}$ is equal to the eigenvalue of $\bm{Q}^T$ and (\ref{equ:eigen_kron}).
	where $\tilde{\bm{Q}} \triangleq \bm{V}_c^H \bm{Q} \bm{V}_c$. We have the following proposition.
	\begin{proposition}\label{Pro:diagonal_optimal_P3}
		\emph{The optimal solution to problem (P3.1) is a diagonal matrix with strictly positive diagonal elements, i.e., $\tilde{\bm{Q}}_3 = \operatorname{diag}(p_{3,1},p_{3,2},...,p_{3,M})$, where $p_{3,k} > 0, \forall k \in \{1,...,M\}$.}
	\end{proposition}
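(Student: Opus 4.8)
The plan is to prove Proposition \ref{Pro:diagonal_optimal_P3} along exactly the same lines as Proposition \ref{Pro:diagonal_optimal}: I will show that replacing any feasible $\tilde{\bm{Q}}$ by its diagonal part preserves feasibility while not decreasing the objective, so that an optimal solution can always be taken diagonal, and then argue that the max-eigenvalue constraint forces the diagonal entries to be strictly positive.

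First, I would symmetrize the objective so that Hadamard's inequality becomes applicable. Since $\bm{\Sigma}_c^2 = \operatorname{diag}(\zeta_1^2(\bm{H}_c),\ldots,\zeta_r^2(\bm{H}_c),0,\ldots,0)$ is diagonal and positive semidefinite, I write $\bm{\Sigma}_c^2 = \bm{D}^2$ with $\bm{D} = \operatorname{diag}(\zeta_1(\bm{H}_c),\ldots,\zeta_r(\bm{H}_c),0,\ldots,0)$ and use the identity $\det(\bm{I}+\bm{A}\bm{B}) = \det(\bm{I}+\bm{B}\bm{A})$ to rewrite the argument of the determinant as $\bm{M} \triangleq \bm{I}_M + \frac{1}{\sigma_c^2}\bm{D}\tilde{\bm{Q}}\bm{D}$, which is Hermitian and positive definite whenever $\tilde{\bm{Q}}\succeq\bm{0}$. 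Its $k$-th diagonal entry is $M_{kk} = 1 + \frac{1}{\sigma_c^2}\zeta_k^2(\bm{H}_c)\tilde{Q}_{kk}$, so it depends on $\tilde{\bm{Q}}$ only through its diagonal.

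Next, I would apply Hadamard's inequality \cite{horn2012matrix}, $\det(\bm{M}) \leq \prod_{k=1}^M M_{kk}$, to the Hermitian positive definite $\bm{M}$. Letting $\tilde{\bm{Q}}_d \triangleq \operatorname{diag}(\tilde{Q}_{11},\ldots,\tilde{Q}_{MM})$ be the diagonal part of a feasible $\tilde{\bm{Q}}$, the objective evaluated at $\tilde{\bm{Q}}_d$ equals $\prod_{k=1}^M M_{kk}$ and is therefore no smaller than the objective at $\tilde{\bm{Q}}$. It remains to check feasibility of $\tilde{\bm{Q}}_d$: the power constraint is preserved because $\operatorname{tr}(\tilde{\bm{Q}}_d) = \operatorname{tr}(\tilde{\bm{Q}}) \le P$, and the max-eigenvalue constraint is preserved because $\tilde{\bm{Q}} \succeq \frac{\sigma_s^2}{L\Gamma_3}\bm{I}_M$ implies $\tilde{Q}_{kk} = \bm{e}_k^H \tilde{\bm{Q}} \bm{e}_k \geq \frac{\sigma_s^2}{L\Gamma_3}$ for every $k$, hence $\tilde{\bm{Q}}_d \succeq \frac{\sigma_s^2}{L\Gamma_3}\bm{I}_M$. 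Thus a diagonal optimal solution always exists, and restricting to $\tilde{\bm{Q}}_3 = \operatorname{diag}(p_{3,1},\ldots,p_{3,M})$ the constraint reads $p_{3,k} \geq \frac{\sigma_s^2}{L\Gamma_3} > 0$, giving the claimed strict positivity.

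The main obstacle is the symmetrization step: the product $\bm{\Sigma}_c^2\tilde{\bm{Q}}$ inside the determinant is not Hermitian, so Hadamard's inequality cannot be invoked directly, and the congruence rewriting via $\det(\bm{I}+\bm{A}\bm{B})=\det(\bm{I}+\bm{B}\bm{A})$ is precisely what produces a Hermitian $\bm{M}$ whose diagonal depends only on $\operatorname{diag}(\tilde{\bm{Q}})$, making the diagonalization argument go through. A minor caveat I would flag is that, since $\zeta_k(\bm{H}_c)=0$ for $k>r$, the objective is insensitive to $p_{3,k}$ on those indices, so the diagonal optimizer is pinned down only on its first $r$ entries; the statement should therefore be read as asserting the existence of a diagonal optimal solution with strictly positive diagonal.
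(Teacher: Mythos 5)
Your proof is correct and follows the same skeleton as the paper's: replace a feasible $\tilde{\bm{Q}}$ by its diagonal part, use Hadamard's inequality to show the objective does not decrease, and verify that both constraints are preserved. The one sub-step where you genuinely diverge is the preservation of the constraint $\tilde{\bm{Q}} \succeq \frac{\sigma_s^2}{L\Gamma_3}\bm{I}_M$: the paper invokes the Schur--Horn theorem to compare the ordered diagonal entries of $\tilde{\bm{Q}}$ with its eigenvalues and conclude that the smallest diagonal entry dominates the smallest eigenvalue, whereas you simply evaluate the quadratic form $\bm{e}_k^H(\tilde{\bm{Q}} - \tfrac{\sigma_s^2}{L\Gamma_3}\bm{I}_M)\bm{e}_k \geq 0$ to get $\tilde{Q}_{kk} \geq \tfrac{\sigma_s^2}{L\Gamma_3}$ directly. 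Your route is more elementary and avoids the majorization machinery; the paper's route is reusable if one ever needs the full eigenvalue--diagonal comparison rather than just the minimum. You are also more careful than the paper on the Hadamard step, noting that $\bm{\Sigma}_c^2\tilde{\bm{Q}}$ is not Hermitian and symmetrizing via $\det(\bm{I}+\bm{A}\bm{B})=\det(\bm{I}+\bm{B}\bm{A})$ before applying the inequality, and your closing caveat --- that the optimizer is only pinned down on its first $r$ diagonal entries, so the proposition is really an existence statement --- is a fair reading shared implicitly by the paper's contradiction-style argument.
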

	\begin{proof}
		See Appendix \ref{Proof:prop_Eig_diag_optimal}.
	\end{proof}
	Based on Proposition \ref{Pro:diagonal_optimal_P3} and defining $\tilde{\Gamma}_e \triangleq \frac{\sigma_s^2 }{L \Gamma_3}$, we further simplify problem (P3.1) as
	\begin{align} 
		\label{equ:P_extended_eig_opt_diag}
		%		\nonumber
		\text{(P3.2)}:  \max _{\{p_{3,k} \ge 0\}}  \text{ } \sum_{k=1}^r \log_2 \left(1+\frac{\zeta_k^2(\bm{H}_c) p_{3,k}}{\sigma_c^2} \right), 
		%		\label{equ:CRLB_P_extended_eig_opt_diag}
		\text { s.t. }  p_{3,k} \geq \tilde{\Gamma}_e, \forall k \in \{1,...,M\}, \text{ }
		%		\label{equ:Power_P_extended_eig_opt_diag}
		\sum_{k=1}^M p_{3,k} \leq P.
		%		\label{equ:semi_P_extended_det_opt_diag}
		%		&\bm{Q} \succeq \bm{0}.
	\end{align}
	%	 and it is easy to see that the problem is feasible only when $\tilde{\Gamma}_e \leq \frac{P}{M}$.
	%	Then, we find the optimal solution to (P3.2) as follows.
	\begin{proposition}\label{pro:prime_dual_relationship_eig}
		\emph{The optimal power allocation solution of $\{p_{3,k}^{\text{opt}}\}_{k=1}^M$ to problem (P3.2) is 
			\begin{align}
				\label{equ:Lag_zero1_in_pro_eig}
				p_{3,k}^{\text{opt}} = \begin{cases}
					\max \{\frac{1}{v^{\text{opt}}_3 \ln 2} - \frac{\sigma_c^2}{\zeta_k^2(\bm{H}_c)}, \tilde{\Gamma}_e\}
					, \quad & \forall k \in \{1,\ldots, r\},	\\
					\tilde{\Gamma}_e, \quad & \forall k \in \{r+1,\ldots, M\}.
				\end{cases}	
			\end{align}
			Here, $v^{\text{opt}}_3$ denotes the optimal dual variable associated with the power constraint in (\ref{equ:P_extended_eig_opt_diag}).
			%		\footnote{The optimal dual solution of $\mu^{\text{opt}}$ and $v^{\text{opt}}$ can be obtained by solving the dual problem of (P1.2) via subgradient-based methods, for which the details can be found in Appendix \ref{Proof:prime_dual_relationship}.}
	}\end{proposition}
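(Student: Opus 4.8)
The plan is to solve (P3.2) directly through its Karush--Kuhn--Tucker (KKT) conditions. First I would note that (P3.2) is convex: the objective $\sum_{k=1}^r \log_2(1+\zeta_k^2(\bm{H}_c)p_{3,k}/\sigma_c^2)$ is concave in $\{p_{3,k}\}$ and all constraints are affine. Provided $M\tilde{\Gamma}_e < P$ (which holds whenever $\Gamma_3 > \text{CRB}_{3,\text{min}}$, i.e.\ the point of interest lies strictly between the two corner points), Slater's condition is met, so the KKT conditions are both necessary and sufficient for global optimality.

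Before invoking the KKT conditions I would dispose of the zero-gain subchannels $k\in\{r+1,\ldots,M\}$. Since $\zeta_k(\bm{H}_c)=0$ for these indices, they do not appear in the objective but still consume power and must satisfy $p_{3,k}\ge\tilde{\Gamma}_e$. A reallocation argument settles their values: if $p_{3,k}>\tilde{\Gamma}_e$ for some $k>r$ at an optimum, one may reduce it to $\tilde{\Gamma}_e$ and transfer the freed power to any effective subchannel $k'\le r$, strictly increasing the objective while preserving both constraints, a contradiction. Hence $p_{3,k}^{\text{opt}}=\tilde{\Gamma}_e$ for all $k>r$, which is precisely the second case of (\ref{equ:Lag_zero1_in_pro_eig}).

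Next I would introduce the dual variable $v_3\ge 0$ for the sum-power constraint and $\mu_k\ge 0$ for each floor constraint $\tilde{\Gamma}_e - p_{3,k}\le 0$, and write the stationarity condition for $k\le r$ as
\begin{align}
\frac{1}{\ln 2}\,\frac{\zeta_k^2(\bm{H}_c)}{\sigma_c^2+\zeta_k^2(\bm{H}_c)\,p_{3,k}} = v_3 - \mu_k. \nonumber
\end{align}
Because the objective is strictly increasing in each effective power, the power budget is exhausted at the optimum, so $v_3^{\text{opt}}>0$. I would then split on the complementary slackness $\mu_k(p_{3,k}-\tilde{\Gamma}_e)=0$: when the floor is inactive ($\mu_k=0$), solving for $p_{3,k}$ yields the water-filling level $\frac{1}{v_3\ln 2}-\frac{\sigma_c^2}{\zeta_k^2(\bm{H}_c)}$, which is self-consistent only when this level exceeds $\tilde{\Gamma}_e$; when the floor is active ($p_{3,k}=\tilde{\Gamma}_e$), the requirement $\mu_k\ge 0$ forces the water-filling level to be at most $\tilde{\Gamma}_e$. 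The two regimes merge into $p_{3,k}^{\text{opt}}=\max\{\frac{1}{v_3^{\text{opt}}\ln 2}-\frac{\sigma_c^2}{\zeta_k^2(\bm{H}_c)},\tilde{\Gamma}_e\}$, matching the first case of (\ref{equ:Lag_zero1_in_pro_eig}); finally $v_3^{\text{opt}}$ is pinned down by the binding power constraint $\sum_{k=1}^M p_{3,k}^{\text{opt}}=P$, whose left-hand side is monotonically non-increasing in $v_3$, guaranteeing a unique solution.

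I expect the main obstacle to be the clean bookkeeping of the two regimes rather than any deep difficulty. One must verify that the water-filling branch and the floor branch are mutually exclusive and jointly exhaustive, i.e.\ that the sign of $\mu_k$ implied by each branch is consistent, and that collapsing them into a single $\max$ does not conceal an infeasible configuration. The reallocation argument for the zero-gain subchannels also deserves care, since it is precisely what rules out wasting power on subchannels that do not contribute to the rate.
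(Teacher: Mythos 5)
Your proposal is correct and follows essentially the same route as the paper's proof: convexity plus Slater's condition, the Lagrangian stationarity conditions, and complementary slackness on the floor constraints, merged into the $\max\{\cdot,\tilde{\Gamma}_e\}$ form. The only (immaterial) difference is that for the zero-gain subchannels $k>r$ you use a power-reallocation argument, whereas the paper reads $\mu_{3,k}^{\text{opt}}=v_3^{\text{opt}}>0$ directly off the stationarity condition and concludes $p_{3,k}^{\text{opt}}=\tilde{\Gamma}_e$ by complementary slackness.
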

	\begin{proof}
		See Appendix \ref{Proof:prop_Eig_semi_form}.
	\end{proof}
	Finally, with Propositions \ref{Pro:diagonal_optimal_P3} and \ref{pro:prime_dual_relationship_eig}, the optimal solution to (P3) is obtained as $\bm{Q}_3^{\text{opt}} = \bm{V}_c \tilde{\bm{Q}}_3^{\text{opt}}\bm{V}_c^H$, where $\tilde{\bm{Q}}_3^{\text{opt}} = \operatorname{diag}(p_{3,1}^{\text{opt}},\ldots, p_{3,M}^{\text{opt}})$, with $\{p_{3,k}^{\text{opt}}\}$ given in Proposition \ref{pro:prime_dual_relationship_eig}.
	
	%	The power allocation under MaxEig-CRB design criterion can be viewed as the conventional waterfilling plus the equal power allocation given that the power allocation to the last subchannel under waterfilling is larger than $\tilde{\Gamma}_e$, which is the constant power allocation to the subsequent subchannels.
	
	\subsection{Optimal Solution to Problem (P4) with Det-CRB}
	
	Next, we consider problem (P4) with Det-CRB in Scenario 4, which is re-expressed as
	\begin{align}\label{equ:P_extended_det_opt_same}
		%		\label{equ:CRLB_P_extended_det_obj}
		\text{(P4.1)}: \text{ } \max _{\tilde{\bm{Q}}\succeq \bm{0}}  \text{ } \log_2 \det \left(\mv{I}_{M} + \frac{1}{\sigma_c^2}  \mv{\Sigma}_c^2 \tilde{\bm{Q}} \right), \quad
		%		\label{equ:CRLB_P_extended_det_opt_same}
		\text { s.t. } \ln \det (\tilde{\bm{Q}}) \geq \tilde{\Gamma}_d, \quad
		%		\label{equ:Power_P_extended_det_opt_same}
		\operatorname{tr}(\tilde{\bm{Q}}) \leq P,
		%			 \\
		%			\label{equ:semi_P_extended_det_opt_same}
		%			&\mathbf{\tilde{\bm{Q}}} \succeq \bm{0},
	\end{align}
	%	\end{subequations}
	where $\tilde{\Gamma}_d \triangleq M \ln \frac{\sigma_s^2}{L} - \frac{1}{N_s} \ln \Gamma_4$ and $\tilde{\bm{Q}} \triangleq \mv{V}_c^H \mv{Q} \mv{V}_c$. We have the following proposition.
	%	The proposition below shows that the optimal solution to problem (P4.1) is a diagonal matrix.
	%Next, we have the following proposition, which shows that the optimal solution to (P1.1) is a diagonal positive definite matrix.
	\begin{proposition}\label{Pro:diagonal_optimal_P4}
		\emph{The optimal solution to problem (P4.1) is a diagonal matrix with strictly positive diagonal elements, i.e., $\tilde{\bm{Q}}_4 = \operatorname{diag}(p_{4,1},p_{4,2},...,p_{4,M})$, where $p_{4,k} > 0, \forall k \in \{1,\ldots,M\}$.}
	\end{proposition}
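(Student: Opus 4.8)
The plan is to prove Proposition~\ref{Pro:diagonal_optimal_P4} by the same \emph{diagonalization} argument used for Propositions~\ref{Pro:diagonal_optimal} and \ref{Pro:diagonal_optimal_P3}: I would show that for \emph{any} feasible $\tilde{\bm{Q}}$ of (P4.1), the diagonal matrix obtained by retaining only its diagonal entries, i.e., $\bm{D} \triangleq \operatorname{diag}([\tilde{\bm{Q}}]_{11},\ldots,[\tilde{\bm{Q}}]_{MM})$, is itself feasible and attains an objective value no smaller than that of $\tilde{\bm{Q}}$. This immediately implies that an optimal solution of (P4.1) can be taken in the claimed diagonal form, and the strict positivity of its diagonal entries then follows from the log-determinant constraint. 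The only new ingredient relative to Propositions~\ref{Pro:diagonal_optimal} and \ref{Pro:diagonal_optimal_P3} is the treatment of this constraint, while the objective is handled exactly as before.

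For the objective, I would first note that since $\bm{\Sigma}_c^2 = \operatorname{diag}(\zeta_1^2(\bm{H}_c),\ldots,\zeta_r^2(\bm{H}_c),0,\ldots,0)$ is diagonal and positive semidefinite, writing $\bm{\Xi} \triangleq \operatorname{diag}(\zeta_1(\bm{H}_c),\ldots,\zeta_r(\bm{H}_c),0,\ldots,0)$ and invoking the identity $\det(\bm{I}+\bm{A}\bm{B}) = \det(\bm{I}+\bm{B}\bm{A})$ yields
\begin{align}
\det\!\left(\bm{I}_{M} + \tfrac{1}{\sigma_c^2}\bm{\Sigma}_c^2 \tilde{\bm{Q}}\right) = \det\!\left(\bm{I}_{M} + \tfrac{1}{\sigma_c^2}\bm{\Xi}\,\tilde{\bm{Q}}\,\bm{\Xi}\right),
\end{align}
where the right-hand matrix is Hermitian positive definite with $k$-th diagonal entry $1 + \tfrac{1}{\sigma_c^2}\zeta_k^2(\bm{H}_c)[\tilde{\bm{Q}}]_{kk}$. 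Applying Hadamard's inequality \cite{horn2012matrix} then upper-bounds this determinant by $\prod_{k=1}^{M}\bigl(1 + \tfrac{1}{\sigma_c^2}\zeta_k^2(\bm{H}_c)[\tilde{\bm{Q}}]_{kk}\bigr)$, which is precisely the objective value achieved by $\bm{D}$; hence replacing $\tilde{\bm{Q}}$ by $\bm{D}$ never decreases the rate.

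For feasibility, I would check the two constraints separately. The power constraint is preserved since $\operatorname{tr}(\bm{D}) = \operatorname{tr}(\tilde{\bm{Q}}) \le P$. For the log-determinant constraint, applying Hadamard's inequality to the positive semidefinite matrix $\tilde{\bm{Q}}$ gives $\det(\tilde{\bm{Q}}) \le \prod_{k=1}^{M}[\tilde{\bm{Q}}]_{kk} = \det(\bm{D})$, so $\ln\det(\bm{D}) \ge \ln\det(\tilde{\bm{Q}}) \ge \tilde{\Gamma}_d$, and $\bm{D}$ is thus feasible. The step requiring the most care — and the one I would flag as the main obstacle — is recognizing that the \emph{two} invocations of Hadamard's inequality act in mutually consistent directions: for the objective it is an \emph{upper} bound met with equality by the diagonal matrix, while for the constraint it certifies that diagonalization only \emph{relaxes} the log-determinant requirement, so both favor $\bm{D}$; here the $k>r$ coordinates (for which $\zeta_k(\bm{H}_c)=0$) contribute trivially to the objective yet still enter the determinant.

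Combining these facts shows that the optimal value of (P4.1) is attained by a diagonal matrix, so I may write the optimal $\tilde{\bm{Q}}_4 = \operatorname{diag}(p_{4,1},\ldots,p_{4,M})$. Finally, strict positivity follows because the constraint $\ln\det(\tilde{\bm{Q}}_4) = \sum_{k=1}^{M}\ln p_{4,k} \ge \tilde{\Gamma}_d$ with finite $\tilde{\Gamma}_d$ is violated (the left-hand side taking value $-\infty$) whenever any $p_{4,k}=0$; hence every feasible, and in particular the optimal, diagonal solution must satisfy $p_{4,k}>0$ for all $k$, as claimed.
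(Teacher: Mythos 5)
Your proposal is correct and follows exactly the route the paper indicates (and omits the details of): Hadamard's inequality applied once to the objective, via $\det(\bm{I}+\bm{A}\bm{B})=\det(\bm{I}+\bm{B}\bm{A})$, and once to the $\ln\det$ constraint, with the trace constraint preserved trivially and strict positivity forced by finiteness of $\tilde{\Gamma}_d$. Nothing further is needed.
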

	\begin{proof}
		This proposition can be verified by applying the Hadamard's inequality \cite{horn2012matrix} to the objective function and the CRB constraint in  (\ref{equ:P_extended_det_opt_same}), for which the details are omitted.
		%		 then we can apply KKT, duality method to find the optimal solution, same as Trace-CRB criterion.
	\end{proof}
	
	Based on Proposition \ref{Pro:diagonal_optimal_P4}, (P4.1) is further simplified as
	%	\begin{subequations}
	\begin{align}\label{equ:P_extended_det_opt_diag}
		\text{(P4.2)}: \text{ } \max _{\{p_{4,k} \ge 0\}} \text{ } \sum_{k=1}^r \log_2 \left(1+\frac{\zeta_k^2(\bm{H}_c) p_{4,k}}{\sigma_c^2} \right), \quad
		%		\label{equ:CRLB_P_extended_det_opt_diag}
		\text { s.t. }  \sum_{k=1}^{M} \ln p_{4,k} \geq \tilde{\Gamma}_d, \quad
		%		\label{equ:Power_P_extended_det_opt_diag}
		\sum_{k=1}^M p_{4,k} \leq P.
		%		\label{equ:semi_P_extended_det_opt_diag}
		%		&\bm{Q} \succeq \bm{0}.
	\end{align}
	%	\end{subequations}
	%	Notice that to make the problem feasible, we have to ensure that $\tilde{\Gamma}_d \leq M \ln \frac{P}{M}$. Then, we find the optimal solution to (P4.2) as follows.
	\begin{proposition}\label{pro:prime_dual_relationship_det}
		\emph{The optimal solution to (P4.2) is obtained as
			\begin{align}
				p_{4,k}^{\text{opt}} = 
				\begin{cases}
					\frac{-(v^{\text{opt}}_4\frac{\sigma_c^2}{\zeta_k^2(\bm{H}_c)}-\mu^{\text{opt}}_4-\frac{1}{\ln 2})+\sqrt{(v^{\text{opt}}_4\frac{\sigma_c^2}{\zeta_k^2(\bm{H}_c)}-\mu^{\text{opt}}_4-\frac{1}{\ln 2})^2+4v^{\text{opt}}_4\mu^{\text{opt}}_4 \frac{\sigma_c^2}{\zeta_k^2(\bm{H}_c)} }}{2 v^{\text{opt}}_4}, \quad & \forall k \in \{1,\ldots, r\},	\\
					\mu^{\text{opt}}_4/v^{\text{opt}}_4,  \quad & \forall k \in \{r+1,\ldots, M\}. 
				\end{cases}	
			\end{align}
%			
%			\begin{align}
%				%				\label{equ:general_exp_optimal_det}
%				\nonumber
%				p_{4,k}^{\text{opt}} & =\frac{-(v^{\text{opt}}_4\frac{\sigma_c^2}{\zeta_k^2(\bm{H}_c)}-\mu^{\text{opt}}_4-\frac{1}{\ln 2})+\sqrt{(v^{\text{opt}}_4\frac{\sigma_c^2}{\zeta_k^2(\bm{H}_c)}-\mu^{\text{opt}}_4-\frac{1}{\ln 2})^2+4v^{\text{opt}}_4\mu^{\text{opt}}_4 \frac{\sigma_c^2}{\zeta_k^2(\bm{H}_c)} }}{2 v^{\text{opt}}_4},  \text{ } \forall k \in \{1,\ldots, r\},\\
%				\label{equ:vanish_exp_det}
%				p_{4,k}^{\text{opt}} & = \mu^{\text{opt}}_4/v^{\text{opt}}_4, \text{  } \forall k \in \{r+1,\ldots, M\}. 
%			\end{align}
			Here, $\mu^{\text{opt}}_4$ and $v^{\text{opt}}_4$ are the optimal dual variables associated with the CRB constraint and the power constraint in (\ref{equ:P_extended_det_opt_diag}), respectively.
	}\end{proposition}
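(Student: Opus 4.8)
The plan is to solve (P4.2) directly through its Karush--Kuhn--Tucker (KKT) conditions. First I would note that (P4.2) is convex: the objective is concave in $\{p_{4,k}\}$, the power constraint is linear, and the CRB constraint $\sum_{k=1}^{M}\ln p_{4,k}\geq\tilde{\Gamma}_d$ defines a convex feasible set since $\ln(\cdot)$ is concave. As Slater's condition is readily met, strong duality holds and the KKT conditions are both necessary and sufficient for global optimality. I would introduce dual variables $\mu_4\geq 0$ and $v_4\geq 0$ for the CRB and power constraints, respectively, and form the partial Lagrangian
\[
\mathcal{L}=\sum_{k=1}^{r}\log_2\!\left(1+\frac{\zeta_k^2(\bm{H}_c)p_{4,k}}{\sigma_c^2}\right)+\mu_4\!\left(\sum_{k=1}^{M}\ln p_{4,k}-\tilde{\Gamma}_d\right)+v_4\!\left(P-\sum_{k=1}^{M}p_{4,k}\right).
\]
By Proposition~\ref{Pro:diagonal_optimal_P4}, the optimal $p_{4,k}^{\text{opt}}$ are strictly positive, so the nonnegativity constraints are inactive and the stationarity condition $\partial\mathcal{L}/\partial p_{4,k}=0$ holds at the optimum for every $k$.

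Next I would split the stationarity condition into the two index ranges. For $k\in\{r+1,\ldots,M\}$ (the subchannels with $\zeta_k(\bm{H}_c)=0$ that carry no rate term), stationarity reduces to $\mu_4/p_{4,k}-v_4=0$, immediately yielding $p_{4,k}^{\text{opt}}=\mu_4^{\text{opt}}/v_4^{\text{opt}}$. For $k\in\{1,\ldots,r\}$, stationarity reads
\[
\frac{1}{\ln 2}\cdot\frac{1}{p_{4,k}+\sigma_c^2/\zeta_k^2(\bm{H}_c)}+\frac{\mu_4}{p_{4,k}}-v_4=0.
\]
Clearing the denominators by multiplying through by $p_{4,k}\big(p_{4,k}+\sigma_c^2/\zeta_k^2(\bm{H}_c)\big)$ turns this into a quadratic in $p_{4,k}$,
\[
v_4 p_{4,k}^2+\Big(v_4\tfrac{\sigma_c^2}{\zeta_k^2(\bm{H}_c)}-\mu_4-\tfrac{1}{\ln 2}\Big)p_{4,k}-\mu_4\tfrac{\sigma_c^2}{\zeta_k^2(\bm{H}_c)}=0,
\]
and the quadratic formula, with the positive root selected, produces exactly the stated expression for $p_{4,k}^{\text{opt}}$.

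The hard part will be justifying that the formulas are well defined and that the positive root is the right one. For the latter, I would observe that the leading coefficient $v_4>0$ while the constant term $-\mu_4\sigma_c^2/\zeta_k^2(\bm{H}_c)<0$, so the product of the two roots is negative; hence exactly one root is positive, which must be the optimizer since $p_{4,k}^{\text{opt}}>0$, and this is precisely the $+$-sign branch. For well-definedness I would argue that both dual variables are strictly positive: the power constraint must be active (otherwise scaling all $p_{4,k}$ up by a common factor exceeding one strictly increases the objective while keeping the CRB constraint feasible), giving $v_4>0$; and the CRB constraint must be active because the $k>r$ subchannels contribute nothing to the rate and would otherwise be driven toward zero, but $\ln p_{4,k}\to-\infty$ as $p_{4,k}\to 0^+$ makes this infeasible, forcing the constraint to bind and hence $\mu_4>0$. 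These positivity facts guarantee that $p_{4,k}^{\text{opt}}=\mu_4^{\text{opt}}/v_4^{\text{opt}}$ and the quadratic solution are both finite and admissible; the two scalars $\mu_4^{\text{opt}},v_4^{\text{opt}}$ are finally pinned down by enforcing the two active constraints, e.g. via a simple one- or two-dimensional search.
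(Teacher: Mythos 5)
Your proposal is correct and follows essentially the same route as the paper, which proves this proposition by the Lagrange duality method and explicitly defers the details to the analogous Trace-CRB derivation in Appendix~\ref{Proof:prime_dual_relationship} (partial Lagrangian, stationarity, solve the resulting polynomial in $p_{4,k}$, then determine the dual variables). Your derivation of the quadratic and selection of the positive root match the stated formula, and your added justifications (strict positivity of the optimal powers via Proposition~\ref{Pro:diagonal_optimal_P4}, activeness of both constraints, sign of the root product) are sound details the paper omits for brevity.
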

	\begin{proof}
		%		As problem (P4.2) is convex and satisfies the slater's condition, the strong duality holds between (P4.2) and its Lagrange dual problem \cite{boyd2004convex}. 
		The optimal solution to (P4.2) is obtained by the Lagrange duality method. The derivation is similar to that in Appendix \ref{Proof:prime_dual_relationship}, for which the details are omitted for brevity.
		%	See Appendix \ref{Proof:prime_dual_relationship}.
	\end{proof}
	
	Finally, with Propositions \ref{Pro:diagonal_optimal_P4} and \ref{pro:prime_dual_relationship_det}, the optimal solution to (P4) is obtained as $\bm{Q}_4^{\text{opt}} = \bm{V}_c \tilde{\bm{Q}}_4^{\text{opt}}\bm{V}_c^H$, where $\tilde{\bm{Q}}_4^{\text{opt}} = \operatorname{diag}(p_{4,1}^{\text{opt}},\ldots, p_{4,M}^{\text{opt}})$, with $\{p_{4,k}^{\text{opt}}\}$ given in Proposition \ref{pro:prime_dual_relationship_det}.
	
	\subsection{Optimal Solution Structures} \label{section:sol_structure_extended}
	
	To gain more insights, this subsection discusses the structure of the optimal transmit covariance solution $\bm{Q}_i^{\text{opt}}$'s for Scenarios 2-4 under Trace-CRB, MaxEig-CRB, and Det-CRB, respectively. In particular, we express $\bm{V}_c$ as $\bm{V}_c = [\bar{\bm{V}}_{c}, \hat{\bm{V}}_c]$, where $\bar{\bm{V}}_{c} \in \mathbb{C}^{M\times r}$ consists of the first $r$ right singular vectors of $\bm{H}_c$, and $\hat{\bm{V}}_{c}\in \mathbb{C}^{M\times (M-r)}$ consists of the other $M-r$ ones. It thus follows that $\bm{Q}_i^{\text{opt}}$ for Scenario $i \in \{2,3,4\}$ can be equivalently written as
	\begin{align}\label{equ:Optimal_sol_P1:eqv}
		\bm{Q}_i^{\text{opt}} = \bar{\bm{V}}_c \bar{\bm{Q}}_i^{\text{opt}}\bar{\bm{V}}_c^H + \hat{\bm{V}}_c \hat{\bm{Q}}_i^{\text{opt}}\hat{\bm{V}}_c^H,
	\end{align}
	where $\bar{\bm{Q}}_i^{\text{opt}} = \operatorname{diag}(p_{i,1}^{\text{opt}},\ldots, p_{i,r}^{\text{opt}})$ and $\hat{\bm{Q}}_i^{\text{opt}} = \operatorname{diag}(p_{i,r+1}^{\text{opt}},\ldots, p_{i,M}^{\text{opt}})$ with $\{p_{2,k}^{\text{opt}}\}_{k=1}^M$, $\{p_{3,k}^{\text{opt}}\}_{k=1}^M$, and $\{p_{4,k}^{\text{opt}}\}_{k=1}^M$ given in Propositions \ref{pro:prime_dual_relationship}, \ref{pro:prime_dual_relationship_eig}, and \ref{pro:prime_dual_relationship_det} for Scenarios 2, 3, and 4, respectively.

	It is interesting to observe from \eqref{equ:Optimal_sol_P1:eqv} that for each Scenario $i$, the transmit covariance $\bm{Q}_i^{\text{opt}}$ is separated into two parts, i.e., $\bar{\bm{V}}_c \bar{\bm{Q}}_i^{\text{opt}}\bar{\bm{V}}_c^H$ for both communication and sensing and $\hat{\bm{V}}_c \hat{\bm{Q}}_i^{\text{opt}}\hat{\bm{V}}_c^H$ for dedicated sensing only. 
	%	the range spaces of which correspond to $\mathcal{R}(\bm{H}_c^H)$  and  $\mathcal{R}(\bm{H}_c)$  
	Notice that the right singular matrix $\bm{V}_c = [\bar{\bm{V}}_{c}, \hat{\bm{V}}_c]$ diagonalizes $\bm{H}_c$ into $r$ parallel subchannels. It is thus clear that $\{p_{i,k}^{\text{opt}}\}_{k=1}^r$ corresponds to the optimized power allocation over the $r$ parallel subchannels for both communication and sensing, and $\{p_{i,k}^{\text{opt}}\}_{k=r+1}^M$ corresponds to that over the other orthogonal $M-r$ dedicated sensing subchannels. 
	
	%When $\bm{H}_c$ is rank deficient, i.e., $r<M$, we name the first $r$ subchannels as ISAC subchannels while the other $M-r$ ones as dedicated sensing channels. The following lemma shows the power allocation in the rank deficiency case.
	\begin{proposition}\label{lemma:Lemma_order_p}
		\emph{The optimal power allocations under Trace-CRB, MaxEig-CRB, and Det-CRB all satisfy $p_{i,1}^{\text{opt}} \geq ... \geq p_{i,r}^{\text{opt}} \geq p_{i,r+1}^{\text{opt}} = ... = p_{i,M}^{\text{opt}} > 0, \forall i \in \{2,3,4\}$.} 
	\end{proposition}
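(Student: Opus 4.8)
The plan is to establish the three pieces of the claim --- equality of the last $M-r$ powers, the decreasing ordering among the first $r$, and the gap across the index $k=r$ --- in a unified way for Scenarios 2 and 4 directly from the KKT stationarity conditions that underlie Propositions \ref{pro:prime_dual_relationship} and \ref{pro:prime_dual_relationship_det}, and by direct inspection of the closed form in Proposition \ref{pro:prime_dual_relationship_eig} for Scenario 3. Throughout I would abbreviate $h_k \triangleq \zeta_k^2(\bm{H}_c)$, so that $h_1 \geq \cdots \geq h_r > 0$ while $h_k = 0$ for $k > r$, and isolate the communication-gradient term $\phi(p,h) \triangleq \frac{1}{\ln 2}\cdot\frac{h}{\sigma_c^2 + h p}$, which is exactly the derivative of the summand $\log_2(1+h_k p_k/\sigma_c^2)$ appearing in (P2.3), (P3.2), and (P4.2). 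A one-line computation gives $\partial\phi/\partial p < 0$ and $\partial\phi/\partial h = \frac{1}{\ln 2}\cdot\frac{\sigma_c^2}{(\sigma_c^2+hp)^2} > 0$, i.e., $\phi$ is strictly decreasing in the power and strictly increasing in the channel gain. These two monotonicities are the engine of the whole argument.

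First I would treat Scenarios 2 and 4 together. Writing the interior stationarity condition as $\phi(p_k,h_k) + \psi_i(p_k) = v^{\text{opt}}_i$, where the CRB-penalty derivative is $\psi_2(p) = \mu^{\text{opt}}_2/p^2$ for Trace-CRB and $\psi_4(p) = \mu^{\text{opt}}_4/p$ for Det-CRB, I note that both $\psi_i$ are strictly decreasing and positive on $(0,\infty)$ whenever $\mu^{\text{opt}}_i > 0$. For $k > r$ we have $h_k = 0$, so $\phi(p_k,0) = 0$ and the condition collapses to $\psi_i(p_k) = v^{\text{opt}}_i$; by strict monotonicity this has a single root $p_0 \triangleq \psi_i^{-1}(v^{\text{opt}}_i)$ independent of $k$, which proves $p_{i,r+1}^{\text{opt}} = \cdots = p_{i,M}^{\text{opt}} = p_0$ (and matches $\sqrt{\mu^{\text{opt}}_2/v^{\text{opt}}_2}$ and $\mu^{\text{opt}}_4/v^{\text{opt}}_4$ in the respective propositions). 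For the ordering within $k \leq r$, I would set $G_i(p,h) \triangleq \phi(p,h) + \psi_i(p)$, which is strictly decreasing in $p$ and strictly increasing in $h$; hence the level set $G_i(p,h) = v^{\text{opt}}_i$ defines its root $p$ as a strictly increasing function of $h$ (seen by a one-line contradiction: if $h_j > h_k$ but $p_j < p_k$, then $G_i(p_j,h_j) > G_i(p_k,h_k)$, contradicting that both equal $v^{\text{opt}}_i$). Since $h_1 \geq \cdots \geq h_r$, this gives $p_{i,1}^{\text{opt}} \geq \cdots \geq p_{i,r}^{\text{opt}}$.

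To cross the boundary at $k=r$, I would exploit the sign of $\phi$: for $k \leq r$ we have $h_k > 0$, so $\phi(p_k,h_k) > 0$ and stationarity forces $\psi_i(p_k) = v^{\text{opt}}_i - \phi(p_k,h_k) < v^{\text{opt}}_i = \psi_i(p_0)$; strict monotonicity of $\psi_i$ then yields $p_k > p_0$, and in particular $p_{i,r}^{\text{opt}} \geq p_0 = p_{i,r+1}^{\text{opt}}$. Positivity of every power is already supplied by Propositions \ref{Pro:diagonal_optimal} and \ref{Pro:diagonal_optimal_P4} and is reconfirmed here by $p_0 > 0$. For Scenario 3 I would argue directly from Proposition \ref{pro:prime_dual_relationship_eig}: for $k \leq r$ the quantity $\frac{1}{v^{\text{opt}}_3\ln 2} - \frac{\sigma_c^2}{h_k}$ is manifestly increasing in $h_k$, so its pointwise maximum with the constant $\tilde{\Gamma}_e$ is increasing in $h_k$, giving $p_{3,1}^{\text{opt}} \geq \cdots \geq p_{3,r}^{\text{opt}}$; the same maximum guarantees $p_{3,k}^{\text{opt}} \geq \tilde{\Gamma}_e = p_{3,r+1}^{\text{opt}} = \cdots = p_{3,M}^{\text{opt}} > 0$, which closes all three scenarios at once.

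I expect the only delicate point to be justifying that the relevant dual variables are strictly positive, so that each $\psi_i$ is genuinely strictly decreasing and the monotone level-set step is valid. I would dispose of this through complementary slackness together with the observation that raising the transmit power strictly increases the communication rate, which forces the power constraint to be active (hence $v^{\text{opt}}_i > 0$); the finiteness of $p_0$ obtained from the $k>r$ stationarity then rules out $\mu^{\text{opt}}_i = 0$, giving $\mu^{\text{opt}}_i > 0$ as required.
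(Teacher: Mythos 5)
Your proposal is correct and follows essentially the same route as the paper's proof: both read the equality of the last $M-r$ powers off the dedicated-sensing stationarity/closed-form expressions and obtain the ordering among the first $r$ subchannels (and across the index $k=r$) from the monotonicity of the KKT stationarity condition in $p_k$ and in $\zeta_k^2(\bm{H}_c)$, which the paper carries out via explicit contradictions for Scenario 2 and declares analogous for Scenarios 3 and 4. Your write-up merely systematizes that same argument with the auxiliary functions $\phi$ and $\psi_i$ and, as a bonus, supplies the justification that $v_i^{\text{opt}}>0$ and $\mu_i^{\text{opt}}>0$, which the paper leaves implicit.
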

	\begin{proof}
		See Appendix \ref{Proof:lemma_power_allocation}.		%See Appendix \ref{Proof:diagonal_optimal}.
		%		See Appendix \ref{Proof:lemma_power_allocation}.
	\end{proof}
	Proposition \ref{lemma:Lemma_order_p} shows that for any Scenario $i \in \{2,3,4\},$ the power allocations $\{p_{i,k}^{\text{opt}}\}_{k=1}^r$ are monotonically increasing with respect to the subchannel gains $\{\zeta_k^2(\bm{H}_c)\}_{k=1}^r$, which is similar as the conventional water-filling power allocation in \eqref{eq:WF} for rate maximization. By contrast, the power allocations $\{p_{i,k}^{\text{opt}}\}_{k=r+1}^M$ are constant over dedicated sensing subchannels, similarly as that for CRB minimization (see Proposition \ref{Pro:three_CRB_min}). As a result, the optimal power allocation for ISAC unifies the conventional power allocations for communication only and sensing only, respectively.
	
	%However, all the subchannels will be allocated with some power regardless of their subchannel gains and the power allocation to sensing subchannels will all be equal, which is quite similar to equal power allocation favoured by sensing task. %The following remark further discusses the case when $\text{rank}(\bm{H}_c) = r = M$, i.e., $\bm{H}_c$ is full column rank.
	%\begin{remark} \label{remark:Mgeqr_strategy}
	%	\emph{When $\text{rank}(\bm{H}_c) = r = M$, the power allocation  will follow the similar strategy as specified in Lemma \ref{lemma:Lemma_order_p}, i.e., $p_1^{\text{opt}} \geq ... \geq p_M^{\text{opt}} > 0 $ and can be proved similarly.}
	%\end{remark}
	%	\begin{remark} \label{remark:P_infinite_discussion} % remove this part in the thesis proposal!
	%		\emph{It is worth discussing the C-R region in the special case when the communication channel is of full column rank ($r = M$) and the transmit power is sufficiently large (i.e., $P \to \infty$). In this case, it is easy to show that $\bm{Q}_c^* = \bm{Q}_s^* = \frac{P}{M} \bm{I}_M$ and the two corner points become identical (i.e., $(\text{CRB}_{\text{min}},R_S) = (\text{CRB}_C, R_{\text{max}})$). {As a result}, the C-R region can be obtained as a box without solving problem (P2)-(P4), which is denoted by $\mathcal{C}_i^{\text{C-R}}(P) = \{(\bar{\Gamma},\bar{R}): \bar{\Gamma} \ge \text{CRB}_{\text{min}},\bar{R} \le R_{\text{max}}\}$.}
	%	\end{remark}
	
	Finally, we discuss the optimal power allocation   when $P \to \infty$, where the equal power allocation is also employed over the subchannels for both communication and sensing.
	\begin{proposition}\label{pro:P_infinite}
		\emph{When $P \to \infty$, the optimal power allocations for problems (P2.3), (P3.2), and (P4.2) with Trace-CRB, Max-Eig-CRB, and Det-CRB are given by
			\begin{align}\label{equ:P_infinite}
				p_{2,k}^{\text{opt}} \rightarrow \begin{cases} \frac{1}{r} (P-\frac{(M-r)^2}{\tilde{\Gamma}_2}), & 1 \leq k \leq r \\ \frac{M-r}{\tilde{\Gamma}_2}, & r+1 \leq k \leq M \end{cases},
			\end{align}
			\begin{align}\label{equ:P_infinite_eig}
				p_{3,k}^{\text{opt}} = \begin{cases} \frac{1}{r} (P-(M-r)\tilde{\Gamma}_e), & 1 \leq k \leq r \\ \tilde{\Gamma}_e, & r+1 \leq k \leq M \end{cases},
			\end{align}
			and \begin{align}\label{equ:P_infinite_det}
				p_{4,k}^{\text{opt}} \rightarrow \begin{cases} \frac{P}{r}, & 1 \leq k \leq r \\ 0, & r+1 \leq k \leq M \end{cases}.
		\end{align}}
		%			in which the transmit power is split into two parts over communication and dedicated sensing subchannels, with equal power allocation within each part.
		% is intuitively to see that given $\tilde{\Gamma}$, the sensing subchannels will be allocated with equal power to meet the CRB constraint according to Lemma \ref{lemma:Lemma_order_p}, and the rest of the power will be equally allocated to the ISAC channels to increase the data rate. Mathematically,		
	\end{proposition}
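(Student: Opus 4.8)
The plan is to handle the three cases within one framework by invoking the closed-form optimal allocations of Propositions~\ref{pro:prime_dual_relationship}, \ref{pro:prime_dual_relationship_eig}, and \ref{pro:prime_dual_relationship_det}, together with the ordering $p_{i,1}^{\text{opt}}\ge\cdots\ge p_{i,r}^{\text{opt}}\ge p_{i,r+1}^{\text{opt}}=\cdots=p_{i,M}^{\text{opt}}>0$ from Proposition~\ref{lemma:Lemma_order_p}. Throughout I split the $M$ subchannels into the first $r$ rate-bearing (communication-and-sensing) ones and the last $M-r$ dedicated-sensing ones, and I set $P_{1,i}\triangleq\sum_{k=1}^r p_{i,k}^{\text{opt}}$ and $P_{2,i}\triangleq\sum_{k=r+1}^M p_{i,k}^{\text{opt}}=P-P_{1,i}$, the power constraint being active for all three problems since the objective is strictly increasing in each $p_{i,k}$. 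The first step is to argue that, as $P\to\infty$, every rate-bearing power diverges and hence the shadow price $v_i^{\text{opt}}$ of the power constraint vanishes; by concavity of the log-rate the marginal value of power tends to $0$, so $v_i^{\text{opt}}\to 0$, which is what ultimately forces the equal-power behaviour.

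Second, I would pin down the limiting power consumed by the dedicated-sensing subchannels, which differs across scenarios because it is governed by the respective CRB constraint. For MaxEig-CRB the dedicated powers sit exactly at the floor, $p_{3,k}^{\text{opt}}=\tilde{\Gamma}_e$ by Proposition~\ref{pro:prime_dual_relationship_eig}, so $P_{2,3}=(M-r)\tilde{\Gamma}_e$ holds for all sufficiently large $P$, which is why \eqref{equ:P_infinite_eig} is stated with ``$=$''. For Trace-CRB, the contribution $\sum_{k\le r}1/p_{2,k}^{\text{opt}}$ of the rate-bearing subchannels tends to $0$, so the constraint $\sum_{k=1}^M 1/p_{2,k}\le\tilde{\Gamma}_2$ binds on the $M-r$ equal dedicated powers $p_{2,k}^{\text{opt}}=\sqrt{\mu_2^{\text{opt}}/v_2^{\text{opt}}}$; minimizing their total subject to $\sum_{k>r}1/p_{2,k}=\tilde{\Gamma}_2$, via the AM--HM inequality, forces the common value to $(M-r)/\tilde{\Gamma}_2$, whence $P_{2,2}\to(M-r)^2/\tilde{\Gamma}_2$. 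For Det-CRB, the divergence $\sum_{k\le r}\ln p_{4,k}^{\text{opt}}\to\infty$ means the binding constraint $\sum_k\ln p_{4,k}=\tilde{\Gamma}_d$ is met with vanishing dedicated powers; since $p_{4,k}^{\text{opt}}=\mu_4^{\text{opt}}/v_4^{\text{opt}}$ for $k>r$, optimality drives $\mu_4^{\text{opt}}/v_4^{\text{opt}}\to 0$ and hence $P_{2,4}\to 0$.

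Third, with $P_{1,i}=P-P_{2,i}$ determined, I would show the rate-bearing powers become asymptotically equal. Writing the stationarity condition for each $k\le r$ and using $p_{i,k}^{\text{opt}}\to\infty$ and $v_i^{\text{opt}}\to 0$, the closed forms of Propositions~\ref{pro:prime_dual_relationship}--\ref{pro:prime_dual_relationship_det} all collapse to the classical water-filling relation $p_{i,k}^{\text{opt}}=\tfrac{1}{v_i^{\text{opt}}\ln 2}-\tfrac{\sigma_c^2}{\zeta_k^2(\bm{H}_c)}+o(1)$, i.e.\ a common water level $\tfrac{1}{v_i^{\text{opt}}\ln 2}\to\infty$ perturbed by the bounded offsets $\sigma_c^2/\zeta_k^2(\bm{H}_c)$. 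Hence $p_{i,k}^{\text{opt}}=P_{1,i}/r+O(1)$ uniformly in $k\le r$, and normalizing by $P$ gives the asymptotically equal allocation $p_{i,k}^{\text{opt}}\to P_{1,i}/r$, which is exactly \eqref{equ:P_infinite}, \eqref{equ:P_infinite_eig}, and \eqref{equ:P_infinite_det} upon substituting the three values of $P_{1,i}$ found above.

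The main obstacle will be making the coupled dual-variable asymptotics rigorous: one must simultaneously control $\mu_i^{\text{opt}}$ and $v_i^{\text{opt}}$ and \emph{certify} which of the CRB and power constraints bind in the limit, rather than merely assume it. In particular, for Trace-CRB one has to verify that $\sqrt{\mu_2^{\text{opt}}/v_2^{\text{opt}}}$ stays bounded (so the dedicated powers do not themselves diverge) while $1/(v_2^{\text{opt}}\ln 2)$ diverges, and for Det-CRB that $\mu_4^{\text{opt}}/v_4^{\text{opt}}\to 0$ at the correct rate. A secondary subtlety worth stating explicitly is that the ``equal power'' conclusion is a leading-order statement: the bounded water-filling offsets $\sigma_c^2/\zeta_k^2(\bm{H}_c)$ do not vanish, but they are negligible against the diverging water level, so that $p_{i,k}^{\text{opt}}/P\to 1/r$ for $k\le r$.
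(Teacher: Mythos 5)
Your argument is correct in outline, but it takes a genuinely different route from the paper. The paper's proof (given explicitly only for Trace-CRB, with the other two deferred as ``similar'') is a primal sandwich argument: for $r<M$ it observes that the CRB constraint alone forces $p_s\ge \frac{M-r}{\tilde{\Gamma}_2}$ on the (provably equal) dedicated-sensing powers, fixes $p_s$ at that floor, \emph{drops} the CRB constraint to obtain a relaxed rate-maximization problem with budget $P-\frac{(M-r)^2}{\tilde{\Gamma}_2}$ whose optimal value upper-bounds that of (P2.3), notes that equal power is asymptotically optimal for the relaxation, and then checks that the resulting allocation is feasible for (P2.3) and attains the upper bound; the case $r=M$ is treated separately by noting that equal power is simultaneously the water-filling limit and the CRB minimizer. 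You instead work on the dual side: $v_i^{\text{opt}}\to 0$, hence every rate-bearing power diverges, hence the CRB constraint's burden falls entirely on the dedicated subchannels (whose limits you read off from the closed forms of Propositions~\ref{pro:prime_dual_relationship}--\ref{pro:prime_dual_relationship_det}), and the rate-bearing powers equalize because the diverging water level swamps the bounded offsets $\sigma_c^2/\zeta_k^2(\bm{H}_c)$. What the paper's route buys is that it certifies optimality of the limiting allocation without ever controlling the dual variables; what yours buys is a single uniform treatment of all three metrics and an explanation of \emph{why} equalization occurs, at the cost of exactly the coupled asymptotics of $(\mu_i^{\text{opt}},v_i^{\text{opt}})$ that you flag but do not fully execute.

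Those flagged steps are fillable, so I would not call this a gap, but two points deserve tightening. First, your justification that the power constraint is active (``the objective is strictly increasing in each $p_{i,k}$'') is false for $k>r$, where the rate does not depend on $p_{i,k}$ at all; the correct argument is that a slack power constraint would allow a feasible strict increase of $p_{i,1}$. Second, the claims that the CRB constraints bind (needed for your Trace- and Det-CRB limits) follow from complementary slackness: if $\mu_i^{\text{opt}}=0$ then the stationarity condition for $k>r$ forces $p_{i,k}^{\text{opt}}=0$, which makes $\operatorname{tr}(\tilde{\bm{Q}}^{-1})$ (resp.\ $-\ln\det\tilde{\bm{Q}}$) infinite and violates feasibility whenever $r<M$; stating this removes the ``merely assume it'' caveat you raise yourself.
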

	\begin{proof}
		We present the proof of $\{p_{2,k}^{\text{opt}}\}$ for Trace-CRB in Appendix \ref{Proof:Pro_P_infinite}. The other two cases can be verified similarly and thus are omitted.	%See Appendix \ref{Proof:diagonal_optimal}.
		%		See Appendix \ref{Proof:lemma_power_allocation}.
	\end{proof}

	\section{Numerical Results}\label{Section_numerical}
	
	This section provides numerical results to validate the performances of the proposed designs in the cases with point and extended targets. In the simulation, the BS-Tx, the BS-Rx, and the CU are each equipped with a ULA with half-wavelength spacing between consecutive antennas. We consider Rician fading for the communication channel, i.e., $\bm{H}_c = \sqrt{\frac{K_c}{K_c+1}} \bm{H}_c^{\text{los}} + \sqrt{\frac{1}{K_c+1}} \bm{H}_c^w$, where $\bm{H}_c^w$ is normalized to be a CSCG random matrix with zero mean and unit variance for each element, and $\bm{H}_c^{\text{los}} = \bm{a}_r^c(\theta_r^c) {\bm{a}_t^c}^T(\theta_t^c)$. Here, $\bm{a}_r^c(\theta_r^c)$ and $\bm{a}_t^c(\theta_t^c)$ denote the steering vectors at the CU receiver and the BS-Tx, $\theta_r^c$ and  $\theta_t^c$ denote the AoA at the CU and the AoD at the BS-Tx, respectively, where we set $\theta_r^c = \theta_t^c = \frac{\pi}{6}$. Furthermore, the noise power $\sigma_c^2$ at the CU and $\sigma_{s}^2$ at the BS-Rx are normalized to be unity, 
	%	the length of symbols in CPI is $L = 200$, 
	and the number of antennas at the BS-Rx is $N_s = 12$.
	
	\subsection{Point Target Case}
	
	This subsection considers the point target case. For comparison, we consider the time switching scheme as a benchmark. In this scheme, each CPI with duration $L$ is divided into two parts with durations $L_1$ and $L_2$, in which the BS adopts transmit covariance $\bm{Q}_c^*$ and $\bm{Q}_{s,1}^*$ for rate maximization and CRB minimization, respectively, where $L_1 + L_2 = L$.
	%	Suppose that the total length of the transmission is $L$ and is divided into two parts with durations $L_1$ and $L_2$, respectively, where $L_1 + L_2 = L$. 
	Define $\bm{Q}_{\text{ts}} = \frac{L_1}{L} \bm{Q}_c^* + \frac{L_2}{L} \bm{Q}_{s,1}^*$. We then have the resulting estimation CRB as $\text{CRB}_1(\bm{Q}_{\text{ts}})$ and the resulting communication rate as $R_{\text{ts}} = \frac{L_1}{L} R_{\text{max}} + \frac{L_2}{L} R_{1,S}$. By adjusting $L_1$ and $L_2$, we can get different boundary points to balance the tradeoff between the estimation CRB and the communication rate.  
	\begin{figure}[htb]
		\centering
		\setlength{\abovecaptionskip}{+4mm}
		\setlength{\belowcaptionskip}{+1mm}
		\subfigure[C-R region ]{ \label{fig:point_region_tlessthanM}
			\includegraphics[width=3.1in]{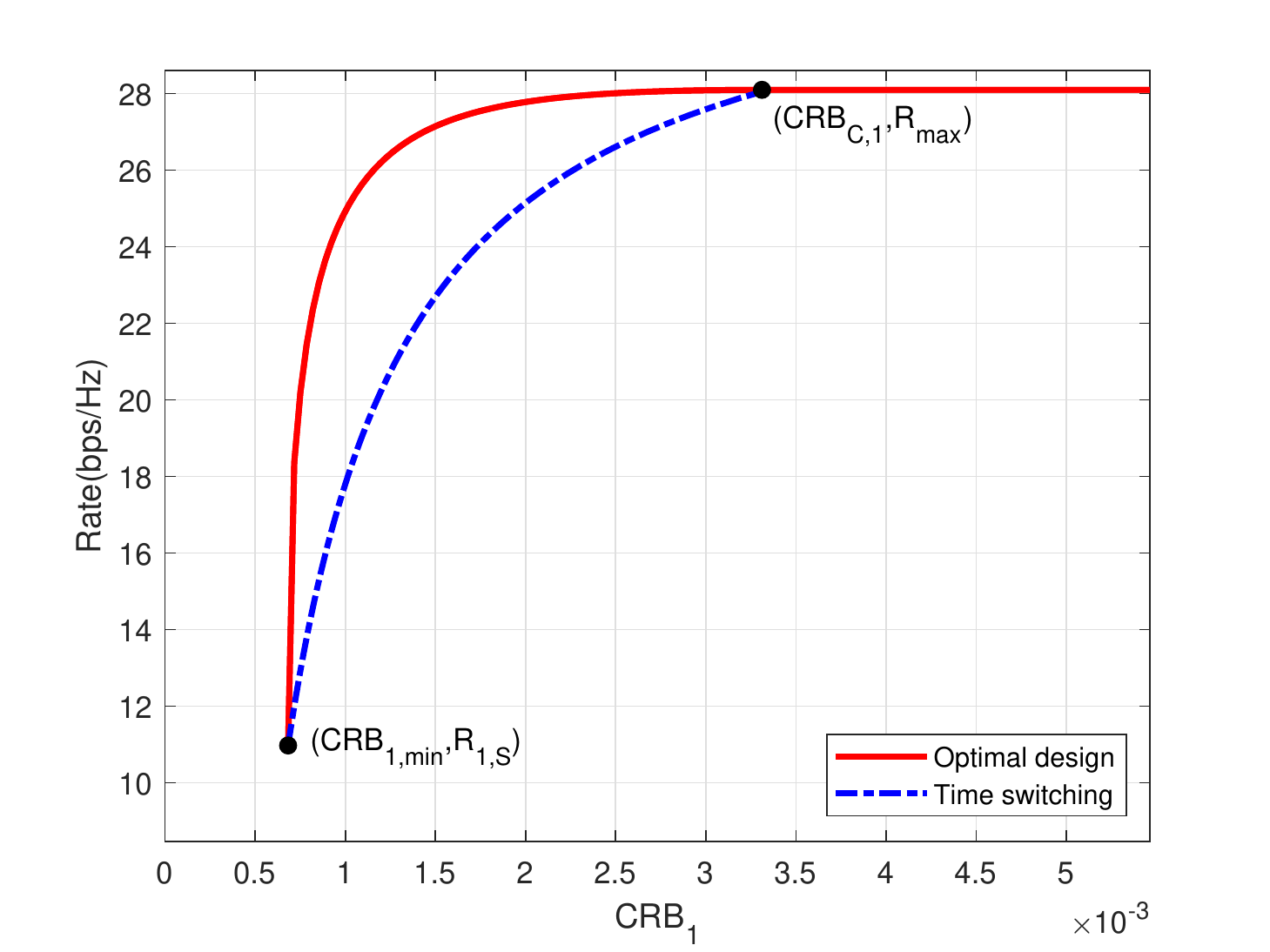}}
		\subfigure[Rate versus SNR with $\Gamma_1 = 0.01$]{ \label{fig:rate_vs_snr_point}
			\includegraphics[width=3.1in]{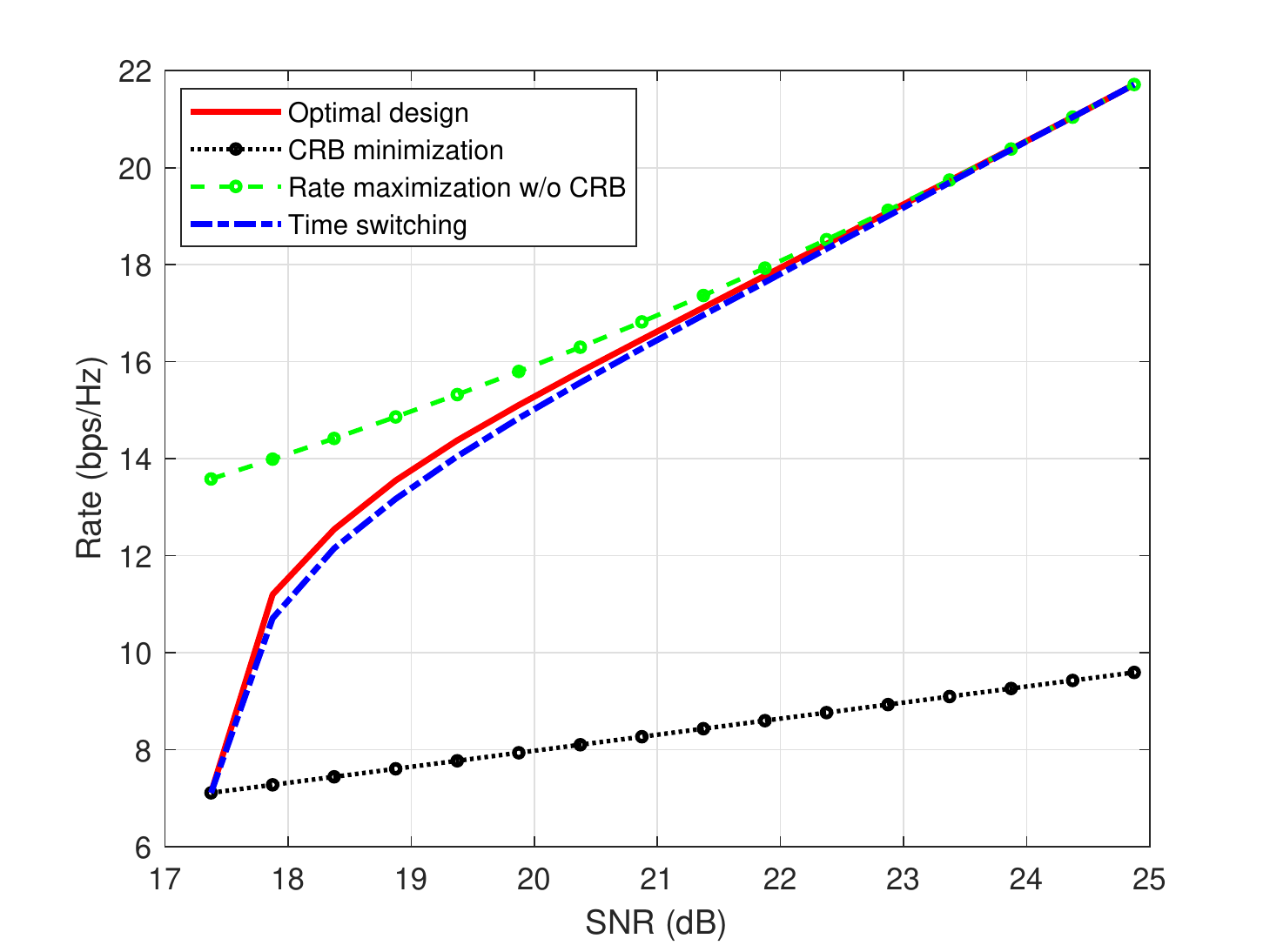}}
		%	\caption{Average matching error under five cases assuming Rayleigh fading channel/ LOS channel: (a) two-stage design without $\mv{R}_d$, (b) joint design without $\mv{R}_d$, (c) joint design with Type-I receiver, (d) joint design with Type-II receiver, (e) joint design with only $\mv{R}_d$ with respect to different $\Gamma$.}
		\caption{Point target case with $M = 8 < N_s$, $r = N_c = 6$, and the AoA/AoD of the point target $\theta = -0.2803 \pi$. }
		\label{fig:point_sim}
	\end{figure}
	
	%	\begin{figure}[htb]
	%		\centering
	%		%	\epsfxsize=1\linewidth
	%		\setlength{\abovecaptionskip}{3mm}
	%		\setlength{\belowcaptionskip}{-4mm}
	%		\includegraphics[width=2.8in]{}
	%		\caption{ } \label{fig:point_region_tlessthanM}
	%		%	\textbf{[Check the template, the caption should start with Fig.]}
	%	\end{figure}
	In the simulation, we set the number of antennas at the BS-Tx $M = 8$, the number of antennas at the CU $N_c = 6$, and the angle of the point target $\theta = -0.2803 \pi$. We set the coefficient $\alpha = 10^{-3}$, $P = 800$, and the Rician factor as $K_c = 100$. Fig. \ref{fig:point_region_tlessthanM} shows the obtained C-R-region boundary. It is observed that the optimal design outperforms the time switching scheme. Fig. \ref{fig:rate_vs_snr_point} shows the rate versus the signal-to-noise ratio (SNR) (i.e., $\frac{P}{\sigma_c^2}$) with the CRB threshold being $\text{CRB}_1 \leq \Gamma_1 = 0.01$, based on which problem (P1) is only feasible when the SNR becomes higher than 17.3 dB.
	We also consider the CRB minimization design and the rate maximization design as the rate performance lower bound and upper bound, respectively. When the SNR is close to 17.3 dB, the optimal design and the time switching scheme are observed to perform similar as the rate lower bound by the CRB minimization design. When the SNR becomes high, the optimal design and the time switching scheme are observed to perform close to the rate upper bound by rate maximization. This is due to the fact that the CRB constraint may become inactive in this case, and thus both schemes become identical to the rate maximization.
	%	 given in Proposition \ref{Pro:Q_min_CRB}; under high SNR regime, the two designs merge with the conventional water-filling design without considering the CRB constraint, since under high SNR regime, the rate-maximization covariance matrix already meets the constraint $\text{CRB} \leq \Gamma_1$.
	%	\textcolor{blue}{Explain the time switching approach for all the four cases..}

	\subsection{Extended Target Case}
	This subsection evaluates the performance of our proposed designs in the extended target case, as compared to the following benchmark schemes.
	\begin{itemize}
		\item {\bf Time switching}: For each Scenario $i \in \{2,3,4\}$, the BS time switches between the two transmit covariances $\bm{Q}_c^*$ and $\bm{Q}_{s,i}^*$ for rate maximization and CRB minimization, respectively, similarly as that in the previous subsection. Notice that this scheme is only applicable when $\bm{Q}_c^*$ is of full rank, since otherwise $\text{CRB}_{C,i} = \text{CRB}(\bm{Q}_c^*) \rightarrow \infty, \forall i \in \{2,3,4\}$ follows according to Remark \ref{remark:finite_SCRB}. 
		%		Even when $r=M$, this design is only applicable when $P$ is sufficiently large ($P > P_0$ according to Remark \ref{remark:finite_SCRB}).
		%		When $r = M$ and $P$ is sufficiently large, which corresponds to the opposite case in Remark \ref{remark:finite_SCRB} where $\bm{Q}_c^*$ is full rank, Time-switching design simply switches between $\bm{Q}_c^*$ and $\bm{Q}_s^*$ in different time slots.
		\item {\bf Power splitting with equal power allocation (EP)}: The BS sets the transmit covariance as $\bm{Q}^{\text{EP}} = \bm{V}_c \tilde{\bm{Q}}^{\text{EP}}\bm{V}_c^H$, in which $\tilde{\bm{Q}}^{\text{EP}} = \operatorname{diag}(p_1^{\text{EP}},\ldots, p_M^{\text{EP}})$ denotes the power allocation. The BS splits the transmit power $P$ into two parts, $\beta P$ for the first $r$ ISAC subchannels and $(1-\beta)P$ for the $M-r$ dedicated sensing subchannels, with $0\le \beta\le 1$ denoting the power splitting factor that is a variable to be optimized. Following the equal power allocation, we have $p_1^{\text{EP}} = \ldots = p_r^{\text{EP}} =  \frac{\beta P}{r}$ and $p_{r+1}^{\text{EP}} = \ldots = p_M^{\text{EP}} =  \frac{(1-\beta) P}{M-r}$. Notice that if $r = M$, then we set $\beta = 1$.
		\item {\bf Power splitting with strongest eigenmode transmission (SEM)}: The BS sets $\bm{Q}^{\text{SEM}} = \bm{V}_c \tilde{\bm{Q}}^{\text{SEM}}\bm{V}_c^H$, in which $\tilde{\bm{Q}}^{\text{SEM}} = \operatorname{diag}(p_1^{\text{SEM}},\ldots, p_M^{\text{SEM}})$. The BS splits the transmit power $P$ into two parts, $\beta P$ for the the dominant ISAC subchannel and $(1-\beta)P$ for the remaining $M-1$ subchannels, i.e., $p_1^{\text{SEM}} = \beta P$ and $p_{2}^{\text{SEM}} = \ldots = p_M^{\text{EP}} =  \frac{(1-\beta) P}{M-1}$, with $0 \leq \beta \leq 1$ to be optimized.
	\end{itemize}
	
	%	\subsubsection{Trace-CRB Criterion}
	
	First, we consider the scenario where $M = 8$, $N_c = 6$, $K_c=100$, and $P = 800$. In this case, we have $r < M$, such that $\bm{Q}^*_c$ is rank-deficient and $\text{CRB}_{C,i} \rightarrow \infty, \forall i \in \{2,3,4\}$, for which the time switching is not applicable. Figs. \ref{fig:region_tlessthanM}, \ref{fig:extended_region_tlessthanM_eig}, and \ref{fig:extended_region_tlessthanM_det} show the resultant C-R regions with Trace-CRB, MaxEig-CRB, and Det-CRB, respectively.
	It is observed that for all the three sensing performance measures, the C-R-region boundary curve achieved by the optimal design outperforms those by the power splitting designs with equal power allocation and strongest eigenmode transmission. It is also observed that when the CRB is low (or the CRB constraint becomes tight), the curves obtained from the two suboptimal designs approach the optimal C-R-region boundary curve.
	%	 since those designs can approach equal power allocation, which 
	Furthermore, when the CRB is high (or the CRB constraint becomes relaxed), the C-R-region boundary curve is observed to approach the capacity without sensing (i.e., $R_{\text{max}}$). This is consistent with the result in Remark \ref{remark:finite_SCRB}. 
	
	\begin{figure}[htb]
		\centering
		\setlength{\abovecaptionskip}{+4mm}
		\setlength{\belowcaptionskip}{+1mm}
		\subfigure[C-R region with Trace-CRB.]{ \label{fig:region_tlessthanM}
			\includegraphics[width=2.0in]{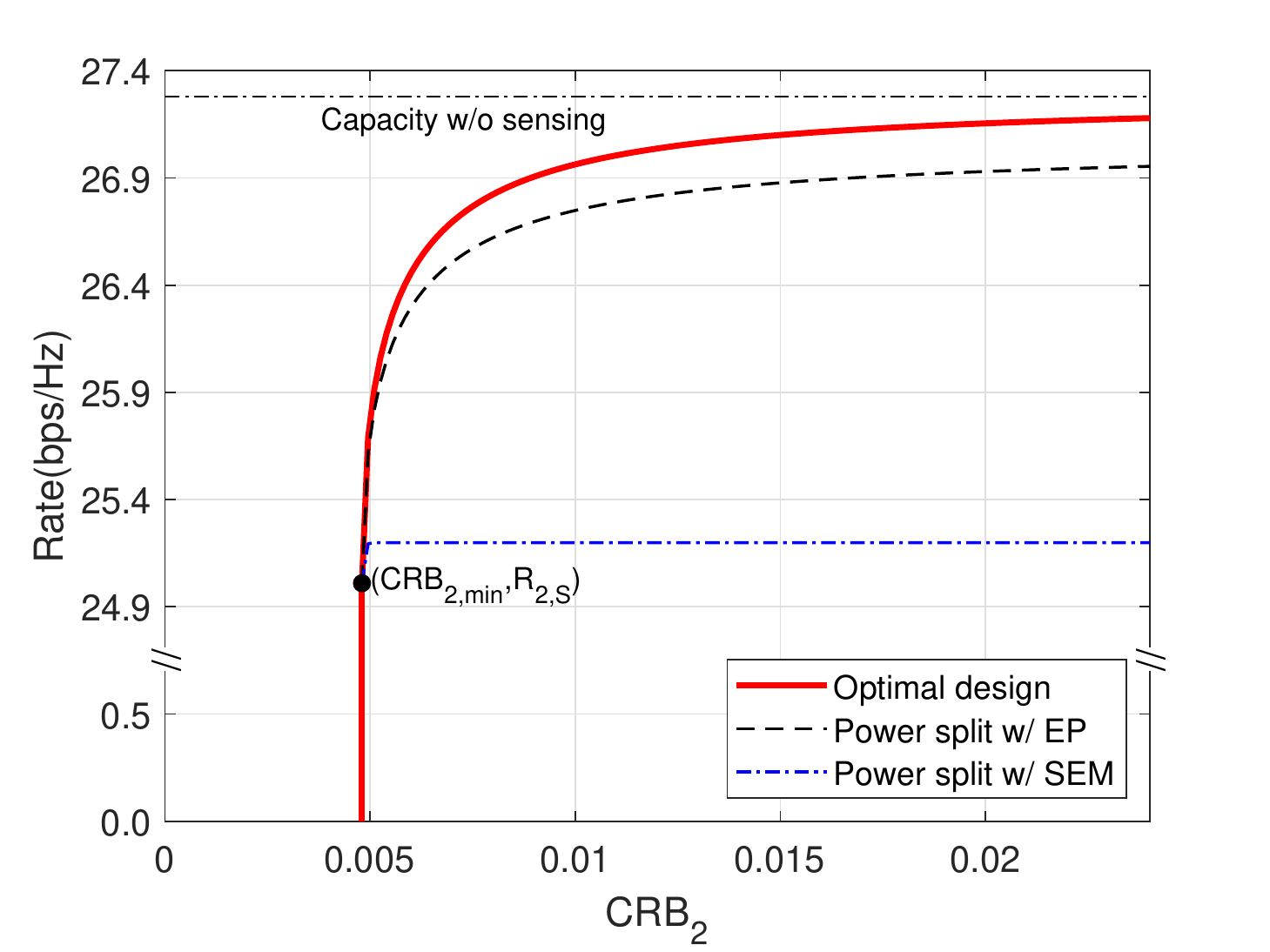}}
		\subfigure[C-R region with MaxEig-CRB.]{ \label{fig:extended_region_tlessthanM_eig}
			\includegraphics[width=2.0in]{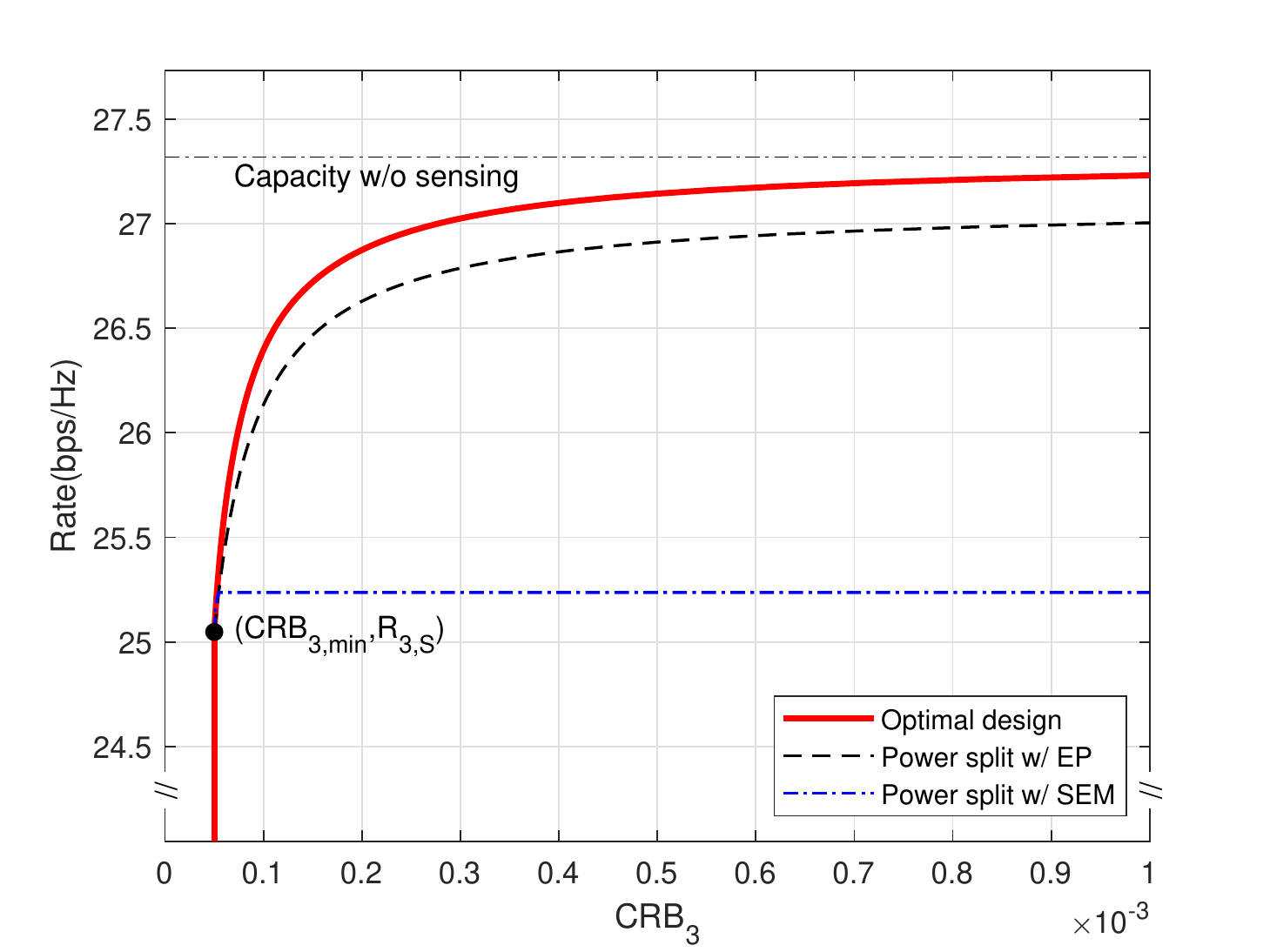}}
		\subfigure[C-R region with Det-CRB.]{ \label{fig:extended_region_tlessthanM_det}
			\includegraphics[width=2.0in]{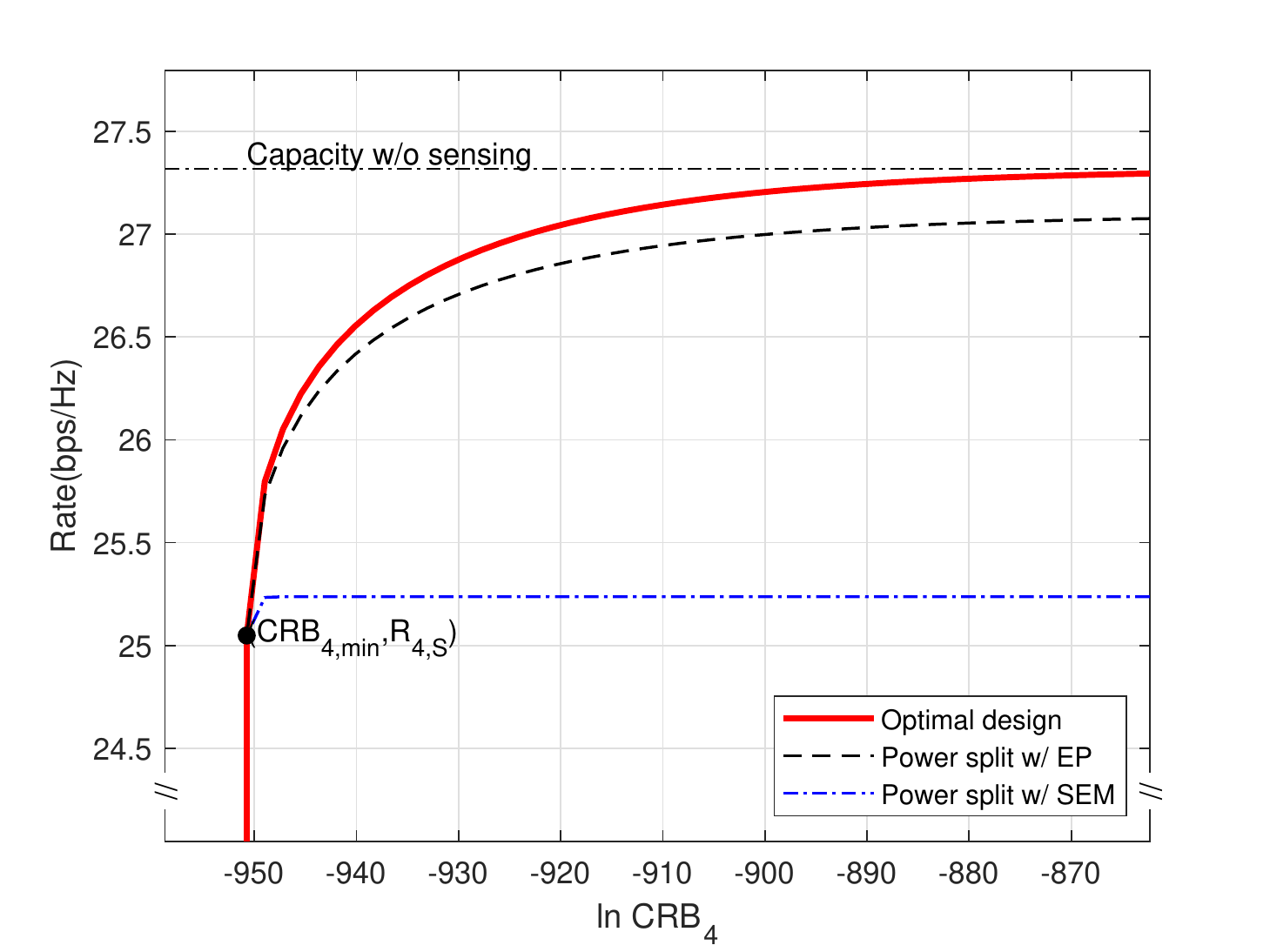}}
		%	\caption{Average matching error under five cases assuming Rayleigh fading channel/ LOS channel: (a) two-stage design without $\mv{R}_d$, (b) joint design without $\mv{R}_d$, (c) joint design with Type-I receiver, (d) joint design with Type-II receiver, (e) joint design with only $\mv{R}_d$ with respect to different $\Gamma$.}
		\caption{The C-R region in the case with $M = 8$ and $r = N_c = 6$. }
		\label{fig:CR_region_rank_def}
	\end{figure}
	
	Figs. \ref{fig:power_allocation_deficient}, \ref{fig:power_allocation_deficient_eig}, and \ref{fig:power_allocation_deficient_det} show the optimal power allocation by the optimal designs with Trace-CRB, MaxEig-CRB, and Det-CRB,
	%	under the same condition as in Fig. \ref{fig:CR_region_rank_def} for all the three different design criteria 
	where $\Gamma_2 = 0.0152$, $\Gamma_3 = 8 \times 10^{-4}$, and $\ln \Gamma_4 = -900$, respectively. The water-filling power allocation for rate maximization and the equal power allocation for CRB minimization are considered for comparison. It is observed that for each scenario, the proposed optimal power allocations over the first six ISAC subchannels are monotonically non-increasing, which are higher than the constant power allocated to the last two sensing subchannels. This is consistent with Proposition \ref{lemma:Lemma_order_p}. It is also observed that the proposed optimal power allocations over the first five communication subchannels are lower than the corresponding water-filling power allocations, as more power should be allocated to other subchannels for sensing. By contrast, the proposed optimal power allocations over the last two subchannels are observed to be higher than the corresponding water-filling power allocations (which are zero), in order to meet the sensing requirement. 
	
	\begin{figure}[htb]
		\centering
		\setlength{\abovecaptionskip}{+4mm}
		\setlength{\belowcaptionskip}{+1mm}
		\subfigure[Trace-CRB with $\Gamma_2 = 0.0152$.]{ \label{fig:power_allocation_deficient}
			\includegraphics[width=2.0in]{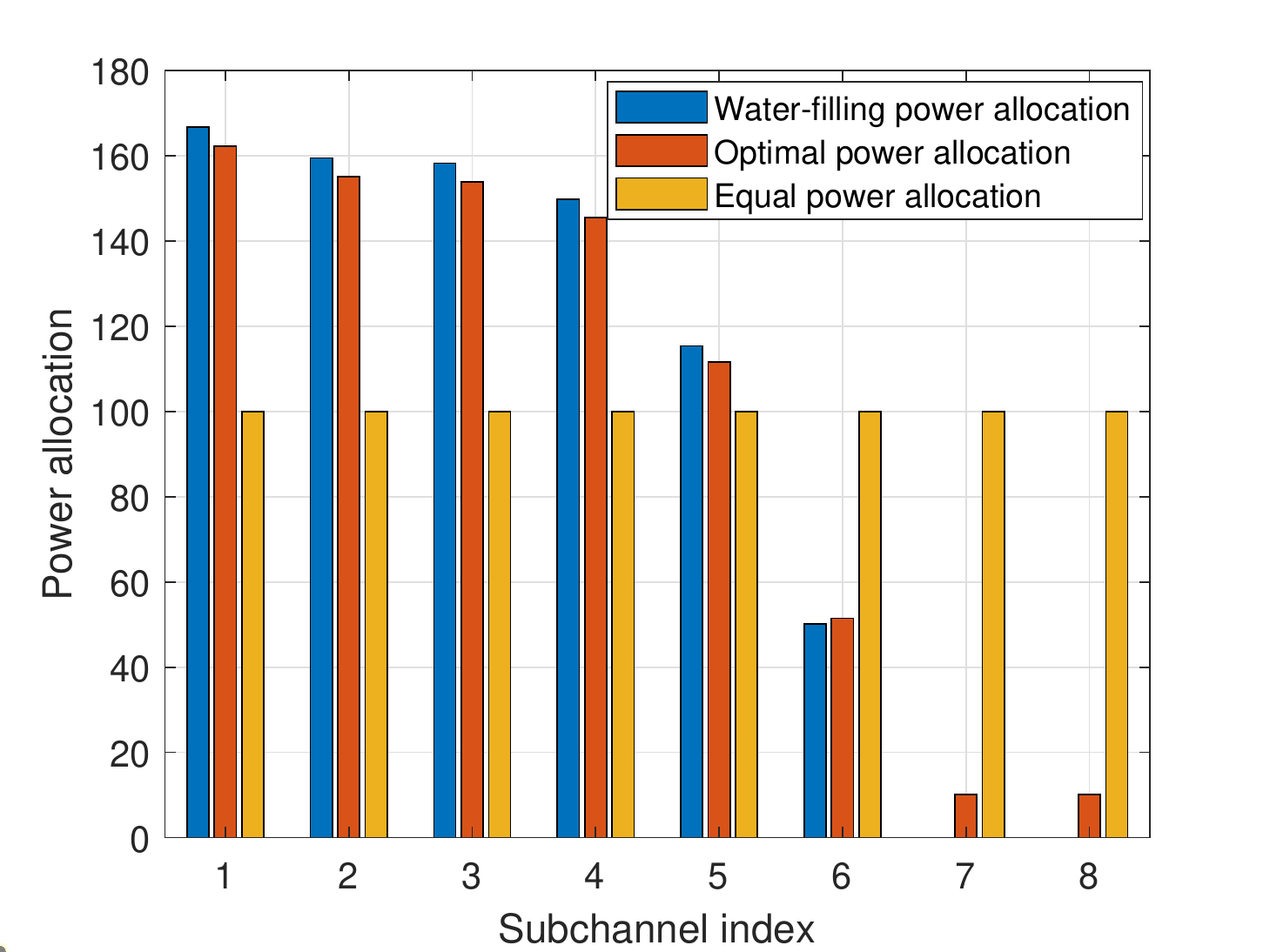}}
		\subfigure[MaxEig-CRB with $\Gamma_3 = 8 \times 10^{-4}$.]{ \label{fig:power_allocation_deficient_eig}
			\includegraphics[width=2.0in]{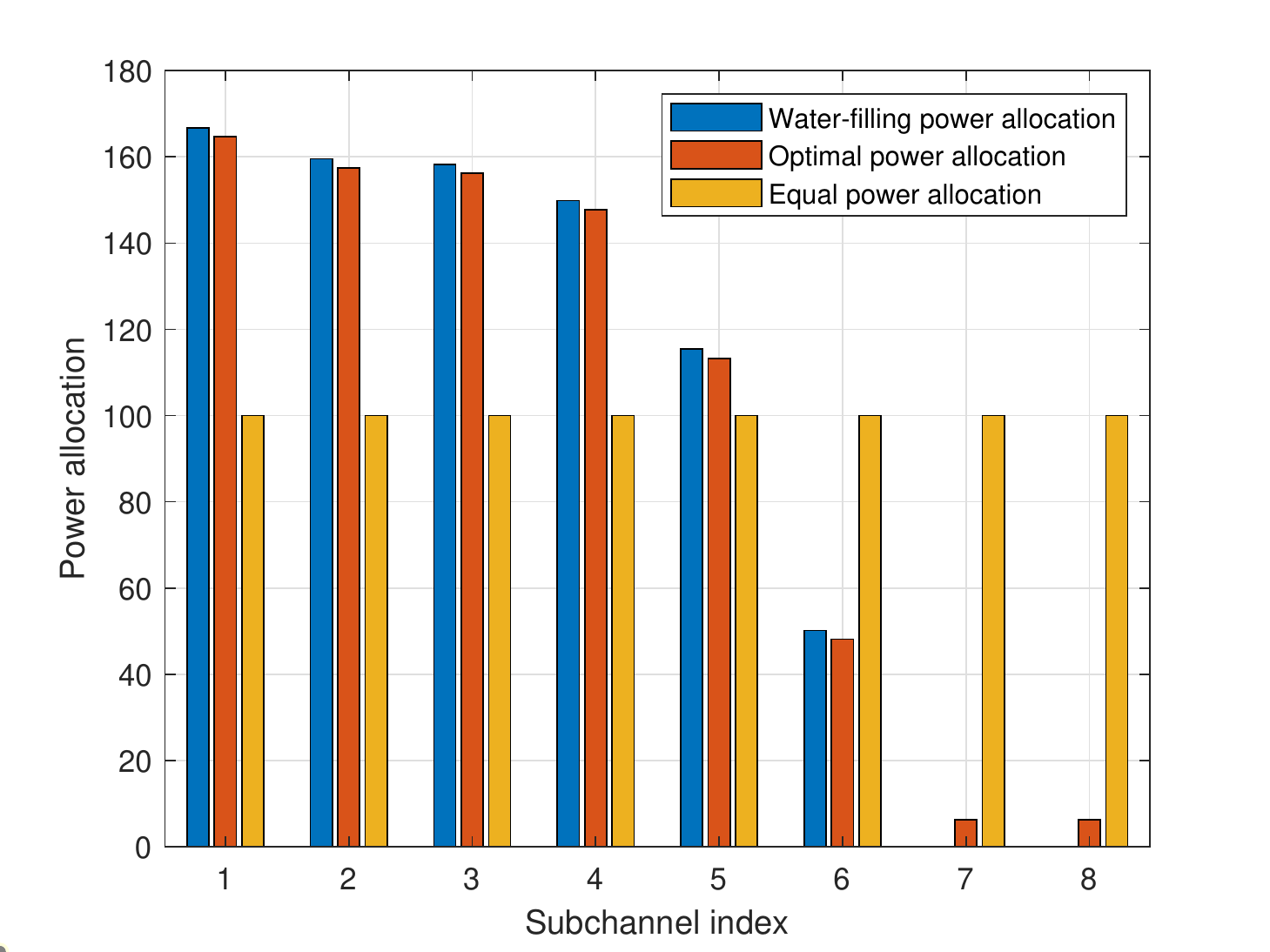}}
		\subfigure[Det-CRB with $\ln \Gamma_4 = -900$.]{ \label{fig:power_allocation_deficient_det}
			\includegraphics[width=2.0in]{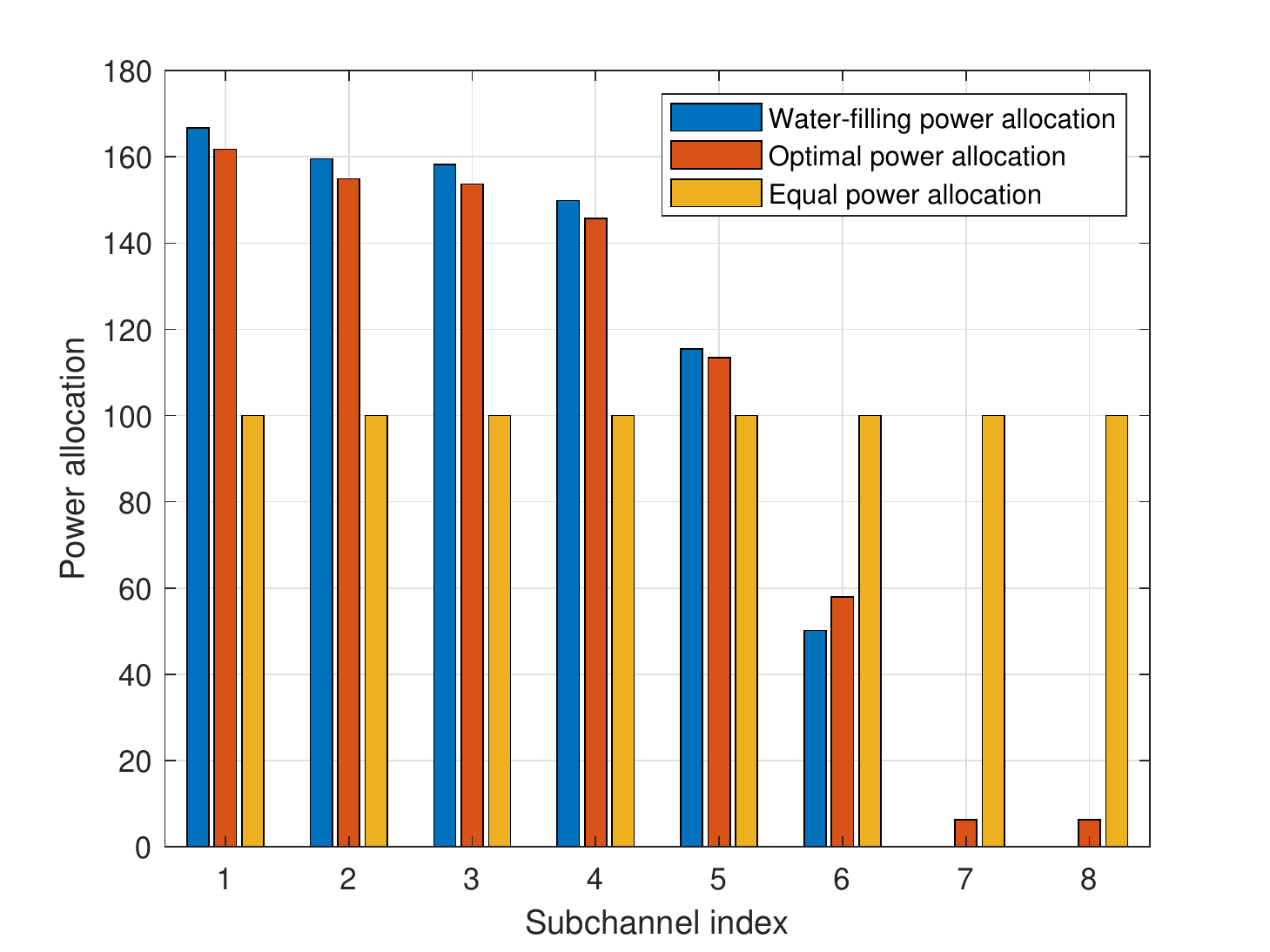}}
		%	\caption{Average matching error under five cases assuming Rayleigh fading channel/ LOS channel: (a) two-stage design without $\mv{R}_d$, (b) joint design without $\mv{R}_d$, (c) joint design with Type-I receiver, (d) joint design with Type-II receiver, (e) joint design with only $\mv{R}_d$ with respect to different $\Gamma$.}
		\caption{The power allocation by the proposed optimal designs with $M = 8$ and $r = N_c = 6$. }
		\label{fig:Power_allocation_rank_def}
	\end{figure}

	\begin{figure}[htb]
		\centering
		\setlength{\abovecaptionskip}{3mm}
		\setlength{\belowcaptionskip}{3mm}
		\includegraphics[width=2.7in]{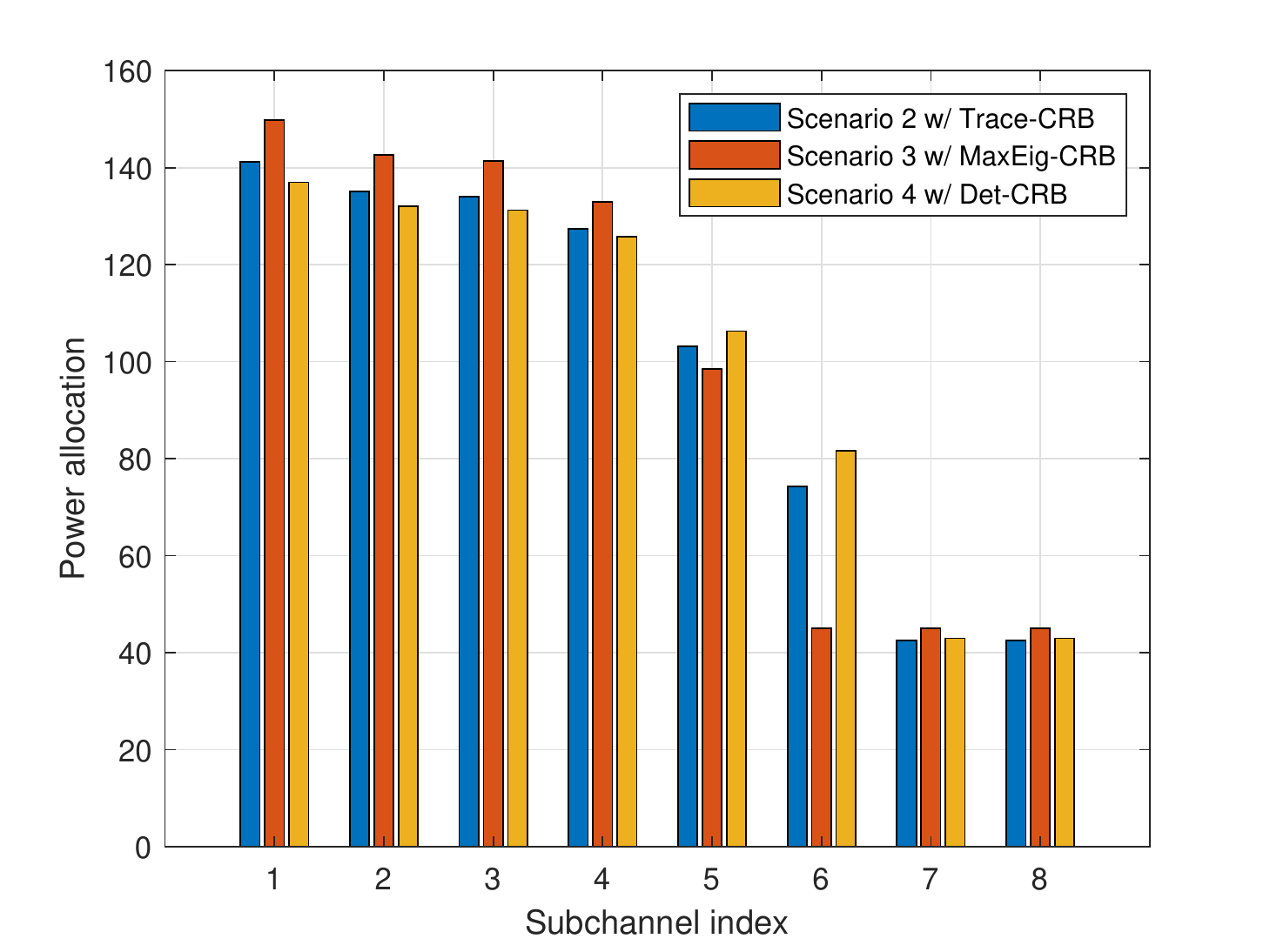}
		\caption{The power allocation under the three scenarios when the rate $R = 26.5$ bps/Hz. } \label{fig:Three_criteria_power_comparison}
		%	\textbf{[Check the template, the caption should start with Fig.]}
	\end{figure}

	Fig. \ref{fig:Three_criteria_power_comparison} compares the optimal power allocation by our proposed designs under the three sensing performance measures, where the resultant data rate is set as $R = 26.5$ bps/Hz for fair comparison. 
	%	The other setups are same as in Fig. \ref{fig:CR_region_rank_def}. 
	%	In scenario 2 with Trace-CRB, the transmit power at subchannels 7-8 is lowest among the three, as the high power in communication channels is also beneficial in minimizing the Trace-CRB. In scenario 3 with MaxEig-CRB, power at subchannels 7-8 is largest, in order to ensure the maximum CRB requirements at the worst channels. While the Det-CRB leads to power allocation between the above two cases to ensure the CRB among different elements in a proportionally fair manner.
	In particular, look at the transmit power allocated to subchannels 7-8 for dedicated sensing. It is observed that the allocated power to subchannels 7-8 in Scenario 2 is lowest among the three scenarios, as allocating more power to ISAC subchannels 1-6 is also beneficial in minimizing the Trace-CRB. Next, it is observed that  the allocated power to subchannels 7-8 in Scenario 3 is highest among the three scenarios, as the design based on MaxEig-CRB ensures that the upper bound of the worst-case CRB is minimized. Finally, the allocated power to subchannels 7-8 in Scenario 4 is observed to lie between the above two scenarios, as the Det-CRB metric ensures the CRB minimization among different elements in a proportional fair manner.

	%	It can be observed from the simulation that under the extended target case, the optimal power allocation under three different design criteria are similar, although the exact power allocation to each subchannels are distinguished from each other. It is also observed that on the dedicated sensing subchannels, the corresponding power allocation of MaxEig-CRB is maximum among the three criteria while Trace-CRB is minimum. This observation is consistent with the previous analysis, i.e., the optimal power allocation of the three criteria unifies the conventional rate maximization design with water-filling power allocation and the CRB minimization design with isotropic transmission.

	\begin{figure}[htb]
		\centering
		\setlength{\abovecaptionskip}{+4mm}
		\setlength{\belowcaptionskip}{+1mm}
		\subfigure[Rate versus SNR with $\Gamma_2 = 0.1$.]{ \label{fig:rate_SNR} 
			\includegraphics[width=2.0in]{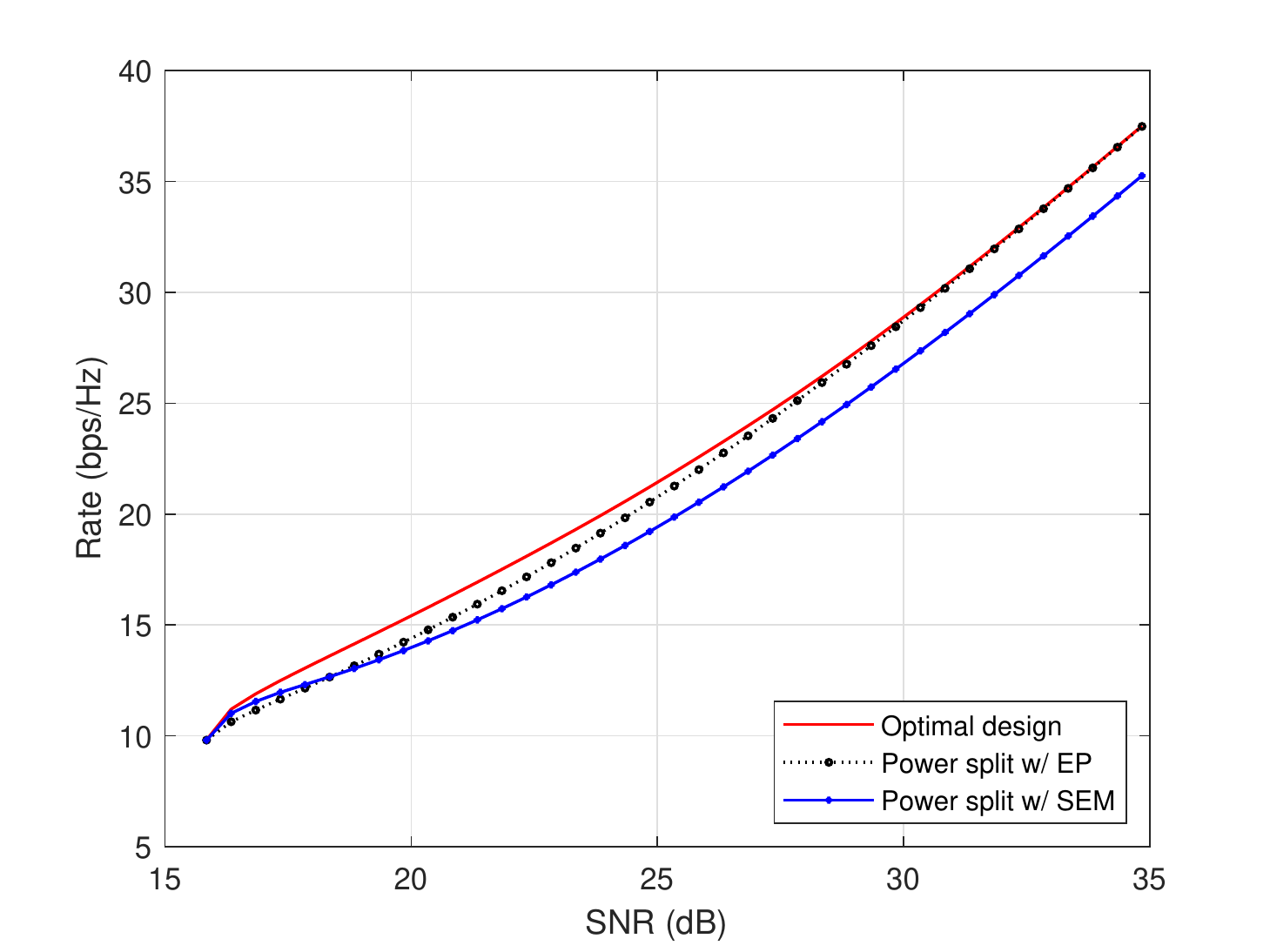}}
		\subfigure[Rate versus SNR with $\Gamma_3 = 8 \times 10^{-4}$.]{ \label{fig:rate_SNR_eig} 
			\includegraphics[width=2.0in]{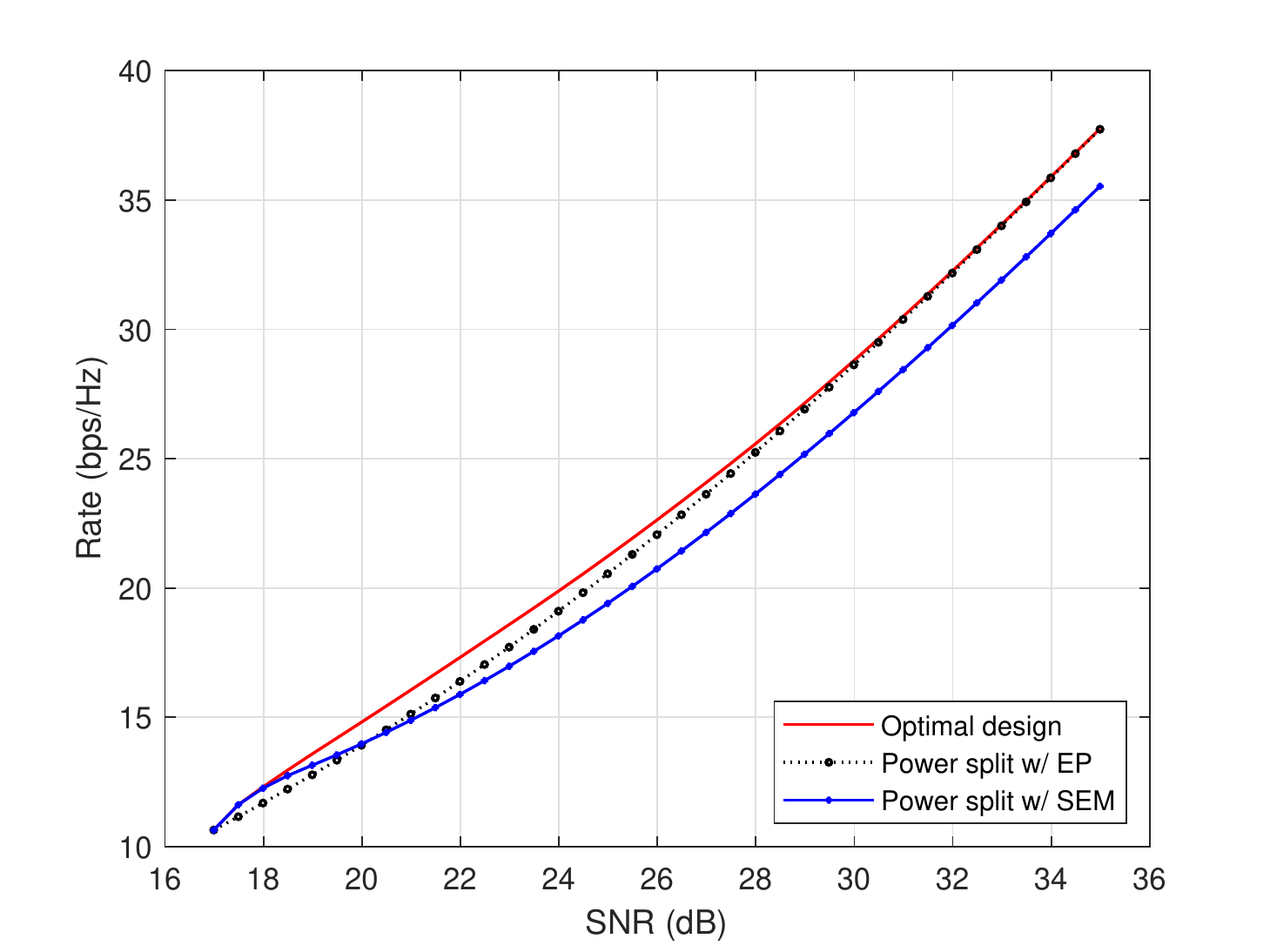}}
		\subfigure[Rate versus SNR with $\ln \Gamma_4 = -900$.]{ \label{fig:rate_SNR_det}
			\includegraphics[width=2.0in]{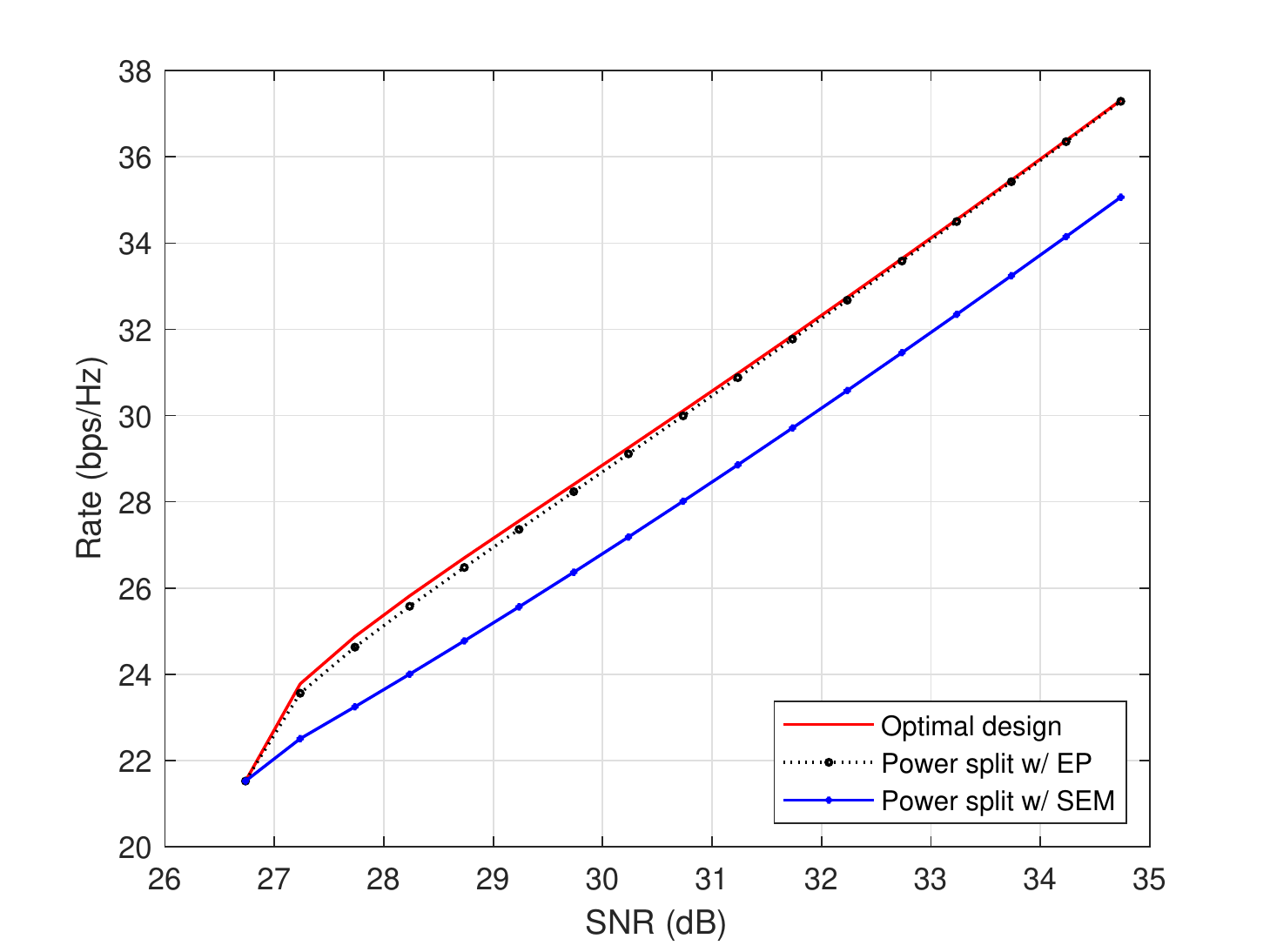}}
		%	\caption{Average matching error under five cases assuming Rayleigh fading channel/ LOS channel: (a) two-stage design without $\mv{R}_d$, (b) joint design without $\mv{R}_d$, (c) joint design with Type-I receiver, (d) joint design with Type-II receiver, (e) joint design with only $\mv{R}_d$ with respect to different $\Gamma$.}
		\caption{Rate versus SNR in the case with $r = M = N_c = 6$. }
		\label{fig:Rate_vs_SNR}
	\end{figure}
	
	Figs. \ref{fig:rate_SNR}, \ref{fig:rate_SNR_eig}, and \ref{fig:rate_SNR_det} show the rate versus SNR in Scenarios 2-4 with Trace-CRB, MaxEig-CRB, Det-CRB, where $\Gamma_2 = 0.1$, $\Gamma_3 = 8 \times 10^{-4}$, and $\ln \Gamma_4 = -900$, respectively. It is observed that for each scenario, the optimal design performs best over the whole SNR regime. In the high SNR regime, the rate achieved by the power splitting with equal power allocation is observed to approach that by the optimal design. This can be explained by Proposition \ref{pro:P_infinite}. In the low SNR, the power splitting with strongest eigenmode transmission is observed to approach the optimal design.

	\begin{figure}[htb]
		\centering
		\setlength{\abovecaptionskip}{+4mm}
		\setlength{\belowcaptionskip}{+1mm}
		\subfigure[C-R region with Trace-CRB.]{ \label{fig:C-R-region}
			\includegraphics[width=2.0in]{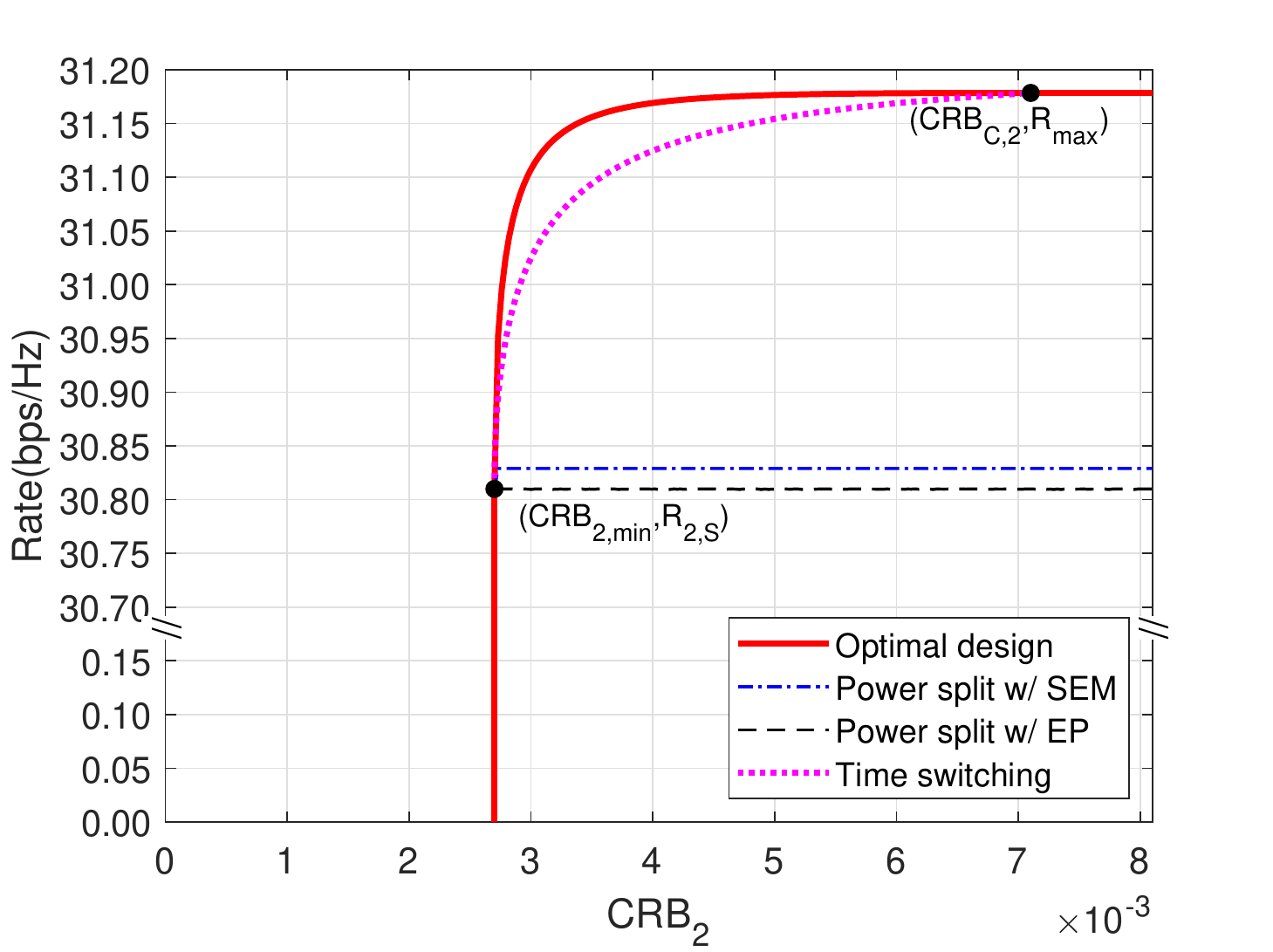}}
		\subfigure[C-R region with MaxEig-CRB.]{ \label{fig:extended_region_teqM_eig}
			\includegraphics[width=2.0in]{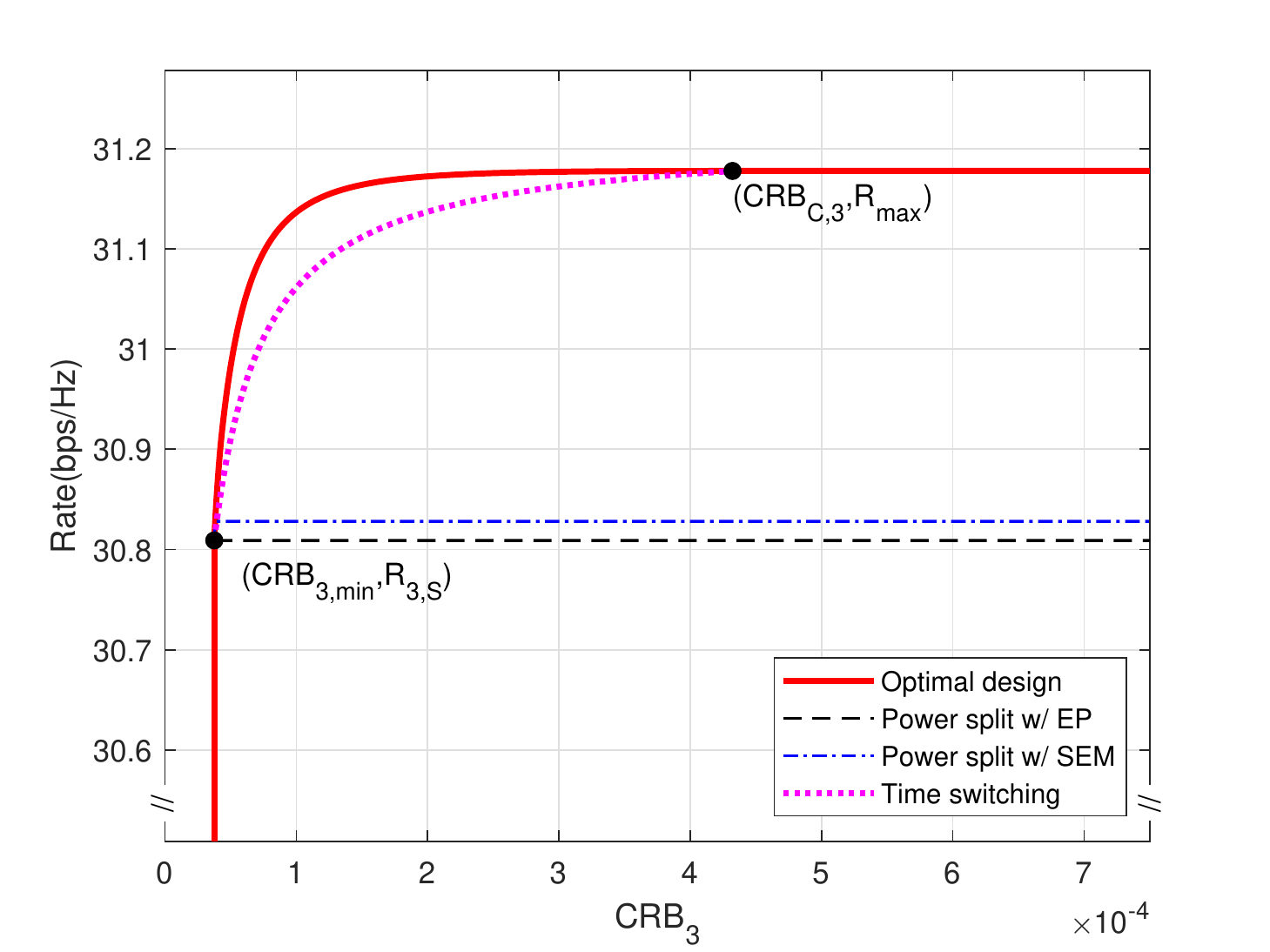}}
		\subfigure[C-R region with Det-CRB.]{ \label{fig:extended_region_teqM_det}
			\includegraphics[width=2.0in]{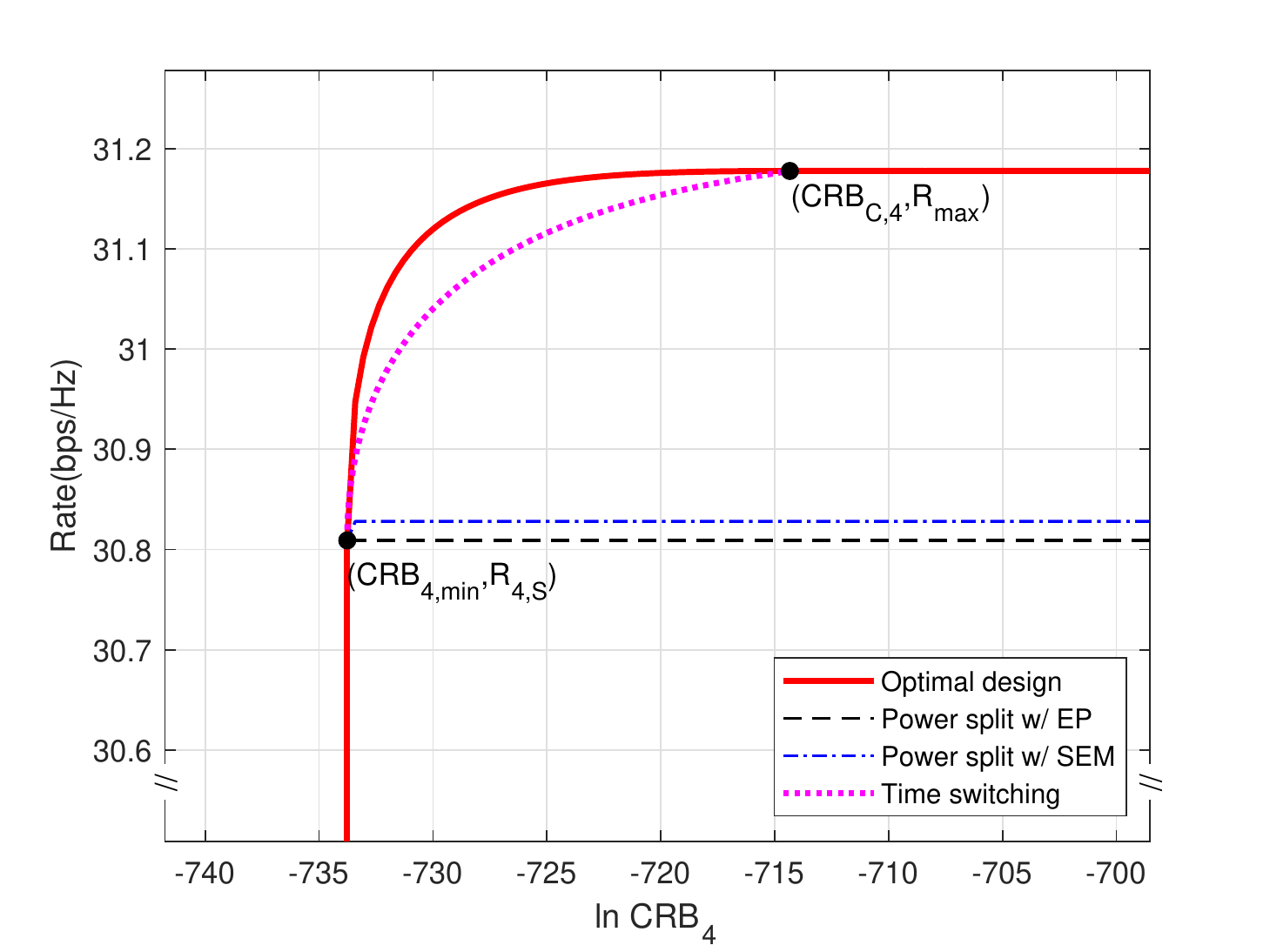}}
		%	\caption{Average matching error under five cases assuming Rayleigh fading channel/ LOS channel: (a) two-stage design without $\mv{R}_d$, (b) joint design without $\mv{R}_d$, (c) joint design with Type-I receiver, (d) joint design with Type-II receiver, (e) joint design with only $\mv{R}_d$ with respect to different $\Gamma$.}
		\caption{C-R region in the case with $r = M = N_c = 6$. }
		\label{fig:CR_region_full_rank}
	\end{figure}
	
	Finally, we consider that $M = N_c = 6$, $K_c=20$, and $P = 800$. In this case, we have $r = 6$, and $\bm{Q}_c^*$ is of full rank (as $P > P_0$ in  Remark \ref{remark:finite_SCRB}) such that $\text{CRB}_{C,i}$ is finite for any $i \in \{2,3,4\}$. Figs. \ref{fig:C-R-region}, \ref{fig:extended_region_teqM_eig}, and \ref{fig:extended_region_teqM_det} show the resultant C-R regions achieved by the optimal designs for Scenarios 2-4. It is observed that $(\text{CRB}_{C,i}, R_{\text{max}})$ exists and the C-R-region boundary achieved by the optimal design outperforms other three benchmark schemes. Among the three benchmark schemes, the performance of the time switching design is the best.

	\section{Conclusion}
	
	This paper investigated the fundamental performance tradeoff between the estimation CRB and the communication data rate in a point-to-point MIMO ISAC system, by considering both point and extended target models. We characterized the complete Pareto boundary of the resultant C-R regions, by formulating new CRB-constrained MIMO rate maximization problems and finding their semi-closed-form optimal transmit covariance solutions. Numerical results showed that the C-R-region boundary achieved by the optimal design significantly outperforms other benchmark schemes. The fundamental C-R tradeoff limits revealed in this paper are expected to provide references and design insights on practical ISAC systems. 
	%	 We hope that this paper can provide insights on revealing the fundamental limits of MIMO ISAC and shed light on future research on more practical scenario where finite CPI length effect is taken into account.
	%How to extend this work to other setups with different communication and sensing models is interesting research directions for future work. 
	
	%The achievable rate and the CRB have been used as the performance metrics in the characterization for communication and radar sensing, respectively. We give the explicit closed form expression for the two corner points in the C-R region under different conditions and formulate a problem to achieve the Pareto boundary between the two corners. We strictly prove that the problem can be simplified as the power allocation problem into each subchannel and derive the semi-closed form solutions to the problem. It is revealed that the optimal power allocation strategy unifies the sensing-favoured equal power allocation and the communication-favoured water-filling power allocation. Numerical results have been conducted to verify the design with corresponding insights and analysis.
	%It is our hope that this work could provide some guidelines for the explicit design of the optimal unified waveform for both sensing and communication.

	\appendix
	
	%\subsection{Proof of Proposition 1}\label{Proof:diagonal_optimal}
	%Suppose the optimal solution $\tilde{\bm{Q}}^*$ is not a diagonal positive definite matrix,  we delete all its non-diagonal elements and get the resultant matrix denoted by $ \tilde{\bm{Q}}^* \circ \mv{I}$. By Hadamard inequality \cite{horn2012matrix}, we then have
	%\begin{align}
	%	\nonumber
	%	\log_2 \det (\mv{I}_{M} + \frac{1}{\sigma_c^2}  \mv{\Sigma}^2 \tilde{\bm{Q}}^* )
	%	< & \log_2 \det (\mv{I}_{M} + \frac{1}{\sigma_c^2}  \mv{\Sigma}^2 (\tilde{\bm{Q}}^* \circ \mv{I}) ) \\
	%	= & \sum_{i=1}^r \log_2 (1+\frac{\lambda_i^2 [\tilde{\bm{Q}}^*]_{i,i}}{\sigma_c^2})
	%\end{align}
	%$\tilde{\bm{Q}}^* \circ \mv{I}$ also meets the power constraint in (\ref{equ:P1_1_power}) as
	%\begin{align}
	%	\operatorname{tr} (\tilde{\bm{Q}}^* \circ \mv{I}) = \operatorname{tr} (\tilde{\bm{Q}}^*) \leq P.
	%\end{align}
	%Furthermore, we have \cite{ohno2004capacity}
	%\begin{align}
	%	\operatorname{tr} \{(\tilde{\bm{Q}}^* \circ \mv{I})^{-1}\} \leq \operatorname{tr} \{ \left(\tilde{\bm{Q}}^*\right)^{-1}\} \leq \tilde{\Gamma}
	%\end{align}
	%Thus, $\tilde{\bm{Q}}^* \circ \mv{I}$ is indeed another feasible solution and it increases the objective function value, this inherently results in a contradiction to the assumption that $\tilde{\bm{Q}}^*$ is the optimal solution. Thus, the optimal solution to (P1.1) must be a diagonal matrix with all positive diagonal elements, i.e., $\tilde{\bm{Q}} = \operatorname{diag}(\bm{p}) = \operatorname{diag}(p_1,...,p_M)$, where $p_i>0, \forall i$.
	
	\subsection{Proof of Proposition \ref{Pro:three_CRB_min}} \label{three_CRB_min_proof}
	
	For Scenario $i=2$, we can find that the optimal solution to $\min _{\bm{Q}\succeq \bm{0}, \operatorname{tr}(\bm{Q}) \leq P}  \text{ } \frac{\sigma_{s}^2 N_s}{L} \operatorname{tr}(\bm{Q}^{-1})$ is  $\bm{Q}_{s,2}^{*} = \frac{P}{M} \bm{I}_M$ \cite{liu2021cramer} by checking the KKT conditions.
	
	For Scenario $i=3$, minimizing the maximium eigenvalue of $\bm{Q}^{-1}$ is equivalent to maximizing the minimum eigenvalue of $\bm{Q}$. As a result, $\min _{\bm{Q}\succeq \bm{0}, \operatorname{tr}(\bm{Q}) \leq P}  \text{ } \frac{\sigma_s^2}{L} \lambda_{\text{max}} (\bm{Q}^{-1})$ is equivalent to
	\begin{align}\label{equ:P_CRB_eigen_opt_eq}
		%		\text{(P.C)} \text{ } \min _{\bm{Q}_s} & \text{ } \frac{\sigma_{s}^2 N_s}{L} \operatorname{tr}(\bm{Q}_s^{-1}) \\
		\max _{\bm{Q}\succeq \bm{0}, t} & \text{ } t, \quad  \text { s.t. }  \bm{Q} \geq t \bm{I}, \quad \operatorname{tr}(\bm{Q}) \leq P.
	\end{align}
	The optimal solution $\bm{Q}_{s,3}^{*}$ to (\ref{equ:P_CRB_eigen_opt_eq}) is shown to be a diagonal matrix by applying the Schur-Horn Theorem \cite{horn1954doubly}.
	%	(the proof is similar to the proof of Proposition \ref{Pro:diagonal_optimal_P3}, which is shown in the following and is omitted for brevity here)
	Let $\bm{Q}_{s,3}^{*} = \operatorname{diag}(p_1,...,p_M)$. Problem (\ref{equ:P_CRB_eigen_opt_eq}) becomes
	%	\begin{subequations} 
	\begin{align} \label{equ:P_CRB_eigen_opt_sim}
		%		\text{(P.C)} \text{ } \min _{\bm{Q}_s} & \text{ } \frac{\sigma_{s}^2 N_s}{L} \operatorname{tr}(\bm{Q}_s^{-1}) \\
		\max _{\{p_k\}, t}  \text{ } t, \quad \text {s.t. }  p_k \geq t, \forall k \in \{1,2,...,M\}, \quad \sum_{i=1}^M p_k \leq P.
	\end{align}
	%	\end{subequations}
	The optimal solution to problem (\ref{equ:P_CRB_eigen_opt_sim}) is $p_k = \frac{P}{M}, \forall k$. Accordingly, we have $\bm{Q}_{s,3}^{*} = \frac{P}{M} \bm{I}_M$.
	
	For Scenario $i=4$, based on the Hadamard inequality \cite{horn2012matrix}, it is clear that the optimal solution $\bm{Q}_{s,4}^*$ to problem $\min _{\bm{Q}\succeq \bm{0}, \operatorname{tr}(\bm{Q}) \leq P} \text{ } (\frac{\sigma_s^2}{L})^{M N_s} \operatorname{det}(\bm{Q}^{-1})^{N_s}$ must be a diagonal matrix. It can be shown that all its diagonal elements should be $\frac{P}{M}$, and thus we have $\bm{Q}_{s,4}^{*} = \frac{P}{M} \bm{I}_M$. 
	%	This thus completes the proof for all the three scenarios.
	
	%	Plugging $\bm{Q}_s^* = \frac{P}{M} \bm{I}_M$ into $\text{CRB}_2(\bm{Q})$-$\text{CRB}_4(\bm{Q})$ in (\ref{equ:Trace_opt})-(\ref{equ:Det_opt}) and $R(\bm{Q})$ in (\ref{eqn:Rate}), the corner point $(\text{CRB}_{i,\text{min}},R_{i,S})$ for the three scenarios can be accordingly obtained. 
	
	\subsection{Proof of Proposition \ref{Pro:diagonal_optimal}} \label{Trace_diag_optimal_proof}
	
	First, it is evident that $\tilde{\bm{Q}}_2 \succ \bm{0}$ follows in order for the maximum CRB constraint in (P2.2) to hold. Next, suppose that the optimal solution $\tilde{\bm{Q}}_2^*$ is not diagonal, and we construct an alternative solution $\tilde{\bm{Q}}_2^{**} = \tilde{\bm{Q}}_2^* \circ \mv{I}$, which is a diagonal matrix whose diagonal elements are identical to $\tilde{\bm{Q}}_2^*$. Then, we have $\log_2 \det (\mv{I}_{M} + \frac{1}{\sigma_c^2}  \mv{\Sigma}^2 \tilde{\bm{Q}}_2^* )
	\le \log_2 \det (\mv{I}_{M} + \frac{1}{\sigma_c^2}  \mv{\Sigma}^2 \tilde{\bm{Q}}^{**}_2 )$ according to the Hadamard inequality \cite{horn2012matrix}, $\operatorname{tr} \{(\tilde{\bm{Q}}_2^{**})^{-1}\} \leq \operatorname{tr} \{ (\tilde{\bm{Q}}_2^*)^{-1}\} \leq \tilde{\Gamma}_2$ based on \cite[Lemma 1]{ohno2004capacity}, and $\operatorname{tr} (\tilde{\bm{Q}}_2^{**}) = \operatorname{tr} (\tilde{\bm{Q}}_2^*) \leq P$. 
	%	\begin{align}
	%		\label{eqn:1}
	%		\\
	%		%	= & \sum_{i=1}^r \log_2 (1+\frac{\lambda_i^2 [\tilde{\bm{Q}}^*]_{i,i}}{\sigma_c^2})
	%		\label{eqn:2},\\
	%		,\label{eqn:3}
	%	\end{align}
	%	where \eqref{eqn:1} follows from the Hadamard inequality \cite{horn2012matrix} and the inequality in \eqref{eqn:3} follows from \cite[Lemma 1]{ohno2004capacity}. 
	From these three inequalities, we can infer that $\tilde{\bm{Q}}_2^{**}$ is also feasible for (P2.2) and achieves a no lower objective value than that by $\tilde{\bm{Q}}_2^{*}$. This contradicts the presumption that the non-diagonal matrix $\tilde{\bm{Q}}_2^*$ is optimal. This thus verifies that the optimal solution to (P2.2) should be diagonal, i.e., $\tilde{\bm{Q}}_2 = \operatorname{diag}(p_{2,1},...,p_{2,M})$. Together with the fact that $\tilde{\bm{Q}}_2 \succ \bm{0}$, we have $p_{2,k} > 0, \forall k \in \{1,\ldots,M\}$. This thus completes the proof.
	%Thus, $\tilde{\bm{Q}}^* \circ \mv{I}$ is indeed another feasible solution and it increases the objective function value, this inherently results in a contradiction to the assumption that $\tilde{\bm{Q}}^*$ is the optimal solution. Thus, the optimal solution to (P1.1) must be a diagonal matrix with all positive diagonal elements, i.e., $\tilde{\bm{Q}} = \operatorname{diag}(\bm{p}) = \operatorname{diag}(p_1,...,p_M)$, where $p_i>0, \forall i$.
	
	\subsection{Proof of Proposition \ref{pro:prime_dual_relationship}}\label{Proof:prime_dual_relationship}
	
	We prove this proposition via the Lagrange duality method. Let $\mu_2 \geq 0$ and $v_2 \geq 0$ denote the dual variables associated with the CRB constraint and the power constraint in (\ref{equ:P_1P}), respectively. By denoting $\bm{p}_2 \triangleq \left[p_{2,1},...,p_{2,M}\right]^T$, the partial Lagrangian of (P2.3) is expressed as
	\begin{align}
		\nonumber
		\mathcal{L}_2(\bm{p}_2,\mu_2,v_2)  = \sum_{k=1}^r \log_2 \left(1+\frac{\zeta_k^2(\bm{H}_c) p_{2,k}}{\sigma_c^2} \right) - \mu_2 (\sum_{k=1}^M \frac{1}{p_{2,k}} - \tilde{\Gamma}_2) - v_2(\sum_{k=1}^M p_{2,k} - P),
	\end{align}
	and the corresponding dual function is given by
	\begin{align}\label{equ:dual_function}
		g_2(\mu_2,v_2) = \max_{\bm{p}_2 \geq \bm{0}} \mathcal{L}_2(\bm{p}_2,\mu_2,v_2).
	\end{align}
	Accordingly, the dual problem of (P2.3) is given by
	\begin{align}\label{P_1_dual}
		\text{(D2.3)}: \min _{\mu_2 \geq 0, v_2 \geq 0} g_2(\mu_2,v_2).
	\end{align}
	Since problem (P2.3) is convex and satisfies the Slater's condition, the strong duality holds between problem (P2.3) and its dual problem (D2.3) \cite{boyd2004convex}. As a result, we can optimally solve problem (P2.3) by equivalently solving the dual problem (D2.3). In the following, we first solve  (\ref{equ:dual_function}) to obtain the dual function $g_2(\mu_2,v_2)$ and then solve (D2.3) to obtain $\mu_2^{\text{opt}}$ and $v_2^{\text{opt}}$. 
	
	First, consider problem (\ref{equ:dual_function}) with given $\mu_2 \geq 0$ and $v_2 \geq 0$, and suppose that its optimal solution is given by $\bm{p}_2^*$.  By setting the partial derivatives of $\mathcal{L}_2(\bm{p}_2,\mu_2,v_2)$ with respect to $p_{2,k}$'s to be zero, we have
	\begin{align}\label{equ:Lag_zero1}
		\frac{\partial \mathcal{L}_2}{\partial p_{2,k}^*} & = \frac{1}{\text{ln}2} (\frac{1}{1+\frac{\zeta_k^2(\bm{H}_c) p_{2,k}^*}{\sigma_c^2}}) \frac{\zeta_k^2(\bm{H}_c)}{\sigma_c^2} + (\frac{\mu_2}{(p_{2,k}^*)^2}) - v_2 = 0, \quad
		\forall k \in \{1,..., r\}, \\
		\label{equ:Lag_zero2}
		\frac{\partial \mathcal{L}_2}{\partial p_{2,k}^*} & = \frac{\mu_2}{(p_{2,k}^{*})^2} - v_2 = 0, \quad \forall k \in \{r+1,..., M\}.
	\end{align}
	Based on (\ref{equ:Lag_zero1}), (\ref{equ:Lag_zero2}), and Cardano's formula for solving a cubic equation, we have the optimal solution to problem (\ref{equ:dual_function}) as
	\begin{align}
		\label{equ:general_exp_normal}
		p_{2,k}^{*} & = -t_{1,k} + \sqrt[3]{-t_{2,k}+\sqrt{t_{2,k}^2+t_{3,k}^3}} +  \sqrt[3]{-t_{2,k}-\sqrt{t_{2,k}^2+t_{3,k}^3}}, \quad \forall k \in \{1,\ldots, r\},  \\
		\label{equ:vanish_exp_normal}
		p_{2,k}^{*} & = \sqrt{\mu_2/v_2}, \quad \forall  k \in \{r+1,\ldots, M\}, 
	\end{align}
	where
	$
	t_{1,k} = b_k/(3a), \text{ } t_{2,k} = (27a^2d_k-9ab_kc+2b_k^3)/(54a^3), \text{ } t_{3,k} = (3ac-b_k^2)(9a^2),
	$
	with $a = v_2, b_k = v_2 \frac{\sigma_c^2}{\zeta_k^2(\bm{H}_c)} - \frac{1}{\text{ln2}}, c = -\mu_2$, and $d_k = -\mu_2 \frac{\sigma_c^2}{\zeta_k^2(\bm{H}_c)}$.
	
	Next, we solve the dual problem (D2.3) to find the optimal dual solution $(\mu_2^{\text{opt}},v_2^{\text{opt}})$. similar to that in Section \ref{section:point_target}, with the subgradient of $g_2(\mu_2,v_2)$ given as $\partial g_2 |_{(\mu_2,v_2)} = [ -(\sum_{k=1}^{M} \frac{1}{p_{2,k}^*} - \tilde{\Gamma}_2 ), -(\sum_{k=1}^{M} p_{2,k}^* - P)]^T$, we can apply ellipsoid method to obtain the optimal dual solution $(\mu_2^{\text{opt}},v_2^{\text{opt}})$ to (D2.3).
	%Since for any other $(\tilde{\mu},\tilde{v})$, we have
	%\begin{align}\label{equ:gradient}
	%	\nonumber
	%	& g(\tilde{\mu},\tilde{v}) \geq \mathcal{L}(\bm{p}^*,\tilde{\mu},\tilde{v}) \\
	%	& = g(\mu,v) + [\tilde{\mu}-\mu,\tilde{v}-v] \left[\begin{array}{r}
	%		-(\sum_{i=1}^M \frac{1}{p_i^*} - \Gamma_t)  \\
	%		-(\sum_{i=1}^M p_i^* - P).
	%	\end{array}\right]
	%\end{align}
	Finally, by substituting $(\mu_2^{\text{opt}},v_2^{\text{opt}})$ into in (\ref{equ:general_exp_normal}) and (\ref{equ:vanish_exp_normal}), the optimal solution to (P2.3) is given in (\ref{equ:P_2_3_power}). This thus completes the proof.
	
	\subsection{Proof of Proposition \ref{Pro:diagonal_optimal_P3}}\label{Proof:prop_Eig_diag_optimal}
	Suppose that $\tilde{\bm{Q}}^*_3$ is the optimal solution to (P3.1) with non-zero non-diagonal elements, and we set an alternative solution as $\tilde{\bm{Q}}^{**}_3 =  \tilde{\bm{Q}}^*_3 \circ \bm{I}_M$. Then, according to  Hadamard's inequality, $\tilde{\bm{Q}}^{**}_3$ will further increase the objective function value, while meeting the power constraint.
	%	 Besides, the power constraint can also be met with $\tilde{\bm{Q}}^{**}_3$.
	%	According to Hadamard's inequality and Schur-Horn Theorem \cite{horn1954doubly}, i.e.,
	Furthermore, let $\mv{d} = [d_1,...,d_n]^T$ and $\mv{\lambda} = [\lambda_1,...,\lambda_n]^T$ be the diagonal elements of $\tilde{\bm{Q}}^*_3$ and the eigenvalues of $\tilde{\bm{Q}}^*_3$, respectively. Since $\tilde{\bm{Q}}^*_3$ is Hermitian, according to the Schur-Horn Theorem \cite{horn1954doubly}, we have
	\begin{align}
		\sum_{q=1}^{k} d_{[q]} \leq \sum_{q=1}^{k} \lambda_{[q]}, \forall k \in \{1,2,...,n\},
	\end{align}
	where $d_{[1]},...,d_{[n]}$ denotes the non-increasing rearrangement of all the elements in $\bm{d}$, i.e., $d_{[1]} \geq ... \geq d_{[n]}$ and the equality holds for $k=n$, which implies that $ d_{[n]} \geq \lambda_{[n]}$. Thus, the minimum eigenvalue of $\tilde{\bm{Q}}^{**}_3$ is equal to $d_{[n]}$, which is larger than the minimum eigenvalue of $\tilde{\bm{Q}}^{*}_3$. As a result, $\tilde{\bm{Q}}^{**}_3$ also satisfies the  CRB constraint in (P3.1). The above yields a contradiction to the presumption that $\tilde{\bm{Q}}^*_3$ is the optimal solution to (P3.1). Therefore, the optimal solution to (P3.1) must be a diagonal matrix.
	%	The optimal solution $\bm{Q}$ to the above problem must be a diagonal matrix, then we can apply KKT, duality method to find the optimal solution with corresponding insights, same as Trace-Opt criterion.

	\subsection{Proof of Proposition \ref{pro:prime_dual_relationship_eig}}\label{Proof:prop_Eig_semi_form}
	Let $\{\mu_{3,k}^{\text{opt}}\}$ and $v_3^{\text{opt}}$ be the optimal dual varaibles associated with the CRB constraint and the power constraint in (P3.2), respectively. As problem (P3.2) is convex and satisfies the Slater's condition, the strong duality holds between (P3.2) and its Lagrange dual problem. The corresponding Lagrangian is
	\begin{align}
		\nonumber
		\mathcal{L}_3(\bm{p}_3,\{\mu_{3,k}\},v_3)  = \sum_{k=1}^r \log_2 \left(1+\frac{\zeta_k^2(\bm{H}_c) p_{3,k}}{\sigma_c^2} \right) +  \sum_{k=1}^M \mu_{3,k} (p_{3,k} - \tilde{\Gamma}_e) - v_3(\sum_{k=1}^M p_{3,k} - P).
	\end{align}
	Furthermore, according to KKT conditions, we have
	\begin{align} \label{equ:Lag_zero1_eig_opt}
		\frac{\partial \mathcal{L}_3}{\partial p_{3,k}^{\text{opt}}} & = \frac{1}{\text{ln}2} (\frac{1}{1+\frac{\zeta_k^2(\bm{H}_c) p_{3,k}^{\text{opt}}}{\sigma_c^2}}) \frac{\zeta_k^2(\bm{H}_c)}{\sigma_c^2} + \mu_{3,k}^{\text{opt}} - v_3^{\text{opt}} = 0, \quad
		\forall k \in \{1,..., r\},\\
		\label{equ:Lag_zero1_eig_2_opt}
		\frac{\partial \mathcal{L}_3}{\partial p_{3,k}^{\text{opt}}} & = \mu_{3,k}^{\text{opt}} - v_3^{\text{opt}} = 0, \quad
		\forall k \in \{r+1,..., M\}.
	\end{align}
	By the complementary slackness condition, we have $\mu_{3,k}^{\text{opt}} (p_{3,k}^{\text{opt}} - \tilde{\Gamma}_e) = 0$. Thus, for $k \in \{1,..., r\}$, if $p_{3,k}^{\text{opt}} > \tilde{\Gamma}_e$, then $\mu_{3,k}^{\text{opt}} = 0$ and we thus have $p_{3,k}^{\text{opt}} = \frac{1}{v_3^{\text{opt}} \ln 2} - \frac{\sigma_c^2}{\zeta_k^2(\bm{H}_c)}$; otherwise, $p_{3,k}^{\text{opt}} = \tilde{\Gamma}_e$. Next, from (\ref{equ:Lag_zero1_eig_2_opt}), $\mu_{3,k}^{\text{opt}} = v_3^{\text{opt}} > 0$,  we have $p_{3,k}^{\text{opt}} = \tilde{\Gamma}_e, \forall k \in \{r+1,..., M\}$.

	\subsection{Proof of Proposition \ref{lemma:Lemma_order_p}}\label{Proof:lemma_power_allocation}
	We prove this proposition only for Scenario 2 with Trace-CRB, and the other two scenarios can be similarly proved. Based on (\ref{equ:P_2_3_power}), it is evident that $p_{2,r+1}^{\text{opt}} = ... = p_{2,M}^{\text{opt}}>0$. Therefore, to verify Proposition \ref{lemma:Lemma_order_p}, we only need to prove that $p_{2,1}^{\text{opt}} \ge p_{2,2}^{\text{opt}} \ge ...\ge p_{2,r}^{\text{opt}} \ge p_{2,r+1}^{\text{opt}}$. 
	First, we prove $p_{2,r}^{\text{opt}} \geq p_{2,r+1}^{\text{opt}}$ via contradiction. If $p_{2,r}^{\text{opt}} < p_{2,r+1}^{\text{opt}} = \sqrt{\mu_2^{\text{opt}}/v_2^{\text{opt}}}$, then we have
	\begin{align}
		0 \stackrel{(a)}{\leq} \frac{1}{\text{ln}2} (\frac{1}{1+\frac{\zeta_r^2(\bm{H}_c) p_{2,r}^{\text{opt}}}{\sigma_c^2}}) \frac{\zeta_r^2(\bm{H}_c)}{\sigma_c^2} \stackrel{(b)}{=} \mu_2^{\text{opt}}(-\frac{1}{(p_{2,r}^{\text{opt}})^2}) + v_2^{\text{opt}} \stackrel{(c)}{<} 0,
	\end{align}
	where (a) follows from $p_{2,r}^{\text{opt}} > 0$, (b) is obtained based on (\ref{equ:Lag_zero1}), and (c) holds based on 
	the above presumption. This incurs a contradiction. Thus, we have $p_{2,r}^{\text{opt}} \geq p_{2,r+1}^{\text{opt}}$.
	Next, we prove $p_{2,k}^{\text{opt}} \ge p_{2,k+1}^{\text{opt}}, \forall k \in \{1,..., r-1\}$, by contradiction. If $p_{2,k}^{\text{opt}} < p_{2,k+1}^{\text{opt}}$, then based on (\ref{equ:Lag_zero1}), we have
	\begin{align}\label{equ:assump_1}
		%		\nonumber
		\frac{1}{\text{ln}2} (\frac{1}{\frac{\sigma_c^2}{\zeta_k^2(\bm{H}_c)}+ p_{2,k}^{\text{opt}}})  = v_2^{\text{opt}} -  \frac{\mu_2^{\text{opt}}}{(p_{2,k}^{\text{opt}})^2} < v_2^{\text{opt}} -  \frac{\mu_2^{\text{opt}}}{(p_{2,k+1}^{\text{opt}})^2} = \frac{1}{\text{ln}2} (\frac{1}{\frac{\sigma_c^2}{\zeta_{k+1}^2(\bm{H}_c)}+ p_{2,k+1}^{\text{opt}}}).
	\end{align}
	Furthermore, based on the presumption $\frac{p_{2,k}^{\text{opt}}}{\sigma_c^2} < \frac{p_{2,k+1}^{\text{opt}}}{\sigma_c^2}$ and the fact that $\frac{1}{\zeta_{k}^2(\bm{H}_c)} \leq \frac{1}{\zeta_{k+1}^2(\bm{H}_c)}$, we have
	\begin{align}\label{eqn:app:2}
		\frac{1}{\zeta_k^2(\bm{H}_c)}+\frac{p_{2,k}^{\text{opt}}}{\sigma_c^2} < \frac{1}{\zeta_{k+1}^2(\bm{H}_c)}+\frac{p_{2,k+1}^{\text{opt}}}{\sigma_c^2} 
		\Leftrightarrow 1/(\frac{1}{\zeta_k^2(\bm{H}_c)}+\frac{p_{2,k}^{\text{opt}}}{\sigma_c^2}) > 1/(\frac{1}{\zeta_{k+1}^2(\bm{H}_c)}+\frac{p_{2,k+1}^{\text{opt}}}{\sigma_c^2}).
	\end{align}
	%	which further yields
	%	\begin{align}
	%		
	%	\end{align}
	Clearly, \eqref{eqn:app:2} contradicts (\ref{equ:assump_1}), yielding $p_{2,k}^{\text{opt}} \geq p_{2,k+1}^{\text{opt}}$. Combining the above finishes the proof.
	
	\subsection{Proof of Proposition \ref{pro:P_infinite}} \label{Proof:Pro_P_infinite}
	
	First, we consider that $r = M$. In this case, when $P \rightarrow \infty$, the optimal water-filling power allocation that maximizes the sum rate in \eqref{equ:P_1P} subject to the sum power constraint in \eqref{equ:P_1P} reduces to the equal power allocation $p_{2,k} =P/M, \forall k \in \{1,\ldots,M\}$. Such power allocation is shown to minimize the estimation CRB $\sum_{k=1}^M \frac{1}{p_{2,k}}$ in constraint \eqref{equ:P_1P}. Therefore, it follows that $p_{2,k}^{\text{opt}} =P/M, \forall k \in \{1,\ldots,M\}$. 
	
	Next, we consider that $r<M$. Based on Propositions \ref{pro:prime_dual_relationship} and \ref{lemma:Lemma_order_p}, we have $p_{2,r+1}^{\text{opt}} = \ldots = p_{2,M}^{\text{opt}}$ in this case. Therefore, without loss of optimality, we use $p_s = p_{2,r+1} = \ldots = p_{2,M}$ to denote the transmit power allocated to sensing subchannels. Accordingly, problem (P2.3) is equivalently reformulated as
	\begin{subequations}\label{equ:P_12_eq}
		\begin{align}
			\text{ }  \max_{\{p_{2,k} \ge 0\}_{k=1}^r, p_s \ge 0} & \sum_{k=1}^r \log_2 \left(1+\frac{\zeta_k^2(\bm{H}_c) p_{2,k}}{\sigma_c^2} \right) \\
			\label{equ:P_1P_CRB_eq} 
			\text { s.t. }  & \sum_{k=1}^r \frac{1}{p_{2,k}} + \frac{M-r}{p_s} \leq  \tilde{\Gamma}_2 \\
			\label{equ:P_1P_Power_eq}
			& \sum_{k=1}^r p_{2,k} + (M-r) p_s \leq P.
			%		&\bm{Q} \succeq \mv{0}
		\end{align}
	\end{subequations}
	It follows from \eqref{equ:P_1P_CRB_eq} that $p_s \ge \frac{M-r}{\tilde{\Gamma}_2}.$
	By setting $p_s =\frac{M-r}{\tilde{\Gamma}_2}$ 
	and dropping the CRB constraint (\ref{equ:P_1P_CRB_eq}), problem \eqref{equ:P_12_eq} is reduced to the following rate maximization problem:  
	%	\begin{subequations}
	\begin{align}\label{equ:P_12_reduced}
		\text{ } \max _{\{p_{2,k} \ge 0\}_{k=1}^r}  \sum_{k=1}^r \log_2 \left(1+\frac{\zeta_k^2(\bm{H}_c) p_{2,k}}{\sigma_c^2}\right),  
		%			\label{equ:P_1P_Power_eq_reduced}
		\quad
		\text { s.t. }  \sum_{k=1}^r p_{2,k} \leq P - \frac{(M-r)^2}{\tilde{\Gamma}_2},
		%		&\bm{Q} \succeq \mv{0}
	\end{align}
	%	\end{subequations}
	for which the optimal value serves as an upper bound of that by \eqref{equ:P_12_eq}. As $P - \frac{(M-r)^2}{\tilde{\Gamma}_2} \rightarrow \infty$, it is clear that the equal power allocation, given by $
	p_{2,k} = \frac{1}{r} (P-\frac{(M-r)^2}{\tilde{\Gamma}_2}), \forall k\in\{1,\ldots, r\},$
	is optimal for problem \eqref{equ:P_12_reduced}. With $P\rightarrow \infty$, it can be shown that $p_{2,k} = \frac{1}{r} (P-\frac{(M-r)^2}{\tilde{\Gamma}_2}), \forall k\in\{1,\ldots, r\}$, and $p_{2,k} = \frac{M-r}{\tilde{\Gamma}_2}, \forall k \in \{r+1, \ldots, M\}$, is feasible for problem (\ref{equ:P_12_eq}) and achieves the same value as the optimal value of problem \eqref{equ:P_12_reduced}. As a result, such power allocation is optimal for (\ref{equ:P_12_eq}) and thus (P2.3). This thus completes the proof.

	%%%%%%%%%%%%%%%%%%%%
	\bibliographystyle{ieeetran}
	%\bibliographystyle{IEEEtran}
	
	%	\bibliographystyle{IEEEbib}
	%% Put references in BibTeX format in refs.bib.
	%% since for some reference, we need the abbreviation of the conference/journal/Transaction, first check it in BibGuru, then just change the full name of the conference/journal/Transaction into its corresponding abbreviation.
	\bibliography{refsv2}
	%%%%%%%%%%%%%%%%%%%%
	%Notice that the optimal solution to this problem always uses up all the power budget $P$. Thus the constraint in (\ref{equ:P_1P_Power_eq}) is always tight. Furthermore, when $M > r$ and $P \to \infty$, given $\tilde{\Gamma}$, since the sensing subchannels do not contribute to the rate, $p_s$ will be finite such that the CRB constraint in (\ref{equ:P_1P_CRB_eq}) can be met while the remaining power will all be allocated into communication subchannels, i.e., $p_s = \frac{M-r}{\tilde{\Gamma}}$. The problem can then be reduced into 

	%The solution to this problem is well known to be equal power allocation when $P \to \infty$ \cite{telatar1999capacity}, i.e., $p_i^{\text{opt}} = \frac{1}{r}(P-\frac{(M-r)^2}{\tilde{\Gamma}}), \forall i \in \{1,...,r\}$. Combining all the results yields the proof. 

	%$p_s$ will be finite 
	
	%it is intuitively to see that given $\tilde{\Gamma}$, the dedicated sensing subchannels will be allocated with equal power to meet the CRB constraint according to Proposition \ref{lemma:Lemma_order_p}. Given fixed $\tilde{\Gamma}$, when $P \to \infty$, the $M-r$ sensing subchannels will each be allocated with $\frac{M-r}{\tilde{\Gamma}}$ so that $\sum_{i=r+1}^M \frac{1}{p_i^{\text{opt}}} = \tilde{\Gamma}$ while the rest of the power will be equally allocated to the communication subchannels to increase the data rate. (\ref{equ:P_infinite}) can thus be obtained.

\end{document}